\newif\ifarxiv   
\definecolor{mygreen}{RGB}{28,172,0} 
\definecolor{mylilas}{RGB}{170,55,241}
\DeclareFixedFont{\ttb}{T1}{txtt}{bx}{n}{12} 
\DeclareFixedFont{\ttm}{T1}{txtt}{m}{n}{12}  
\newtheorem{theorem}{Theorem}
\newtheorem{corollary}{Corollary}
\newtheorem{proposition}{Proposition}
\newtheorem{lemma}{Lemma}
\newtheorem{observation}{Observation}
\theoremstyle{definition}
\newtheorem{definition}{Definition}
\newtheorem{remark}{Remark}
\definecolor{deepblue}{rgb}{0,0,0.5}
\definecolor{deepred}{rgb}{0.6,0,0}
\definecolor{deepgreen}{rgb}{0,0.5,0}
\renewcommand\footnotemark{}
\def\x{\bm{x}}
\def\f{\bm{f}}
\def\h{\bm{h}}
\def\r{\bm{r}}
\def\rr{\bm{r}}
\def\cc{\bm{c}}
\def\q{\bm{q}}
\def\c{\bm{c}}
\def\ddb{\bm{d}}
\def\P{\mathcal{P}}
\def\G{\mathcal{G}}
\def\0{\bm{0}}
\def\ttau{\bm{\tau}}
\def\qmin{q_{\text{min}}}
\def\Gmin{\G_{\text{min}}}
\def\qnext{q_{\text{next}}}
\DeclareMathOperator*{\argmin}{arg\,min}
\title{When Efficiency meets Equity in \\ Congestion Pricing and Revenue Refunding Schemes}
\author{Devansh Jalota$^1$, Kiril Solovey$^2$, Karthik Gopalakrishnan$^1$, Stephen Zoepf$^3$, \\ Hamsa Balakrishnan$^4$,  and Marco Pavone$^1$%
\thanks{$^1$ Stanford University, USA; {\tt \{djalota, kgopalakrishnan@stanford.edu, pavone\}@stanford.edu}. This work was supported in part by the National Science Foundation, CAREER Award CMMI1454737, and by the Stanford Institute for Human-Centered AI.}%
\thanks{$^2$ Technion - Israel Institute of Technology, Israel; {\tt kirilsol@technion.ac.il}.}%
\thanks{$^3$ Lacuna, Palo Alto, USA; {\tt stephen.zoepf@lacuna.ai}.}%
\thanks{$^4$ Massachusetts Institute of Technology, USA; {\tt hamsa@mit.edu}.}%
}
\date{March, 2023}
\begin{document}

\maketitle

\begin{abstract}
Congestion pricing has long been hailed as a means to mitigate traffic congestion; however, its practical adoption has been limited due to the resulting social inequity issue, e.g., low-income users are priced out off certain roads. This issue has spurred interest in the design of equitable mechanisms that aim to refund the collected toll revenues as lump-sum transfers to users. Although revenue refunding has been extensively studied for over three decades, there has been no thorough characterization of how such schemes can be designed to simultaneously achieve system efficiency and equity objectives. In this work, we bridge this gap through the study of \emph{congestion pricing and revenue refunding} (CPRR) schemes in non-atomic congestion games. We first develop CPRR schemes, which in comparison to the untolled case, simultaneously increase system efficiency without worsening wealth inequality, while being \emph{user-favorable}: irrespective of their initial wealth or values-of-time (which may differ across users), users would experience a lower travel cost after the implementation of the proposed scheme. We then characterize the set of optimal user-favorable CPRR schemes that simultaneously maximize system efficiency and minimize wealth inequality. Finally, we provide a concrete methodology for computing optimal CPRR schemes and also highlight additional equilibrium properties of these schemes under different models of user behavior. Overall, our work demonstrates that through appropriate refunding policies we can design user-favorable CPRR schemes that maximize system efficiency while reducing wealth inequality.
\end{abstract}

\section{Introduction}

The study of road congestion pricing is central to transportation economics and traces back to 1920 with the seminal work of Pigou \cite{pigou}. Since then, the marginal cost pricing of roads, where users pay for the externalities they impose on others, has been widely accepted as a mechanism to alleviate traffic congestion. In particular, congestion pricing can be used to steer users away from the user equilibrium (UE) traffic pattern~\cite{tsekeris2009design}, which forms when users selfishly minimize their own travel times\ifarxiv~\cite{how-bad-is-selfish,roughgarden2005selfish}\else~\cite{roughgarden2005selfish}\fi, towards the system optimum (SO)~\cite{Sheffi1985}. Despite the system-wide benefits of congestion pricing, its practical adoption has been limited~\cite{SMALL2001310}. A primary driving force behind the public opposition to congestion pricing has been the resultant inequity, e.g., high income users are likely to get the most benefit with shorter travel times while low income users suffer \ifarxiv exceedingly \fi large travel times since they avoid the high toll roads. Several empirical works have noted the regressive nature of congestion pricing\ifarxiv~\cite{eliasson2001road,manville-empirical} \else~\cite{manville-empirical} \fi and a recent theoretical work by \cite{gemici_et_al:LIPIcs:2019:10270} has also characterized the influence of road tolls on wealth inequality. In particular,\ifarxiv the latter paper\fi~\cite{gemici_et_al:LIPIcs:2019:10270} developed an \textit{Inequity Theorem} for users travelling between the same origin-destination (O-D) pair, and proved that any form of road tolls would increase the wealth inequality. These rigorous critiques are complemented by opinions expressed in the popular press that congestion fees amount to ``a tax on the working class~\cite{nyt-cp}.''

The lack of support for congestion pricing due to its social inequity issues\ifarxiv~\cite{CP-Low-acceptance,wachs2005then} \else~\cite{CP-Low-acceptance} \fi has led to a growing interest in the design of congestion-pricing schemes that are more equitable\ifarxiv~\cite{WU20121273,fair-deSouza}\else~\cite{WU20121273}\fi. One approach that has been proposed to alleviate the inequity issues of congestion pricing is direct revenue redistribution, i.e., refunding the toll revenues to users in the form of lump-sum transfers. The idea of revenue refunding is analogous to that of \href{https://www.globalfueleconomy.org/transport/gfei/autotool/approaches/economic_instruments/fee_bate.asp}{feebates},
where refunds are used as a means to induce desirable behavior in society. Our work is centered on the design of congestion pricing and revenue refunding (CPRR) schemes that improve system performance without reducing wealth inequality, and benefit every user irrespective of their wealth or value-of-time. We view our work as paving the way for the design of practical, sustainable, and publicly acceptable congestion pricing schemes.

\ifarxiv
\paragraph{Contributions.}
\else
\paragraph{Contributions}
\fi

In this work, we present the first study of the wealth-inequality effects of CPRR schemes in non-atomic congestion games, with a specific focus on devising CPRR schemes that simultaneously reduce the total system cost, i.e., the sum of the travel times on all edges of the network weighted by the corresponding values-of-time of users, without increasing the level of wealth (or income) inequality. We consider the setting of heterogeneous users, with differing values-of-time and income, who seek to minimize their individual travel cost, which is a linear function of their travel times, tolls, and refunds, in the system. As in previous work \cite{gemici_et_al:LIPIcs:2019:10270}, we incorporate the income elasticity of travel time, i.e., increased travel time corresponds to lost income, to reason about the income distribution of users before and after the imposition of a CPRR scheme.

\ifarxiv

To capture the behavior of selfish users, we study the effect of the Nash equilibria induced by CPRR schemes on the level of wealth inequality in society for non-atomic congestion games. We consider two notions of equilibrium formation: (i) exogenous equilibrium, wherein users minimize a linear function of their travel time and tolls, without considering refunds, as in~\cite{GUO2010972}, and (ii) endogenous equilibrium, a new notion we introduce, wherein coalitions of users additionally consider refunds in their travel cost minimization. For these two notions of equilibria, our contributions are four-fold:

\begin{enumerate}
    \item \emph{We develop CPRR schemes that improve both system efficiency and wealth inequality, while being favorable to all users.} Under the exogenous equilibrium model, we establish the existence of a CPRR scheme that, compared with the untolled outcome, (i) is user-favorable, i.e., every user group, irrespective of their initial wealth, has a lower travel cost after the implementation of the scheme, (ii) lowers total system cost, and (iii) does not increase wealth inequality (see \ifarxiv Figure~\ref{fig:combined}\else Fig.~\ref{fig:combined}\fi). We call such CPPR schemes \emph{Pareto improving}. For the case when all travel demand is between a single O-D pair and each user's value-of-time is proportional to their income, we further show that the same CPRR scheme reduces wealth inequality relative to the ex-ante income distribution, i.e., the income profiles of users prior to making their trips. Thus, our results show that it is possible to reverse the wealth-inequality effects of congestion pricing that were established in the \textit{Inequity Theorem} in~\cite{gemici_et_al:LIPIcs:2019:10270} through appropriate revenue refunding schemes.
    
    \item \emph{We characterize the set of optimal CPRR schemes that are favorable to all users.} 
    In particular, we establish in the exogenous equilibrium setting that there exist CPRR schemes that simultaneously minimize total system cost and level of wealth inequality among all CPRR schemes that are favorable to any user (see \ifarxiv Figure~\ref{fig:combined}\else Fig.~\ref{fig:combined}\fi). To establish this claim, we (i) characterize the income distribution of users after the completion of their trip, and (ii) prove a monotonic relationship between the minimum achievable level of wealth-inequality and the total system cost. 
    \item \emph{We develop a method to compute the optimal CPRR scheme, and, in particular, for a commonly used wealth inequality measure, the discrete Gini coefficient, we show that a simple max-min allocation of the refunds among user groups with different levels of income will be optimal.}
    \item \emph{Finally, we highlight additional equilibrium properties of such schemes when users endogenize the effect of refunds on their travel decisions.} This model of user behavior wherein users minimize a linear function of not only their travel times and tolls but also their refunds is natural as users may additionally account for their received refunds when making travel decisions. In this setting, we show that the optimal CPRR scheme is robust to coalitions, i.e., any exogenous equilibrium induced by an optimal CPRR scheme is also an endogenous equilibrium with coalitions.
\end{enumerate}

\else 

To capture the behavior of selfish users, we study the effect of the Nash equilibria induced by CPRR schemes on the level of wealth inequality in society for non-atomic congestion games. In particular, we begin with the study of exogenous equilibria, which is the standard model of Nash equilibrium with heterogeneous users~\cite{GUO2010972}, wherein users minimize a linear function of their travel time and tolls, without considering refunds. In this setting, in Section~\ref{sec:existence}, we establish the existence of a \emph{Pareto-improving} CPRR scheme that, compared with the untolled outcome, (i) is user-favorable, i.e., every user group, irrespective of their initial wealth, has a lower travel cost after the implementation of the scheme, (ii) lowers total system cost, and (iii) decreases wealth inequality (see \ifarxiv Figure~\ref{fig:combined}\else Fig.~\ref{fig:combined}\fi). For the case when all travel demand is between a single O-D pair and each user's value-of-time is proportional to their income, we further show that the same CPRR scheme does not increase wealth inequality relative to the ex-ante income distribution, i.e., the income profiles of users prior to making their trips. Thus, our results show that it is possible to reverse the wealth-inequality effects of congestion pricing that were established in the \textit{Inequity Theorem} in~\cite{gemici_et_al:LIPIcs:2019:10270} through appropriate revenue refunding schemes.

Next, we characterize the set of optimal CPRR schemes that are favorable to all users in the exogenous equilibrium setting. In particular, in Section~\ref{sec:optimality}, we establish the existence of CPRR schemes that simultaneously minimize total system cost and level of wealth inequality among all CPRR schemes that are favorable to any user (see \ifarxiv Figure~\ref{fig:combined}\else Fig.~\ref{fig:combined}\fi). Furthermore, we develop a method to compute the optimal CPRR scheme in Sections~\ref{sec:optTolls} and~\ref{sec:opt-scheme-kkt}. In particular, for a commonly used wealth inequality measure, the discrete Gini coefficient, we show that a simple max-min allocation of the refunds among user groups \ifarxiv with different levels of income will be optimal. \else with different incomes is optimal. \fi

Finally, in Section~\ref{sec:endogenous}, we consider the endogenous equilibrium, a new notion we introduce, wherein users additionally consider refunds in their travel cost minimization. In this setting, we show that the optimal CPRR scheme is robust to coalitions, i.e., any exogenous equilibrium induced by an optimal CPRR scheme is also an endogenous equilibrium with coalitions.

\fi

We remark that in line with prior literature on traffic routing with heterogeneous groups of users~\cite{gemici_et_al:LIPIcs:2019:10270,heterogeneous-pricing-roughgarden,multicommodity-extension}, in this work, we assume a complete information setting wherein the different attributes (i.e., the income, value-of-time, and O-D pair) of the user groups are known and can be used in the design of CPRR schemes. To this end, our results can be interpreted as the theoretical limits of what is achievable in terms of the efficiency and equity outcomes given perfect state information in the traffic routing context. However, we remark that even though we consider the complete information setting wherein the tolls and refunds are computed in a centralized manner, the developed optimal CPRR schemes induce selfish users to distributedly optimize their individual objectives and collectively enforce a traffic pattern
that minimizes both total system cost and the level of wealth inequality. Furthermore, since our results contribute to the vast literature on designing intervention and control schemes under perfect state information~\cite{pan1993h, sofronova2020traffic, sabag2021regret}, our proposed approach to designing CPRR schemes serve as more of a design module rather than an end-to-end solution to the equity problem associated with congestion pricing. We do note, however, that there are several methods to estimate user attributes, e.g., their values of time or preferences, that have been explored in the empirical literature~\cite{arora2020private,Buchholzrdab050,cohen2016using}, which can be used to inform the inputs that are necessary for the design of optimal CPRR schemes. 

\begin{figure*}[!ht]
      \centering
      \includegraphics[width=0.75\linewidth]{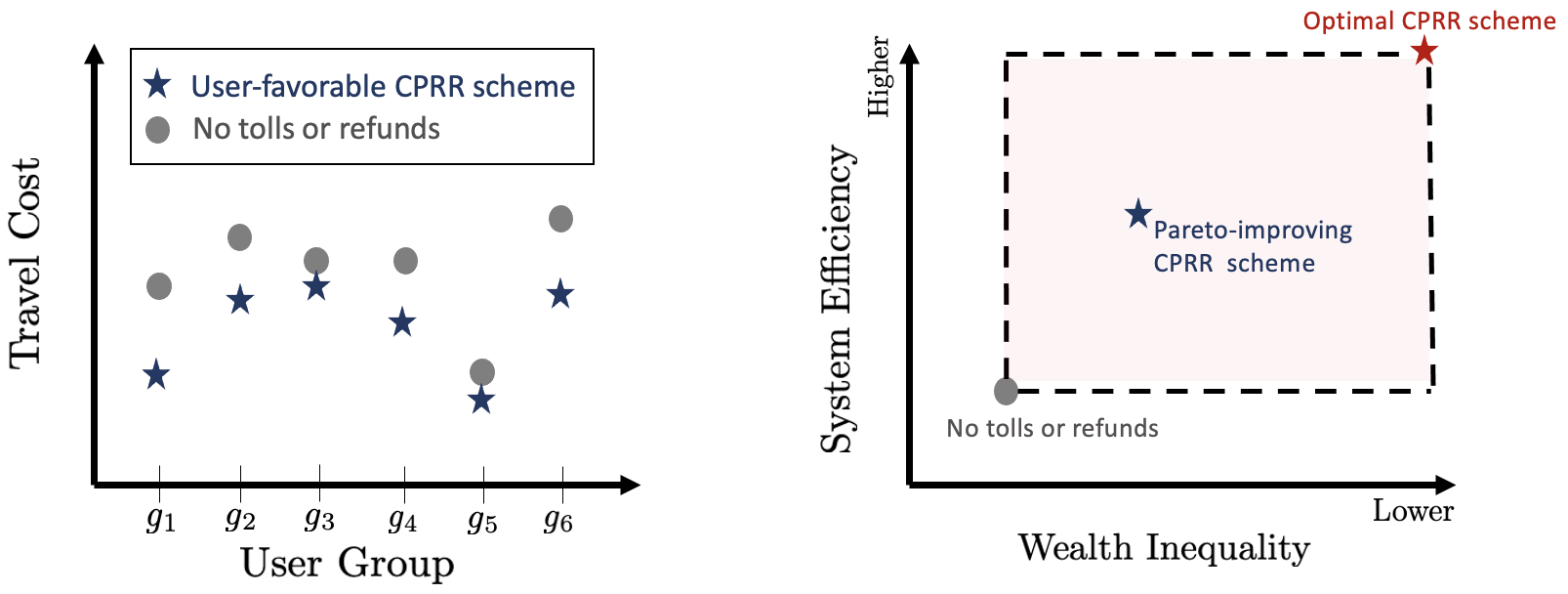}
      \caption{{\small \sf \ifarxiv We develop congestion pricing with revenue refunding (CPRR) schemes that improve both system efficiency and wealth inequality, while being \emph{favorable} to all users. On the left we illustrate the concept of a user-favorable CPRR scheme, where every user group, irrespective of their initial income, has a lower travel cost, which includes the cost of travel time, tolls and refunds, after the implementation of the scheme compared with the untolled setting. Note that users in a given group have the same value-of-time, income, and O-D pair. On the right we illustrate the concept of \emph{Pareto-improving} user-favorable CPRR schemes, which increase system efficiency, and decrease wealth inequality, compared with the untolled case (the rectangular region, whose opposite corners are the optimal solution and no tolls or refunds solution, represents the set of all Pareto-improving outcomes of user-favorable schemes). We also illustrate an \emph{optimal} user-favorable CPRR scheme that simultaneously achieves the highest system efficiency and the lowest wealth inequality and establish the existence of such a scheme in this work. \else Depiction of \emph{user-favorable}, \emph{Pareto-improving}, and \emph{optimal} congestion pricing and revenue refunding (CPRR) schemes. \fi
      \label{fig:combined}}} 
   \end{figure*}

Overall, our work demonstrates that if we utilize the collected toll revenues to devise appropriate refunding policies then we can achieve system efficiency without increasing inequality. Further, in doing so, we ensure that our designed schemes are publicly acceptable since we guarantee that each user is at least as well off as before the introduction of the CPRR scheme. As a result, we view our work as a significant step in shifting the discussion around congestion pricing from one that has focused on the societal inequity impacts of road tolls to one that centers around \emph{how} to best preserve equity through the distribution of toll revenues. 

\ifarxiv
A preliminary version of this work appeared in~\cite{jalota-acm-eaamo}, and our work builds on that paper by developing a method to compute optimal CPRR schemes and analyzing additional equilibrium properties of these schemes. In particular, we present (i) methodological extensions, including an example to show how the optimal revenue refunding scheme can be derived for the discrete Gini coefficient wealth inequality measure, and (ii) a study of the wealth inequality effects of CPRR schemes under an endogenous equilibrium setting, where users also take refunds into account in their travel cost minimization.
\else
A preliminary version of this work appeared in~\cite{jalota-acm-eaamo}, and our work builds on that paper by developing methodological extensions (e.g., Corollary~\ref{cor:rev-refund-single-od} in Section~\ref{subsec:ufpiCPRR}), presenting a method to compute optimal CPRR schemes (Sections~\ref{sec:optTolls} and~\ref{sec:opt-scheme-kkt}), and analyzing additional equilibrium properties of these schemes (Section~\ref{sec:endogenous}).
\fi

\ifarxiv
\paragraph{Organization.}
This paper is organized as follows. Section~\ref{sec:related-lit} reviews related literature. We then present a model of traffic flow as well as metrics to evaluate the inequality of the wealth distribution and the efficiency of a traffic assignment in Section~\ref{sec:model}. We prove the existence of Pareto improving and optimal CPRR schemes for the exogenous setting in Sections~\ref{sec:existence}, and~\ref{sec:optimality}, respectively. Next, we present a method to compute optimal CPRR schemes and show that any exogenous equilibrium of an optimal CPRR scheme is also an endogenous equilibrium with coalitions in Section~\ref{sec:opt-CPRR}. Finally, we discuss how our work fits into the broader conversation around equitable transportation in Section~\ref{sec:discussion}. We conclude the paper and provide directions for future work in Section~\ref{sec:conclusion}.
\fi

\ifarxiv
\else 
\vspace{-5pt}
\fi

\section{Related Work} \label{sec:related-lit}

The design of mechanisms that satisfy both system efficiency and user fairness desiderata has been a centerpiece of algorithm design for a range of applications. For instance, in resource allocation settings, ~\cite{Bertsimas-price-of-fairness} quantified the loss in efficiency 
when the allocation outcomes are required to satisfy certain fairness criteria. In machine learning classification tasks,~\cite{dwork-fairness-through-awareness} studied group-based fairness notions to prevent discrimination against individuals belonging to disadvantaged groups. In the context of traffic routing,~\cite{so-routing-seminal} introduced a fairness-constrained traffic-assignment problem to achieve a balance between the total travel time of a traffic assignment and its level of fairness. Here, fairness is measured through the maximum ratio between the travel times of users travelling between the same O-D pair\ifarxiv~\cite{Roughgarden2002HowUI}. \else.\fi \ifarxiv Subsequent work on fair traffic routing has focused on developing algorithms to solve the fairness constrained traffic assignment problem\ifarxiv~\cite{ANGELELLI20161,ANGELELLI2018234,ANGELELLI2020} \else~\cite{ANGELELLI20161} \fi and road tolling schemes to enforce the fairness constrained flows in practice~\cite{Jalota.ea.CDC21.extended}. \fi

Resolving the efficiency and equity trade-off is particularly important for allocation mechanisms involving monetary transfers given \ifarxiv the potential negative welfare impacts of such mechanisms on low-income groups. \else their impact on low-income groups. \fi Although achieving system efficiency involves allocating goods to users with the highest willingness to pay, in many settings, e.g., cancer treatment, the needs of users are not well expressed by their willingness to pay~\cite{weitzman-seminal}. Since Weitzman's seminal work on accounting for agent's needs in allocation decisions~\cite{weitzman-seminal}, there has been a rich line of work on taking into account redistributive considerations\ifarxiv\else~\cite{RAM-Akbarpour} \fi in resource allocation problems. For instance,~\cite{PPP-Besley-coate} analyzed the free provision of a low-quality public good to low-income users by taxing individuals that consume the same good of a higher quality in the private market. More recently, \cite{CONDORELLI2013582} studied the allocation of \ifarxiv identical \fi objects to agents with the objective of maximizing agent's values that may be different from their willingness to pay. \ifarxiv This analysis was then extended to the allocation of heterogeneous objects to a continuum of agents by \cite{RAM-Akbarpour}. \fi

In the context of congestion pricing, revenue redistribution has long been considered as a means to alleviate the inequity issues of congestion pricing~\cite{small1992using}. Several revenue redistribution strategies have been proposed in the literature, such as the lump-sum transfer of toll revenues to users~\cite{goodwin1989rule}. In the setting of Vickrey's bottleneck congestion model~\cite{vickrey1969congestion} (a benchmark representation of peak-period traffic congestion on a single lane),~\cite{arnott1994} investigated how a uniform lump-sum payment of toll revenues can be used to make heterogeneous users better off than prior to the implementation of the tolls and refunds. \ifarxiv \cite{DAGANZO1995139} developed a novel strategy in the bottleneck congestion model, wherein only a fraction of the users are tolled while the remaining users are exempt from tolls. To extend the application of revenue redistribution schemes to a two parallel-routes setting,~\cite{ADLER2001447} designed a mechanism wherein the revenue collected from users on the more desirable route was directly transferred to users travelling on the less desirable route. \fi In more general networks with a single O-D pair,~\cite{eliasson2001road} established the existence of a tolling mechanism with uniform revenue refunds that reduced the travel cost for each user while decreasing the total system travel time as compared to before the tolling reform. The extension of this result to general road networks with a multiple O-D pair travel demand and heterogeneous users was investigated by~\cite{GUO2010972}. While~\cite{GUO2010972} characterize conditions for the CPRR scheme to be user-favorable, our work studies the influence of such schemes by characterizing their influence on wealth inequality. \ifarxiv In particular, we design CPRR schemes that simultaneously reduce total system cost and do not increase the level of wealth inequality relative to that under the untolled user equilibrium outcome. \fi

\section{Preliminaries} \label{sec:model}
In this section, we introduce basic definitions and concepts regarding traffic flow, congestion pricing and revenue refunding (CPRR) schemes, and metrics for system efficiency and wealth-inequality.

\subsection{Elements of Traffic Flow}
We model the road network as a directed graph $G = (V, E)$, with the vertex and edge sets denoted by $V$ and $E$, respectively. Each edge $e \in E$ has a flow-dependent travel-time function $t_e: \mathbb{R}_{\geq 0} \rightarrow \mathbb{R}_{\geq 0}$, which maps $x_e$, the traffic flow rate on edge $e$, to the travel time $t_e(x_e)$. The flow rate $x_e$ on edge $e$ represents the average number of vehicles traversing through that edge during a fixed time interval (e.g., over an hour). As is standard in the literature, we assume that the function $t_e$, for each $e \in E$, is differentiable, convex and monotonically increasing. While our model of the edge travel time functions assumes that the edge travel times have infinite capacities, as is common in the non-atomic congestion game and transportation literature~\cite{gemici_et_al:LIPIcs:2019:10270,YANG20041}, we note that our model can be extended to the setting with hard capacity constraints for appropriate choices of the travel time functions that grow very steeply once the road capacities have been exceeded.

Users make trips in the road network and belong to a discrete set of user groups based on their (i) value-of-time, (ii) income, and (iii) O-D pair. Let $\G$ denote the set of all user groups, and let $v_g>0$, $q_g>0$ and $w_g = (s_g, u_g)$ denote the value-of-time, income, and O-D pair represented by an origin $s_g$ and destination $u_g$, respectively, for each user in group $g \in \G$. Each user belonging to a group $g$ makes a trip on a path, which is a sequence of  directed edges beginning at $s_g$ and ending at $u_g$ (without visiting any node more than once). The set of all possible paths between OD-pair $w_g$ is denoted as $\P_{g}$ and the travel demand $d_g$ of user group $g$ represents the total flow to be routed through paths in $\P_{g}$. 

A path flow pattern $\f = \{f_{P, g}: g \in \G, P \in \P_{g} \}$ specifies for each user group $g$, the amount of flow $f_{P,g}\geq 0$ routed on a path $P \in \P_{g}$. In particular, a flow $\f$ must satisfy the user demand, i.e., $\sum_{P \in \P_{g}} f_{P,g} = d_g, \text{ for all } g \in \G$. We denote the set of all non-negative flows that satisfy this constraint as~$\Omega$.

Each path flow $\f = \{f_{P, g}: g \in \G, P \in \P_{g} \}$ is associated with a corresponding edge flow $\x= \{x_e \}_{e \in E}$ and group specific edge flows $\x^g = \{x_e^g \}_{e \in E}$ for all $g \in \G$, where $x_e^{g}$ represents the flow of users in group $g$ on edge $e$. The relationship between the path and edge flows is given by
\ifarxiv
\begin{align}
    &\sum_{P \in \mathcal{P}_{g}: e \in P} f_{P,g} = x_e^{g}, \text{ for all } e \in E, \text{ and } g \in \G \label{eq:edge-constraint-by-class} \\
    &\sum_{g \in \mathcal{G}} x_e^{g} = x_e, \text{ for all } e \in E, \label{eq:edge-constraint}
\end{align}
where $P \in \mathcal{P}_{g}: e \in P$ denotes the set of all paths $P \in \P_g$ that include the edge $e$. 
\else 
$\sum_{P \in \mathcal{P}_{g}: e \in P} f_{P,g} = x_e^{g}, \text{ for all } e \in E, g \in \G$ and $\sum_{g \in \mathcal{G}} x_e^{g} = x_e, \text{ for all } e \in E$. Here $P \in \mathcal{P}_{g}: e \in P$ denotes the set of paths $P \in \P_g$ that include edge $e$. 
\fi

\subsection{CPRR Schemes}

A congestion pricing and revenue refunding (CPRR) scheme is defined by a tuple $(\ttau, \rr)$, where (i) $\ttau = \{\tau_e: e \in E\}$ is a vector of edge prices (or tolls), and (ii) $\rr = \{r_g: g \in \G \}$ is a vector of group-specific revenue refunds, where each user in group $g$ receives a lump-sum transfer of $r_g$. In other words, everybody pays the same toll for using an edge independent of their group, and all users with the same income, value-of-time and O-D pair get the same refund, irrespective of the actual path they take between the O-D pair $w_g$. We note that the vector of refunds $\r$, in general, need not be non-negative and can take on any real values. Under the CPRR scheme $(\ttau, \rr)$ and a vector of edge flows $\x$, the total value of tolls collected is given by $\Pi \coloneqq \sum_{e \in E} \tau_e x_e$. In this work we consider CPPR schemes such that the sum of the revenue refunds equals the sum of the revenue collected from the edge tolls, i.e., $\sum_{g \in \G}  r_g d_{g} = \Pi$. In addition, we consider revenue refunding schemes that depend only on the groups $\G$ and the total revenue $\Pi$ induced by a flow $\f$, but not on the specific paths taken by users under $\f$. We leave the study of more complex path-dependent refunding schemes for future work (see Section~\ref{sec:conclusion}).

\ifarxiv
The total travel cost incurred by the user consists of two components. The first one is a linear function of their travel time and tolls, which is a commonly-used modelling approach~\cite{heterogeneous-pricing-roughgarden,multicommodity-extension}. In addition, we have a component which reflects the refund received. The overall model that we use here, which is formally defined below, has been previously considered in the literature~\cite{GUO2010972}.  
\else 
The total travel cost incurred by the user consists of two components: (i) a linear function of their travel time and tolls, which is a commonly-used modelling approach~\cite{heterogeneous-pricing-roughgarden,multicommodity-extension}, and (ii) the refund received. The overall model we use, which is formally defined below, has been previously considered in the literature~\cite{GUO2010972}.  
\fi

\begin{definition} [User Travel Cost] \label{def:user-travel-cost}
Consider a CPRR scheme $(\ttau, \rr)$ and a flow pattern $\f$ with edge flow $\x$. Then, the total cost incurred by a user belonging to a group $g \in \G$ when traversing a path $P \in \P_{g}$ with $f_{P,g}>0$ is given by 
\ifarxiv
\begin{align}
    \mu_P^g(\f, \ttau, \rr) \coloneqq \sum_{e \in P} \left( v_g t_e(x_e) + \tau_e \right) - r_g.
\end{align}
\else
$\mu_P^g(\f, \ttau, \rr) \coloneqq \sum_{e \in P} \left( v_g t_e(x_e) + \tau_e \right) - r_g.$
\fi
\end{definition}
With slight abuse of notation, we will denote $\mu_P^g(\f, \ttau, \0)$ as a travel cost that does not include refunds, and $\mu_P^g(\f, \0, \0)$ as a travel cost that does not account for tolls or refunds, where $\0$ is a vector of zeros. 

\subsection{System Efficiency and Wealth Inequality Metrics} \label{sec:metrics}
We evaluate the quality of a CPRR scheme using two metrics: (i) system efficiency, which is measured through the total system cost, and (ii) wealth inequality.

\ifarxiv
\paragraph{Total System Cost:} 
\else 
\paragraph{Total System Cost} 
\fi
For any feasible path flow $\f$ with corresponding edge flows $\x$ and group specific edge flows $\x^g$, the total system cost $C(\f)$, is the sum of travel times weighted by the users' values-of-time  across all edges \ifarxiv of the network \cite{GUO2010972,heterogeneous-pricing-roughgarden,multicommodity-extension}, \else \cite{GUO2010972} \fi i.e., \ifarxiv \[C(\f) := \sum_{e \in E} \sum_{g \in \G} v_g x_e^{g} t_e(x_e).\] \else $C(\f) := \sum_{e \in E} \sum_{g \in \G} v_g x_e^{g} t_e(x_e).$ \fi We denote by $C^*:=\min_{\f\in \Omega}C(\f)$ the widely studied cost-based system optimum.

\ifarxiv
\paragraph{Wealth Inequality:} 
\else 
\paragraph{Wealth Inequality} 
\fi
We measure the impact of a CPPR scheme on wealth inequality in the following manner. For a profile of incomes $\q = \{q_g: g \in \G \}$, we let a function $W: \mathbb{R}^{|\G|}_{\geq 0} \rightarrow \mathbb{R}_{\geq 0}$ measure the level of wealth inequality of society. We say that an income distribution $\Tilde{\q}$ has a lower level of wealth inequality than $\q$ if and only if $W(\Tilde{\q}) \leq W(\q)$. 

\ifarxiv
In this work, we assume that the wealth-inequality measure $W(\cdot)$ satisfies the following properties:

\begin{enumerate}
    \item Scale Independence: The wealth-inequality measure remains unchanged after rescaling incomes by the same positive constant, i.e., $W(\lambda \q) = W(\q)$ for any $\lambda > 0$.
    \item Regressive Taxes Increase Inequality: The wealth-inequality measure increases if the incomes of users are scaled by constants that increase as the income increases. That is, for two income profiles $\q$ and $\Tilde{\q}$ with $\Tilde{q}_g = \delta_g q_g$, where $0<\delta_g \leq \delta_{g'}$ if $q_g \leq q_{g'}$ for any two groups $g, g'$, then $W(\Tilde{\q}) \geq W(\q)$.
    \item Progressive Taxes Decrease Inequality: The wealth-inequality measure decreases if the incomes of users are scaled by constants that decrease as the income increases. That is, for two income profiles $\q$ and $\Tilde{\q}$ with $\Tilde{q}_g = \delta_g q_g$, where $0<\delta_g \leq \delta_{g'}$ if $q_{g'} \leq q_{g}$ for any two groups $g, g'$, then $W(\Tilde{\q}) \leq W(\q)$.
\end{enumerate}
The above properties are well defined for any wealth inequality distribution when the incomes of all users are strictly positive, which we assume in this work. We note that the above properties are fairly natural \cite{gemici_et_al:LIPIcs:2019:10270,dabla2015causes} and hold for commonly used wealth-inequality measures, such as the discrete Gini coefficient, which we elucidate in detail in Section~\ref{sec:optimality}. Furthermore, we note that the above properties jointly imply the following important property of the wealth-inequality measure $W$:

\noindent \textbf{Constant Income Transfer Property}: If the initial income distribution is $\q$ and each user is transferred a non-negative (non-positive) amount of money $\lambda$ ($-\lambda$) where $0 \leq \lambda < \min_{g \in \G} q_{g}$, then the wealth inequality cannot increase (decrease). That is, $W(\q + \lambda \mathbf{1}) \leq W(\q)$ and $W(\q - \lambda \mathbf{1}) \geq W(\q)$, where $\mathbf{1}$ is a vector of ones. 

We defer a proof of how the constant income transfer property follows from the regressive and progressive tax properties to \ifarxiv Appendix~\ref{sec:const-inc-transfer}\else the extended version of our work~\cite{jalota-cprr}\fi.

\else

In this work, we assume that the wealth-inequality measure $W(\cdot)$ satisfies the following properties:
\begin{enumerate}
    \item Scale Independence: The wealth-inequality measure remains unchanged after rescaling incomes by the same positive constant, i.e., $W(\lambda \q) = W(\q)$ for any $\lambda > 0$.
    \item Constant Income Transfer Property: If the initial income distribution is $\q$ and each user is transferred a non-negative (non-positive) amount of money $\lambda$ ($-\lambda$) where $0 \leq \lambda < \min_{g \in \G} q_{g}$, then the wealth inequality cannot increase (decrease). That is, $W(\q + \lambda \mathbf{1}) \leq W(\q)$ and $W(\q - \lambda \mathbf{1}) \geq W(\q)$, where $\mathbf{1}$ is a vector of ones. 
\end{enumerate}
The above properties are well defined for any wealth inequality distribution when the incomes of all users are strictly positive, which we assume in this work. We note that the above properties, including scale independence~\cite{SITTHIYOT2020123556,bourguignon-1979}, are fairly natural \ifarxiv\cite{gemici_et_al:LIPIcs:2019:10270,dabla2015causes} \else \cite{gemici_et_al:LIPIcs:2019:10270} \fi and hold for commonly used wealth-inequality measures, such as the discrete Gini coefficient, which we elucidate in detail in Section~\ref{sec:opt-scheme-kkt}. Furthermore, we note that the constant income transfer property is a direct consequence of the fact that regressive (progressive) taxes increase (decrease) wealth inequality, as is elucidated in the extended version of this work~\cite{jalota-cprr}.

\fi

When using the wealth inequality measure $W$, we are interested in understanding the influence of a flow~$\f$ for a given CPRR scheme $(\ttau, \rr)$ on the income distribution of users. To this end, we define the income profile of users before making their trip as the \emph{ex-ante income distribution} $\q^0>\mathbf{0}$ and that after making their trip as the \emph{ex-post income distribution}, which is defined as follows.

\begin{definition} [Ex-Post Income Distribution] \label{def:ex-post}
For a given CPRR scheme $(\ttau, \rr)$ and an equilibrium flow $\f$, the induced ex-post income distribution of users is denoted by $\q(\f, \ttau, \rr)$ and is defined as follows. For a given group $g$, we have that \ifarxiv \[q_g(\f, \ttau, \rr):=q_g^0-\beta\mu^g(\f,\ttau,\rr),\] \else $q_g(\f, \ttau, \rr):=q_g^0-\beta\mu^g(\f,\ttau,\rr)$, \fi 
where $\q^0$ is the ex-ante income distribution and $\beta$ is a small constant such that the ex-post income of users is strictly positive and represents the relative importance of the congestion game under consideration to an individual's well-being~\cite{gemici_et_al:LIPIcs:2019:10270}. 
\end{definition}
We reiterate that the small constant $\beta$ does not depend on the type of trip being made or the importance of that trip to the user but solely reflects the importance of the congestion game under consideration to an individual’s well-being, as in~\cite{gemici_et_al:LIPIcs:2019:10270}. The positive income assumption ensures that the above defined wealth inequality properties (including scale independence) hold, which would not be the case if users have negative incomes.

We note that in this paper we consider time-invariant travel demand that is fixed for all user groups and assume fractional flows, both of which are standard assumptions in the \ifarxiv traffic routing~\cite{Patriksson15} and non-atomic congestion games~\cite{Roughgarden2002HowUI} literature. \else literature~\cite{gemici_et_al:LIPIcs:2019:10270,GUO2010972}. \fi In line with prior work by \cite{GUO2010972}, we assume that users are refunded based on their income, value-of-time, and O-D pair. Furthermore, similar to much of the prior literature in traffic routing with heterogeneous users~\cite{gemici_et_al:LIPIcs:2019:10270,heterogeneous-pricing-roughgarden,multicommodity-extension}, we assume that the different attributes (i.e., the income, value-of-time, and O-D pair) of the user groups are known, and can be used in the design of CPRR schemes. In practice, such centralized information on user attributes may not be known and we defer the problem of dealing with incomplete information settings to \ifarxiv future work (see Section~\ref{sec:discussion}). \else future work.\fi

\ifarxiv
\else 
\vspace{-5pt}
\fi

\section{Pareto Improving CPRR Schemes} \label{sec:existence}

The social inequity issue surrounding the regressive nature of congestion pricing has been documented in several empirical and theoretical works, while also having spurred political opposition to its implementation in practice. In this section, we show that if the tolls collected from congestion pricing are refunded to users in an appropriate way then the wealth inequality effects of congestion pricing can be reversed. Throughout this section and the next we assume 
that user behavior is characterized through the \emph{exogenous equilibrium} model wherein users minimize a linear function of their travel time and tolls, without considering refunds.

After formally defining exogenous equilibrium below, we develop a CPRR scheme that simultaneously decreases the total system cost of all users while not increasing the level of wealth inequality relative to the untolled outcome, a property which we refer to as \emph{Pareto improving}. Moreover, when designing the scheme, we ensure that it is politically acceptable \ifarxiv for implementation \fi by guaranteeing that each user is at least as well off in terms of the travel cost $\mu^g$, which includes travel time, tolls, and refunds, under the CPRR scheme than that without the implementation of congestion pricing or refunds.

Next, we consider the important special case \ifarxiv of travel demand \fi when users travel between the same O-D pair, and have values-of-time proportional to their income. In this setting, we establish the existence of a Pareto improving CPRR scheme that results in an ex-post income distribution that has a lower wealth inequality as compared to that of the ex-ante income distribution. Note that this result is stronger than the more general case with \ifarxiv multiple O-D pairs considered above, \else multiple O-D pairs, \fi as the wealth-inequality measure of the ex-ante income distribution is lower than that of the ex-post income distribution for the untolled case.

\ifarxiv
\else 
\vspace{-5pt}
\fi

\subsection{Exogenous Equilibrium}
To capture the strategic behavior of users, we present below the standard model of Nash equilibrium with heterogeneous users, which we call exogenous equilibrium. 
The exogenous setting is commonly studied in the context of non-atomic congestion games without~\cite{heterogeneous-pricing-roughgarden,multicommodity-extension} or with refunds~\cite{GUO2010972}. As the name suggests, in an exogenous equilibrium the revenue refunds are assumed to be \emph{exogenous} and do not influence the behavior and route choice of users in the transportation network. That is,
users minimize a linear function of their travel time and tolls, without considering refunds. 

We note that such a model of user behavior can be quite realistic in certain settings, especially since accounting for refunds when making route choices may often be too complex and involve quite sophisticated decision making on the part of users. Furthermore, for users to reason about how their path choice will influence their refund, they must know the refunding policy, which may typically not be known in practice, thereby making the notion of an exogenous equilibrium more appropriate in such settings. We do consider the more sophisticated \emph{endogenous} setting in Section~\ref{sec:endogenous} and demonstrate that our results obtained in the exogenous setting also extend to endogenous setting as well.

The following definition formalizes the notion of an exogenous equilibrium, which only depends on the \ifarxiv congestion pricing \else toll \fi component $\ttau$ of a CPRR scheme $(\ttau,\rr)$. 

\begin{definition} [Exogenous Equilibrium] \label{def:exo-eq}
For a given congestion-pricing scheme  $\ttau$, a path flow pattern $\f$ is an exogenous equilibrium if for each group $g\in \G$ it holds that $f_{P,g}>0$ for some path $P \in \P_{g}$ if and only if
\begin{align*}
    \mu_{P}^g(\f, \ttau,\0) \leq \mu_{Q}^g(\f, \ttau, \0), \text{ for all } Q \in \P_{g}.
\end{align*}
\ifarxiv
In such a case, we say that $\f$ is an exogenous $\ttau$-equilibrium.
\else
We say that such an $\f$ is an exogenous $\ttau$-equilibrium.
\fi
\end{definition}

We reiterate that the above notion of an exogenous equilibrium is the standard Nash equilibrium concept used in non-atomic congestion games and follows since users are infinitesimal, unlike equilibrium concepts in atomic congestion games or in the presence of coalitions (see Definition~\ref{def:endo-eq2}). In this work, we refer to this equilibrium concept as \emph{exogenous} to explicitly distinguish it from the \emph{endogenous} setting when users also account for refunds when making travel decisions. A key property of any exogenous $\ttau$-equilibrium $\f$ is that all users within a given group $g\in \G$ incur the same travel cost without refunds, irrespective of the path on which they travel. Hence, we drop the path dependence in the notation and denote the user travel cost without refunds for any user in group $g$ at flow $\f$ as $\mu^g(\f, \ttau, \0)$. Additionally, since the refund $r_g$ is the same for all users in group $g$, the travel cost with refunds is denoted as $\mu^g(\f, \ttau, \rr)$.

Another useful property of an exogenous equilibrium is that for a given congestion-pricing scheme $\ttau$, the resulting total system cost, user travel cost, and ex-post income distribution are invariant under the different $\ttau$-equilibria (see \ifarxiv Appendix~\ref{apdx:kkt-multiclass-ue} \else the extended version of our paper~\cite{jalota-cprr} \fi for a discussion). That is for any two $\ttau$-equilibria $\f$ and $\f'$ it holds that $C(\f)=C(\f')$, $\mu^g(\f,\ttau,\0)=\mu^g(\f',\ttau,\0)$, and $\q(\f,\ttau,\rr)=\q(\f',\ttau,\rr)$. Thus, we will use the simplified notation $C_{\ttau}:=C(\f)$, $\mu^g(\ttau,\rr):=\mu^g(\f,\ttau,\rr)$, and $\q(\ttau,\rr):=\q(\f,\ttau,\rr)$ for any exogenous $\ttau$-equilibrium $\f$, when considering the exogenous equilibrium model. In this context, note that $C_{\0}$ corresponds to the untolled total system cost, and this quantity is identical for both the exogenous and endogenous equilibrium (we consider the latter in Section~\ref{sec:endogenous}).   

\ifarxiv
\else 
\vspace{-5pt}
\fi

\subsection{User-Favorable Pareto Improving CPRR Schemes} \label{subsec:ufpiCPRR}
To ensure that the CPRR schemes we develop are politically acceptable, we consider schemes, as in\ifarxiv~\cite{GUO2010972,lawphongpanich2010solving}\else~\cite{GUO2010972}\fi, that result in equilibrium outcomes wherein each individual user is at least as well off as compared to that under the untolled user equilibrium outcome, a property we refer to as user-favorable (see \ifarxiv Figure~\ref{fig:combined}\else Fig.~\ref{fig:combined}\fi). We note that the definition below readily extends to the setting of endogenous equilibria as well.

\begin{definition} [User-Favorable CPRR Schemes] \label{def:pareto-imp}
A CPRR scheme $(\ttau, \rr)$ is user-favorable if for any (exogenous) $\ttau$-equilibrium the travel cost of any user group $g$ does not increase with respect to any untolled $\0$-equilibrium $\f^0$, i.e., $\mu^g(\ttau, \rr) \leq \mu^g(\0,\0)$.
\end{definition}

We mention that the the above definition can readily be extended to incorporate the notion of a user-favorable CPRR scheme relative to any status-quo traffic equilibrium pattern, which is not necessarily equal to the untolled case, e.g., the traffic pattern in a city that has already implemented some form of congestion pricing. Thus, considering the untolled user equilibrium $\f^0$ in the above definition is without loss of generality. 

We now present the main result of this section. In particular, we establish that any pricing scheme $\ttau$ that improves the system efficiency compared to the untolled case can be paired with a revenue refunding scheme $\rr$ such that the wealth inequality relative to the ex-post income distribution under the untolled setting is not increased, i.e., the CPRR scheme $(\ttau,\rr)$ is Pareto improving (see \ifarxiv Figure~\ref{fig:combined}\else Fig.~\ref{fig:combined}\fi) and user-favorable.  
Note that designing CPRR schemes that achieve a lower level of wealth inequality and total system cost as compared to that of the untolled user equilibrium outcome is desirable since this implies that the CPRR scheme improves upon both the system efficiency and fairness metrics relative to the status-quo traffic equilibrium pattern. \ifarxiv We discuss other useful aspects of this result in Section~\ref{sec:discussion}. \fi

\begin{proposition} [Existence of Pareto Improving CPRR Scheme] \label{prop:rev-refund-decreases-ineq}
Let $\ttau$ be a congestion-pricing scheme such that $C_{\ttau} \leq C_{\0}$, where $C_{\0}$ is the untolled total system cost. Then, there exists a refund scheme $\rr$ such that $(\ttau, \rr)$ is user-favorable and does not increase wealth inequality, i.e., $W(\q(\ttau, \rr)) \leq W(\q(\0,\0))$. That is, the scheme $(\ttau, \rr)$ is \emph{Pareto improving}.
\end{proposition}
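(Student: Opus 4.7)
The strategy is to design a refund vector of the form $r_g := \Delta_g + \alpha$, where $\Delta_g := \mu^g(\ttau,\0) - \mu^g(\0,\0)$ measures how much group $g$'s pre-refund travel cost has grown relative to the untolled case, and $\alpha$ is a common nonnegative constant chosen so that the refunds exactly exhaust the collected revenue. The group-specific piece $\Delta_g$ neutralizes the toll burden on each group (making the scheme user-favorable), while the uniform lump-sum $\alpha$ will produce a constant positive shift of every group's ex-post income, which by the constant income transfer property cannot raise the wealth-inequality measure.

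The key algebraic ingredient is the aggregate identity
\[\sum_{g \in \G} d_g \, \mu^g(\ttau,\0) \;=\; C_\ttau + \Pi,\]
where $\Pi = \sum_{e \in E} \tau_e x_e$ denotes the total collected revenue at any exogenous $\ttau$-equilibrium. I would derive it by writing $\sum_g d_g \mu^g(\ttau,\0)$ as $\sum_g \sum_{P \in \P_g} f_{P,g}\, \mu_P^g(\f,\ttau,\0)$ (legitimate because at a $\ttau$-equilibrium all used paths within a group share the same cost), expanding $\mu_P^g$ via Definition~\ref{def:user-travel-cost}, and collecting terms using $\sum_{P \ni e} f_{P,g} = x_e^g$ and $\sum_g x_e^g = x_e$. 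Applied also to the untolled case ($\ttau = \0$, $\Pi = 0$), this yields $\sum_{g \in \G} d_g \Delta_g = C_\ttau + \Pi - C_\0 \leq \Pi$, where the inequality uses the hypothesis $C_\ttau \leq C_\0$. I therefore set $\alpha := (C_\0 - C_\ttau)/D$ with $D := \sum_{g \in \G} d_g$, so that $\alpha \geq 0$ and $\sum_g d_g r_g = \sum_g d_g \Delta_g + \alpha D = \Pi$, establishing budget balance.

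Verifying the two conclusions is then short. For user-favorability, $\mu^g(\ttau,\r) = \mu^g(\ttau,\0) - r_g = \mu^g(\0,\0) - \alpha \leq \mu^g(\0,\0)$ for every $g$. For wealth inequality, Definition~\ref{def:ex-post} gives $q_g(\ttau,\r) = q_g^0 - \beta\mu^g(\ttau,\r) = q_g(\0,\0) + \beta\alpha$, so $\q(\ttau,\r) = \q(\0,\0) + \beta\alpha\,\mathbf{1}$ is a uniform nonnegative shift of the untolled ex-post income profile, and the constant income transfer property yields $W(\q(\ttau,\r)) \leq W(\q(\0,\0))$. The only real work is the aggregate identity above; the rest is bookkeeping, and the main obstacle is choosing the splitting $r_g = \Delta_g + \alpha$ in the first place, since a naive uniform refund $r_g = \Pi/D$ would generally violate user-favorability for groups with large $\Delta_g$. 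A minor technical caveat is that the constant income transfer property requires $\beta\alpha < \min_g q_g(\0,\0)$, which is guaranteed by the paper's standing assumption that $\beta$ is small enough to keep all ex-post incomes strictly positive.
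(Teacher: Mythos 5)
Your proposal is correct and takes essentially the same route as the paper: your refund $r_g = \Delta_g + \alpha$ with $\alpha = (C_\0 - C_\ttau)/\sum_g d_g$ is exactly the special case $\alpha_g = d_g/\sum_{g'}d_{g'}$ of the refund family in Lemma~\ref{lem:PI-CPRR} that the paper selects, your aggregate identity is the paper's Equation~\eqref{eq:total-cost-relation}, and the closing step (uniform shift of the ex-post incomes plus the constant income transfer property) is identical. The only cosmetic difference is that you prove the budget-balance and user-favorability facts inline rather than quoting them from the cited lemma.
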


\ifarxiv

Note that Proposition~\ref{prop:rev-refund-decreases-ineq} relies on the key observation that an exogenous equilibrium is completely defined through the road tolls $\ttau$, and is thus oblivious of the refund $\rr$. Furthermore, we remark that both Definition~\ref{def:pareto-imp} and Proposition~\ref{prop:rev-refund-decreases-ineq} can readily be extended to incorporate the notions of user-favorable and Pareto improving CPRR schemes relative to any status-quo traffic equilibrium pattern beyond the untolled user equilibrium. For simplicity, we prove those properties relative to the untolled setting. 
We now prove Proposition~\ref{prop:rev-refund-decreases-ineq} by leveraging a class of user-favorable CPRR schemes that were developed recently~\cite[Theorem 1]{GUO2010972}.

\begin{lemma} [Existence of user-favorable CPRR Scheme~\cite{GUO2010972}] \label{lem:PI-CPRR}
Let $\ttau$ be a congestion pricing scheme such that $C_{\ttau} \leq C_{\0}$. Then, for any $\alpha_g \geq 0$ with $\sum_{g \in \G} \alpha_g = 1$, the CPRR scheme $(\ttau,\rr)$ with refunds given by $r_g = \mu^g(\ttau,\0) - \mu^g(\0,\0) + \frac{\alpha_g}{d_g}(C_{\0} - C_{\ttau})$, for each group $g$ is user-favorable.
\end{lemma}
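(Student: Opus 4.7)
The plan is to verify the two conditions that together establish the lemma: first, that $(\ttau,\r)$ is in fact a valid CPRR scheme (i.e., total refunds equal total toll revenue, $\sum_{g\in\G} r_g d_g = \Pi$), and second, that it satisfies Definition~\ref{def:pareto-imp}, namely $\mu^g(\ttau,\r)\leq \mu^g(\0,\0)$ for every group $g$. The user-favorability check turns out to be essentially a one-line manipulation, so most of the effort will go into the budget-balance accounting.

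First I would handle user-favorability. By Definition~\ref{def:user-travel-cost}, on any used path $P\in\P_g$ we have $\mu^g(\ttau,\r)=\mu^g(\ttau,\0)-r_g$, since the refund is subtracted uniformly. Substituting the proposed refund gives
\begin{align*}
    \mu^g(\ttau,\r) \;=\; \mu^g(\ttau,\0) - \Bigl(\mu^g(\ttau,\0) - \mu^g(\0,\0) + \tfrac{\alpha_g}{d_g}(C_{\0} - C_{\ttau})\Bigr) \;=\; \mu^g(\0,\0) - \tfrac{\alpha_g}{d_g}(C_{\0} - C_{\ttau}).
\end{align*}
Since $\alpha_g\geq 0$, $d_g>0$, and $C_{\0}\geq C_{\ttau}$ by hypothesis, the last term is non-negative, and the inequality $\mu^g(\ttau,\r)\leq \mu^g(\0,\0)$ follows immediately. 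This does not even require the equilibrium structure beyond the fact that $\mu^g(\ttau,\0)$ is well-defined group-wise at any $\ttau$-equilibrium, a point the preceding paragraph already establishes.

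For budget balance I would derive the identity $\sum_{g\in\G} d_g\,\mu^g(\ttau,\0) = C_{\ttau} + \Pi$, valid at any exogenous $\ttau$-equilibrium $\f$. The argument decomposes group demand along used paths: by equilibrium $\mu^g(\ttau,\0)=\mu_P^g(\f,\ttau,\0)$ for every $P$ with $f_{P,g}>0$, so $d_g\mu^g(\ttau,\0)=\sum_{P}f_{P,g}\sum_{e\in P}(v_g t_e(x_e)+\tau_e)$. Swapping the order of summation and using the edge-flow definitions \eqref{eq:edge-constraint-by-class}--\eqref{eq:edge-constraint} yields the two pieces $\sum_g v_g \sum_e x_e^g t_e(x_e) = C_{\ttau}$ and $\sum_e \tau_e x_e = \Pi$. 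Applying the same decomposition with $\ttau=\0$ gives $\sum_g d_g\,\mu^g(\0,\0) = C_{\0}$. Plugging these two identities, together with $\sum_g \alpha_g = 1$, into $\sum_g d_g r_g$ yields
\begin{align*}
    \sum_{g\in\G} d_g r_g \;=\; (C_{\ttau}+\Pi) - C_{\0} + (C_{\0}-C_{\ttau}) \;=\; \Pi,
\end{align*}
confirming feasibility of the refund scheme.

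The main obstacle, such as it is, lies in the flow-decomposition identity for $\sum_g d_g\,\mu^g(\ttau,\0)$: one must be careful to separate the travel-time contribution from the toll contribution and to invoke the invariance of equilibrium path costs (noted just after Definition~\ref{def:exo-eq}) to justify pulling $\mu^g(\ttau,\0)$ out of the path sum even though different equilibrium paths carry different flows. Once this bookkeeping is in place, both budget balance and user-favorability fall out directly, with user-favorability hinging only on the sign of $C_{\0}-C_{\ttau}$.
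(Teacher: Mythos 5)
Your proof is correct and follows essentially the same route as the paper: the user-favorability step is the identical one-line cancellation (your algebra even retains the $\tfrac{\alpha_g}{d_g}$ factor that the paper's displayed chain accidentally drops), and your budget-balance identity $\sum_{g\in\G} d_g\,\mu^g(\ttau,\0)=C_{\ttau}+\Pi$ is exactly the paper's Equation~\eqref{eq:total-cost-relation}, which it likewise proves by the path-to-edge summation swap you describe. No gaps.
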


The above lemma states that as long as the edge tolls $\ttau$ reduce the total system cost, there exists a method to refund revenues that makes every user at least as well off as compared to that under the untolled case. Since this lemma is a slight modification to that shown in~\cite{GUO2010972}, (we consider weak inequalities $\alpha_g \geq 0$ and $C_{\ttau} \leq C_{\0}$, whereas~\cite{GUO2010972} considered strict inequalities $\alpha_g > 0$ and $C_{\ttau} < C_{\0}$) we present its proof in \ifarxiv Appendix~\ref{apdx:lemma-pi-cprr-pf} \else the extended version of this work~\cite{jalota-cprr} \fi for completeness. We now leverage Lemma~\ref{lem:PI-CPRR} to complete the proof of Proposition~\ref{prop:rev-refund-decreases-ineq}.

\fi

\ifarxiv

\begin{proof}
\ifarxiv
For the collected toll revenues, we construct a special case of the revenue refunding scheme from Lemma~\ref{lem:PI-CPRR}. In particular, consider the refunding scheme where $\alpha_g = \frac{d_g}{\sum_{g \in \G} d_g}$, which gives the refund
\ifarxiv
\begin{align*}
    r_g = \mu^g(\ttau,\0) - \mu^g(\0,\0) + \frac{1}{\sum_{g \in \G} d_g}(C_{\0} - C_{\ttau})
\end{align*}
\else 
$r_g = \mu^g(\ttau,\0) - \mu^g(\0,\0) + \frac{1}{\sum_{g \in \G} d_g}(C_{\0} - C_{\ttau})$ 
\fi
to each user in group $g$. We now show that under this revenue refunding scheme, the ex-post income distribution $\bm{\Hat{q}} = \q(\ttau, \r)$ has a lower wealth inequality measure relative to the untolled user equilibrium ex-post income distribution $\Tilde{\q} = \q(\0,\0)$. That is, we show that $W(\mathbf{\Hat{q}}) \leq W(\Tilde{\q})$.
\else
Consider the refunds $r_g = \mu^g(\ttau,\0) - \mu^g(\0,\0) + \frac{1}{\sum_{g \in \G} d_g}(C_{\0} - C_{\ttau})$ for each user in group $g$. Through an argument similar to that in~\cite[Theorem 1]{GUO2010972}, it can be shown that the corresponding CPRR scheme is user-favorable, which we present in the extended version of this paper~\cite{jalota-cprr}. We now show that under this revenue refunding scheme, the ex-post income distribution $\bm{\Hat{q}} = \q(\ttau, \r)$ has a lower wealth inequality measure relative to the untolled user equilibrium ex-post income distribution $\Tilde{\q} = \q(\0,\0)$. That is, we show that $W(\mathbf{\Hat{q}}) \leq W(\Tilde{\q})$.
\fi

To see this, we begin by considering the ex-ante income distribution $\q^0$. Under the untolled user equilibrium, users in group $g$ incur a travel cost $\mu^g(\0,\0)$, and thus the ex-post income distribution of users in group $g$ is given by $\Tilde{q}_g = q_g^0 - \beta \mu^g(\0,\0)$, where $\beta$ is the scaling factor as in Definition~\ref{def:ex-post}. On the other hand, under the CPRR scheme $(\ttau, \r)$, the ex-post income distribution of users in group $g$ is given by
\ifarxiv
\begin{align*}
    \Hat{q}_g &= q_g^0 -\beta\left( \mu^g(\ttau,\0) -r_g\right)\\ 
    &= q_g^0 - \beta \left( \mu^g(\ttau,\0) - \left[ \mu^g(\ttau,\0) - \mu^g(\0,\0) + \frac{1}{\sum_{g \in \G} d_g}(C_{\0} - C_{\ttau}) \right] \right) \\ 
    &= q_g^0 - \beta \left( \mu^g(\0,\0) - \frac{1}{\sum_{g \in \G} d_g}(C_{\0} - C_{\ttau}) \right) \\ 
    &= \Tilde{q}_g + \beta \frac{1}{\sum_{g \in \G} d_g}(C_{\0} - C_{\ttau}),
\end{align*}
\else 
\begin{align*}
    \Hat{q}_g \! &= \! q_g^0 \! - \! \beta\left( \mu^g(\ttau,\0) \! - \! r_g\right) \! = \! \Tilde{q}_g \! + \! \beta \frac{1}{\sum_{g \in \G} d_g}(\! C_{\0} \! - \! C_{\ttau} \!),
\end{align*}
\fi
where we used that $\Tilde{q}_g = q_g^0 - \beta \mu^g(\0,\0)$\ifarxiv to derive the last equality. \else. \fi Since the above relation holds for all groups $g$, \ifarxiv we observe that \fi $\mathbf{\Hat{q}} = \Tilde{\q} + \lambda \mathbf{1}$, where $\lambda = \frac{\beta}{\sum_{g \in \G} d_g}(C_{\0} - C_{\ttau}) \geq 0$. Finally, the result that $W(\mathbf{\Hat{q}}) \leq W(\Tilde{\q})$ follows by the constant income transfer property (Section~\ref{sec:model}), establishing our claim.
\end{proof}

\else 

\fi

\ifarxiv
Proposition~\ref{prop:rev-refund-decreases-ineq} establishes the existence of a user-favorable CPRR scheme that simultaneously decreases the total system cost and reduces the wealth inequality relative to that of the untolled outcome. \else For a proof of Proposition~\ref{prop:rev-refund-decreases-ineq}, see Appendix~\ref{apdx:pfProp1}. Note that Proposition~\ref{prop:rev-refund-decreases-ineq} relies on the key observation that an exogenous equilibrium is completely defined through the road tolls $\ttau$, and is thus oblivious of the refund $\rr$. Both Definition~\ref{def:pareto-imp} and Proposition~\ref{prop:rev-refund-decreases-ineq} can readily be extended to incorporate the notions of user-favorable and Pareto improving CPRR schemes relative to any status-quo traffic equilibrium pattern beyond the untolled user equilibrium. For simplicity, we prove those properties relative to the untolled setting. \fi

We now present an important consequence of this result for single O-D pair travel demand when all users have values-of-time proportional to their incomes. In this setting, we show the existence of a revenue refunding scheme that decreases the wealth inequality relative to the ex-ante income distribution. We note that this is a stronger result than Proposition~\ref{prop:rev-refund-decreases-ineq} since the wealth inequality of the ex-ante income distribution is lower than that of the ex-post income distribution for the untolled case.

\begin{corollary} [CPRR Decreases Wealth Inequality for Single O-D Pair] \label{cor:rev-refund-single-od}
Consider the setting where all users travel between the same O-D pair, i.e., $w_g=w_{g'}$ for every $g,g'\in \G$, and have values-of-time proportional to their incomes, i.e., $v_g = \omega q_g^0$ for some $\omega>0$ for each group $g$. Let $\ttau$ be road tolls such that  $C_{\ttau}\leq  C_{\0}$. Then, 
there exists a revenue refunding scheme $\rr$ such that the CPRR scheme $(\ttau, \rr)$ is user-favorable and $W(\q(\ttau, \rr)) \leq W(\q^0)$, where $\q^0$ is the ex-ante income distribution.
\end{corollary}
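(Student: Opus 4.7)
The plan is to reduce the statement to Proposition~\ref{prop:rev-refund-decreases-ineq} by exploiting a special structural property of the untolled equilibrium in the single O-D setting: under the assumption $v_g = \omega q_g^0$, the untolled ex-post income distribution turns out to be a positive scalar multiple of the ex-ante income distribution, and so by the scale-independence axiom for $W$, the two distributions have the same wealth-inequality measure. Once this is established, the desired inequality $W(\q(\ttau,\r))\leq W(\q^0)$ follows immediately from Proposition~\ref{prop:rev-refund-decreases-ineq} applied with the CPRR scheme constructed in its proof (with $\alpha_g = d_g/\sum_{g\in\G}d_g$).

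First, I would use the fact that all groups share the same O-D pair $w$ to argue that at any untolled exogenous $\0$-equilibrium $\f^0$, every used path $P$ attains the same common travel time $T:=\sum_{e\in P}t_e(x_e)$. This is because each group $g$ chooses paths minimizing $\mu_P^g(\f^0,\0,\0)=v_g\sum_{e\in P}t_e(x_e)$, and since $v_g>0$ is a positive constant in front of the sum, minimizing $\mu_P^g$ is equivalent to minimizing the unweighted travel time along paths in $\P_g=\P_w$, which is the same minimization problem for every group. Hence $\mu^g(\0,\0)=v_g T=\omega q_g^0 T$ for every $g\in\G$.

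Next, plugging this into Definition~\ref{def:ex-post}, the untolled ex-post income for group $g$ becomes
\[
\Tilde q_g := q_g(\f^0,\0,\0) = q_g^0 - \beta\omega q_g^0 T = (1-\beta\omega T)\, q_g^0.
\]
Choosing $\beta$ small enough so that $1-\beta\omega T>0$ (which is consistent with the assumption in Definition~\ref{def:ex-post} that ex-post incomes are strictly positive), this shows $\Tilde\q = \lambda \q^0$ for the positive scalar $\lambda=1-\beta\omega T$. By the scale-independence property of $W$, we conclude $W(\Tilde\q)=W(\q^0)$.

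Finally, I would invoke Proposition~\ref{prop:rev-refund-decreases-ineq} on the hypothesized tolling scheme $\ttau$ with $C_\ttau\leq C_\0$, which yields a user-favorable CPRR scheme $(\ttau,\r)$ with $W(\q(\ttau,\r))\leq W(\q(\0,\0))=W(\Tilde\q)$. Combining with the previous step gives $W(\q(\ttau,\r))\leq W(\q^0)$, which is the desired strengthening. The only nontrivial ingredient is the observation about a common untolled path travel time; everything else is a direct chain of substitutions plus the axioms on $W$, so I do not expect any serious obstacle—the main subtlety is simply to state the smallness of $\beta$ precisely so that the scale-independence axiom applies to a positive rescaling.
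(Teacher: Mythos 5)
Your proposal is correct and follows essentially the same route as the paper's proof: establish that the untolled ex-post income distribution is a positive scalar multiple of the ex-ante distribution (via the common equilibrium travel time shared by all groups on the single O-D pair), apply scale independence to get $W(\q(\0,\0))=W(\q^0)$, and then invoke Proposition~\ref{prop:rev-refund-decreases-ineq}. The only difference is that you spell out more explicitly why all used paths share a common travel time at the untolled equilibrium, which the paper states as an observation without elaboration.
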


\ifarxiv
\begin{proof}
Consider the same user-favorable CPRR scheme $(\ttau, \rr)$ as is the proof of Proposition~\ref{prop:rev-refund-decreases-ineq}. \ifarxiv We now show that the wealth inequality of the ex-post income distribution resulting from $(\ttau, \rr)$ is lower than the wealth inequality of the ex-ante income distribution, i.e., $W(\Hat{\q}) \leq W(\q)$, where $\Hat{\q} = \q(\ttau, \rr)$. \else We now show that $W(\Hat{\q}) \leq W(\q)$, where $\Hat{\q} = \q(\ttau, \rr)$. \fi \ifarxiv To see this, we first show that $W(\q(\0, \0)) = W(\q^0)$, i.e., the wealth inequality measure of the ex-ante income distribution is exactly equal to that of the untolled ex-post income distribution $\Tilde{\q} = \q(\0, \0)$. The proof of this result lies in the key observation that for any $\0$-equilibrium flow $\f^0$ all users incur the same travel time, denoted as $\gamma$, since they travel between the same O-D pair. \else To see this, we first show that $W(\q(\0, \0)) = W(\q^0)$, which follows from the observation that for any $\0$-equilibrium flow $\f^0$ all users incur the same travel time, denoted as $\gamma$, since they travel between the same O-D pair. \fi This observation leads to a travel cost of $\mu^g(\0,\0) = \omega q^0_g \gamma$ for each group $g$. Then, for the untolled setting, the ex-post income distribution of users in group $g$ is given by 
\begin{align*}
    \Tilde{q}_g = q_g^0 - \beta \mu^g(\0, \0) 
    =q_g^0  - \beta \omega q_g^0 \gamma 
    = q_g^0(1- \beta \omega \gamma).
\end{align*}
From the above, it follows that $\Tilde{\q} = \lambda_1 \q^0$ for $\lambda_1 = 1- \beta \omega \gamma$. Thus, for $\beta$ small enough it holds that $\lambda_1>0$. Under this condition, due to the scale-independence property (Section~\ref{sec:model}) of the wealth-inequality measure it follows that $W(\Tilde{\q}) = W(\q^0)$. Finally, since $W(\Hat{\q}) \leq W(\Tilde{\q})$ by the proof of Proposition~\ref{prop:rev-refund-decreases-ineq} it follows that $W(\Hat{\q}) \leq W(\Tilde{\q}) = W(\q^0)$, which proves our claim. 
\end{proof}
\fi

\ifarxiv
The above result shows that in some scenarios, 
any form of road tolls that decreases the total system cost $C_{\ttau}$ relative to $C_{\0}$, coupled with the appropriate revenue refunding policy, will not increase the level of wealth inequality in comparison to the ex-ante  distribution $\q^0$. This result \else For a proof of Corollary~\ref{cor:rev-refund-single-od}, see Appendix~\ref{apdx:pfCor1}. Corollary~\ref{cor:rev-refund-single-od} \fi indicates that the appropriate refunding can reverse the 
negative consequences of tolls on wealth inequality, as was established in the ``Inequity Theorem'' in \cite{gemici_et_al:LIPIcs:2019:10270}. In particular, the ``Inequity Theorem'' asserts that for the setting considered in Corollary~\ref{cor:rev-refund-single-od}, any form of road tolls increases the level of wealth inequality compared with the ex-ante income distribution $\q^0$ in the absence of refunds.

A main ingredient in  Corollary~\ref{cor:rev-refund-single-od} is the fact that the wealth inequality measure of the ex-ante income distribution $\q^0$ is equal to that of the ex-post income distribution under the untolled user equilibrium. This result holds when users travel between the same O-D pair and have values-of-time that scale proportionally with their incomes. However, it does not hold in general for users travelling between different O-D pairs, since in such a case, users may incur different travel times at the untolled user equilibrium. \ifarxiv For the multiple O-D pair setting, we show in Proposition~\ref{prop:increase-inequality} that there are travel demand instances when no CPRR scheme can reduce income inequality relative to that of the ex-ante income distribution. \else For the multiple O-D pair setting, we show through a counterexample in the extended version of our work~\cite{jalota-cprr} that there are travel demand instances when no CPRR scheme can reduce income inequality relative to that of the ex-ante income distribution. Thus, for the rest of this paper we devise CPRR schemes that do not increase the wealth inequality relative to the ex-post income distribution under the untolled user equilibrium outcome rather than relative to the ex-ante income distribution. Note that doing so is reasonable, as we look to design CPRR schemes that improve on the status quo traffic pattern, which is typically described by the untolled user equilibrium setting.  \fi

\ifarxiv

\begin{proposition} [Increase in Income Inequality for Multiple O-D Pairs] \label{prop:increase-inequality}
For any user-favorable CPRR scheme $(\ttau, \rr)$, there exists a two O-D pair setting where the wealth inequality of the ex-post income distribution increases relative to the ex-ante income distribution, i.e., $W(\q(\ttau, \rr))\geq W(\q^0)$.
\end{proposition}

\begin{proof}
We show that there exists a two O-D pair setting such that for any user-favorable CPRR scheme $(\ttau,\r)$ it holds that $W(\q(\ttau,\r))\geq W(\q^0)$.

We begin by defining the instance depicted in Figure~\ref{fig:prop-2-countereg}. Consider a graph with four nodes, $v_1, v_2, v_3, v_4$ and three edges $e_1 = (v_1, v_2)$, $e_2 = (v_3, v_4)$ and $e_3 = (v_3, v_4)$, where there are two possible ways to get from $v_3$ to $v_4$. We define the travel time on edge $e_1$ as $t_1(x_1) = \frac{x_1}{2}$, that on edge $e_2$ as $t_2(x_2) = x_2$, and that on edge $e_3$ as $t_3(x_3) = 1$. Further consider two user types, one with a high income $q_H$ and value-of-time $\omega q_H$ that makes trips between O-D pair $w_H = (v_1, v_2)$, and the other with a low income $q_L$ and value-of-time $\omega q_L$ that makes trips between O-D pair $w_L = (v_3, v_4)$. Let the demand of the high income users be $d_H = 1$ and that of the low income users be $d_L = 1$. Then at the untolled user equilibrium outcome, all high income users traverse their only edge $e_1$, while all the low income users traverse the edge $e_2$. At this equilibrium flow, the cost for the high income users is $\omega q_H \frac{1}{2}$, since the travel time on edge $e_1$ is $\frac{1}{2}$, and that for the low income users is $\omega q_L$, since the travel time on edge $e_2$ is one.

\begin{figure}[!ht]
      \centering
      \includegraphics[width=0.6\linewidth]{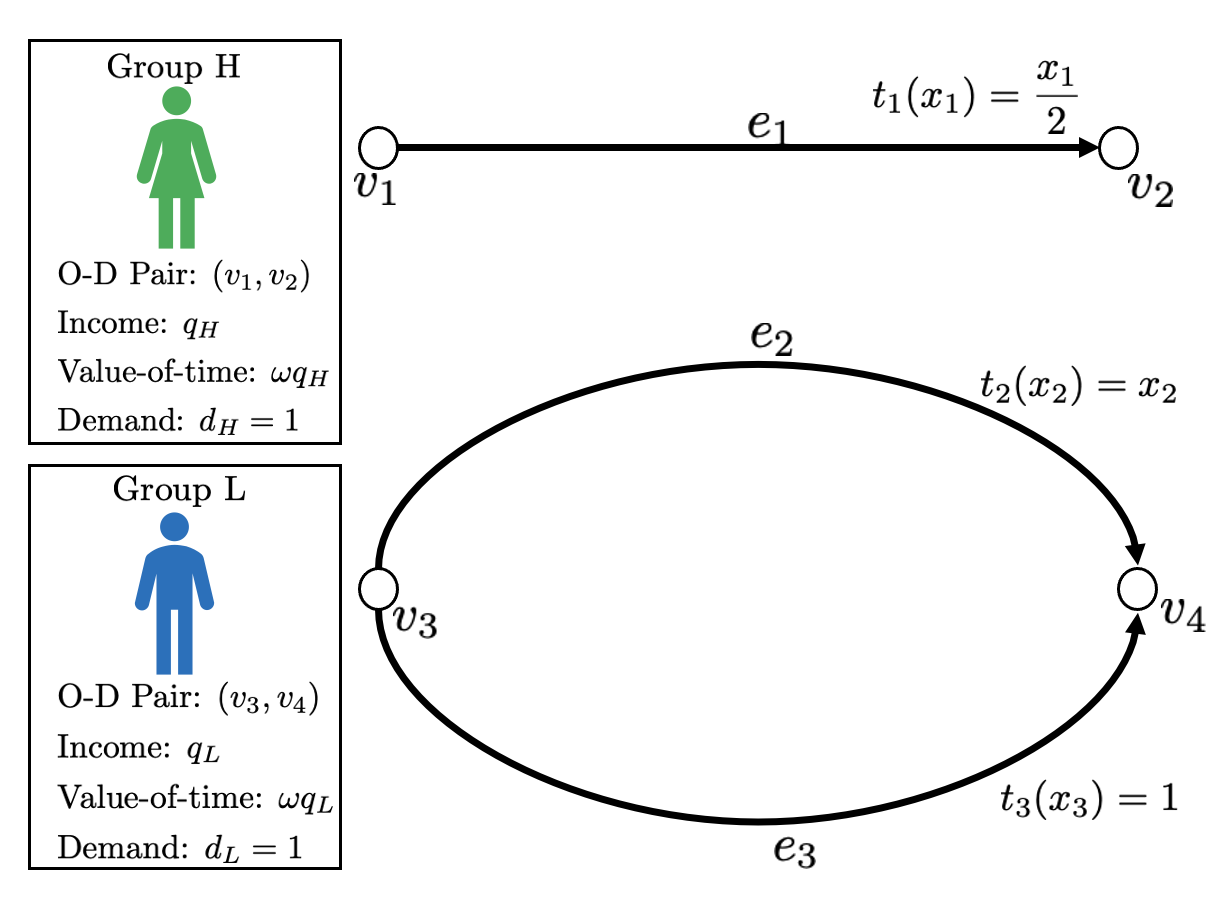}
      \caption{{\small \sf A two O-D pair and two user group instance for which the wealth inequality of the ex-post income distribution under any CPRR scheme is at least the wealth inequality of the ex-ante income distribution. Under the above defined attributes for the different user groups, the user group $H$ with a higher income and value-of-time incurs a strictly lower cost as a proportion of their income as compared to user group $L$, indicating the regressive nature of any valid CPRR scheme.}}
      \label{fig:prop-2-countereg}
   \end{figure}

Next, we note that under any CPRR scheme $(\ttau, \rr)$ users in the high income group will continue to use edge $e_1$ since this is the only available edge on which they can travel. Thus, for this scheme to be user-favorable it must be that any tolls collected from the high income users is directly refunded back within the groups. To see this, if there were tolls collected from high-income users that were given to low income users, then some high income users would incur strictly higher costs than at the untolled $\0$-equilibrium outcome. We similarly observe that all collected refunds from the low income groups must be completely refunded to users within the low income group to ensure that the CPRR scheme is user-favorable. Note that the above argument stems from the fact that the travel paths of the two user groups are completely disjoint.

Thus, we have for any user-favorable CPRR scheme $(\ttau, \rr)$ that all the users incur the same costs as that under the $\0$-equilibrium outcome. Now, under the untolled user equilibrium, we observe that the ex-post income of the high income group is $q_H = q_H - \beta \frac{\omega q_H}{2} = q_H(1-\beta \frac{\omega}{2})$ and the ex-post income of the low income group is $q_L = q_L - \beta \omega q_L = q_L(1-\beta \omega)$. The above analysis implies that the untolled user equilibrium outcome results in a regressive tax, i.e., lower income users are charged a greater fraction of their incomes than higher income users. Since the function $W$ satisfies the property that regressive taxes increase inequality, we have that the wealth inequality of the ex-post income distribution is greater than that of the ex-ante income distribution.
\end{proof}


Proposition~\ref{prop:increase-inequality} shows that there may be multiple O-D pair instances when it may not be possible to achieve a lower wealth inequality measure relative to the ex-ante income distribution. As a result, for the rest of this paper we devise CPRR schemes that reduce the wealth inequality measure relative to the ex-post income distribution under the untolled user equilibrium outcome rather than relative to the ex-ante income distribution. Note that doing so is reasonable, since we look to design CPRR schemes that improve on the status quo traffic pattern, which is typically described by the untolled user equilibrium setting. Finally, we note that it may also be interesting to develop worst-case bounds on the extent to which income inequality can be worsened through CPRR schemes in a multiple O-D pair setting relative to that of the ex-ante income distribution, and we defer this exploration to future research.

\fi

\ifarxiv
\else 
\vspace{-5pt}
\fi

\section{Optimal CPRR Schemes} \label{sec:optimality}
In the previous section, we established the existence of a user-favorable CPRR scheme that simultaneously reduces total system cost without increasing wealth inequality relative to an  untolled outcome. In this section, we prove the existence of optimal CPRR schemes that achieve total system cost and wealth inequality that cannot be improved by any other user-favorable CPRR scheme. 
In particular, we establish that the optimal CPRR schemes are those that induce exogenous equilibrium flows with the minimum total system cost while also resulting in ex-post income distributions with the lowest level of wealth inequality among the class of all user-favorable CPRR schemes (see \ifarxiv Figure~\ref{fig:combined}\else Fig.~\ref{fig:combined}\fi). We further show in Section~\ref{sec:opt-CPRR} that these optimal CPRR schemes induce equilibria even when coalitions of users endogenize the effect of refunds on their travel decisions.

We first present the main result of this section, which characterizes the set of optimal CPRR schemes.

\begin{theorem} [Optimal CPRR Scheme] \label{thm:opt}
There exists a user-favorable CPRR scheme $(\ttau^*, \rr^*)$ such that for any user-favorable CPRR scheme $(\ttau, \rr)$ it holds that $C_{\ttau^*}\leq C_{\ttau}$ and $W(\q(\ttau^*, \rr^*)) \leq W(\q( \ttau, \rr))$. 
\end{theorem}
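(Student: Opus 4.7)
The plan is to construct $(\ttau^*, \r^*)$ in two decoupled stages by exploiting that, under the exogenous equilibrium model, the induced flow (and hence $C_\ttau$) depends only on $\ttau$ and not on $\r$. First I would pick any toll vector $\ttau^*$ that induces a system-optimal exogenous equilibrium, so that $C_{\ttau^*} = C^*$. The existence of such multi-class valid tolls for heterogeneous non-atomic congestion games is known~\cite{heterogeneous-pricing-roughgarden,multicommodity-extension}. Since $C_{\ttau^*} = C^* \leq C_{\0}$, Lemma~\ref{lem:PI-CPRR} guarantees that user-favorable refunds exist for $\ttau^*$, so the class of user-favorable CPRR schemes built on $\ttau^*$ is non-empty.

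Next I would derive an explicit characterization of the ex-post income distributions reachable by user-favorable schemes. Fix $\tilde{\q} := \q(\0,\0)$ and, for each user-favorable $(\ttau, \r)$, set $\delta_g := q_g(\ttau, \r) - \tilde{q}_g$. Two facts follow: (i) user-favorability $\mu^g(\ttau, \r) \leq \mu^g(\0,\0)$ yields $\delta_g \geq 0$ for all $g$; (ii) by revenue balance $\sum_g d_g r_g = \Pi$ combined with the path-cost identity $\sum_g d_g \mu^g(\ttau, \0) = C_{\ttau} + \Pi$, one obtains $\sum_g d_g \delta_g = \beta(C_{\0} - C_{\ttau})$. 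Because $C_\ttau \geq C^*$ for every $\ttau$, this total perturbation budget is maximized precisely when $C_{\ttau} = C^*$, and for each fixed $\ttau$ the map $\r \leftrightarrow \bm{\delta}$ is affine and invertible.

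With this structure in hand I would define $\r^*$ to be any refund whose associated $\bm{\delta}^*$ solves the finite-dimensional program
\[
\bm{\delta}^* \in \argmin_{\bm{\delta} \geq \mathbf{0},\ \sum_g d_g \delta_g = B^*} W(\tilde{\q} + \bm{\delta}), \qquad B^* := \beta(C_{\0} - C^*).
\]
A minimizer exists by continuity of $W$ on the compact feasible set (continuity is immediate for standard measures such as the discrete Gini coefficient). To establish optimality against an arbitrary user-favorable $(\ttau, \r)$ with perturbation $\bm{\delta}$ and budget $B = \beta(C_{\0} - C_{\ttau}) \leq B^*$, I would consider the lifted vector $\bm{\delta}' := \bm{\delta} + \lambda \mathbf{1}$ with $\lambda := (B^* - B)/\sum_g d_g \geq 0$. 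Then $\bm{\delta}'$ is feasible for budget $B^*$, and by the constant income transfer property (Section~\ref{sec:model}) $W(\tilde{\q} + \bm{\delta}') \leq W(\tilde{\q} + \bm{\delta})$. Chaining with the definition of $\r^*$ yields $W(\q(\ttau^*, \r^*)) \leq W(\tilde{\q} + \bm{\delta}') \leq W(\q(\ttau, \r))$, while $C_{\ttau^*} = C^* \leq C_{\ttau}$ is immediate.

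The main obstacle is the invocation of tolls that achieve the multi-class system optimum: although this fact is standard, care is needed to confirm both that such tolls are compatible with Definition~\ref{def:exo-eq} and that the quantity $C_\ttau$ used throughout is well-defined (i.e., invariant across $\ttau$-equilibria) for every $\ttau$ appearing in the comparison. The rest of the argument is essentially linear bookkeeping between refunds, tolls, and ex-post incomes, combined with the constant income transfer property proved in Appendix~\ref{sec:const-inc-transfer}.
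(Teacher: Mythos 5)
Your proposal is correct and follows essentially the same route as the paper: your characterization of the reachable ex-post incomes via nonnegative perturbations $\bm{\delta}$ with budget $\beta(C_{\0}-C_{\ttau})$ is the paper's Lemma~\ref{obs:ex-post-income}, and your lifting step $\bm{\delta}' = \bm{\delta} + \lambda\mathbf{1}$ combined with the constant income transfer property is exactly the paper's monotonicity argument (Lemma~\ref{obs:monotonicity-refunds}), after which choosing cost-minimizing tolls and an inequality-minimizing refund concludes as in the paper.
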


The proof of this theorem is constructive, in the sense that it provides a recipe for computing the optimal CPRR scheme $(\ttau^*, \rr^*)$. The proof relies on two intermediate results that are of independent interest. \ifarxiv First, the ex-post income distribution of any user-favorable CPRR scheme is the same as the ex-post income distribution under the untolled user equilibrium outcome plus some non-negative transfer, which may vary across groups. We summarize this observation in the following lemma. \else First, under any user-favorable CPRR scheme, each user's income is at least their ex-post income under the untolled case, as is elucidated through the following lemma. \fi 

\begin{lemma} [Ex-post Income Distribution] \label{obs:ex-post-income}
Let $\ttau$ be road tolls such that $C_{\ttau}\leq C_{\0}$. Then, under any refunds $\rr$ such that the CPRR scheme $(\ttau, \rr)$ is user-favorable, the ex-post income of any user in group $g$ is $q_g(\ttau, \rr) = q_g(\0, \0) + \beta c_g$, where the transfer value $c_{g} \geq 0$ and satisfies the relation $\sum_{g \in \G} c_{g} d_{g} = C_{\0} - C_{\ttau}$.
\end{lemma}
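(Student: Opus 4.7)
The plan is to work directly from Definition~\ref{def:ex-post} of the ex-post income distribution and the cost definition (Definition~\ref{def:user-travel-cost}) to express the difference $q_g(\ttau,\r)-q_g(\0,\0)$ in closed form, and then verify the two claimed properties (non-negativity and the aggregate identity) separately.

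First, I would write out both quantities from Definition~\ref{def:ex-post}:
\begin{align*}
q_g(\ttau,\r) &= q_g^0 - \beta \mu^g(\ttau,\r) = q_g^0 - \beta\bigl(\mu^g(\ttau,\0) - r_g\bigr),\\
q_g(\0,\0) &= q_g^0 - \beta \mu^g(\0,\0).
\end{align*}
Subtracting shows that the claimed relation holds with
\[
c_g \;=\; \mu^g(\0,\0) - \mu^g(\ttau,\0) + r_g.
\]
The non-negativity $c_g\geq 0$ is then immediate from the user-favorable assumption, since $\mu^g(\ttau,\r)=\mu^g(\ttau,\0)-r_g\leq \mu^g(\0,\0)$ is exactly the statement $c_g \geq 0$.

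The main piece of work is the aggregate identity $\sum_{g\in\G} c_g d_g = C_{\0}-C_{\ttau}$. The key step is to show that at any exogenous $\ttau$-equilibrium $\f$ with edge flows $\x$,
\[
\sum_{g\in\G} \mu^g(\ttau,\0)\,d_g \;=\; C_{\ttau} + \Pi_{\ttau},
\qquad \text{where } \Pi_\ttau := \sum_{e\in E}\tau_e x_e.
\]
This follows by invariance of the per-group cost across used paths at equilibrium, which lets me write $\mu^g(\ttau,\0)d_g = \sum_{P\in\P_g} f_{P,g}\,\mu_P^g(\f,\ttau,\0)$, then expanding $\mu_P^g$ as in Definition~\ref{def:user-travel-cost} and summing over $g$, and finally using the edge-flow relations~\eqref{eq:edge-constraint-by-class}--\eqref{eq:edge-constraint} to recognize the two summands as $C(\f)=C_{\ttau}$ and $\Pi_{\ttau}$ respectively. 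Applying the same computation to the untolled equilibrium (with $\ttau=\0$, so $\Pi_\0=0$) gives $\sum_g \mu^g(\0,\0) d_g = C_{\0}$.

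Finally, I would combine these identities with the revenue-refund balance $\sum_{g\in\G} r_g d_g = \Pi_{\ttau}$ (assumed throughout the paper for CPRR schemes) to obtain
\[
\sum_{g\in\G} c_g d_g \;=\; C_{\0} - (C_{\ttau}+\Pi_{\ttau}) + \Pi_{\ttau} \;=\; C_{\0} - C_{\ttau},
\]
completing the proof. The only genuinely non-trivial step is the equilibrium accounting identity $\sum_g \mu^g(\ttau,\0)d_g=C_{\ttau}+\Pi_{\ttau}$; every other ingredient is a direct unpacking of definitions or a restatement of the user-favorable hypothesis.
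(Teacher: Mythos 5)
Your proposal is correct and follows essentially the same route as the paper: the same closed form $c_g=\mu^g(\0,\0)-\mu^g(\ttau,\0)+r_g$ with non-negativity from the user-favorable condition, and the same accounting identity $\sum_g \mu^g(\ttau,\0)d_g = C_{\ttau}+\sum_e\tau_e x_e$ combined with budget balance to get the aggregate relation. The only cosmetic difference is that the paper frames that identity via the KKT conditions of the multi-class user-equilibrium convex program, whereas you derive it directly from equal costs on used paths --- but the paper's appendix proof of that identity is exactly your computation.
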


\begin{proof}
Denote the ex-post income of group $g$ as $\Hat{q}_g = q_g(\ttau, \r)$. We now prove the ex-post income relation using the definition of a user-favorable CPRR scheme. In particular, for any user-favorable CPRR scheme $(\ttau, \r)$ the user travel cost does not increase from the untolled case, i.e., $\mu^g(\ttau, \r) \leq \mu^g(\0,\0)$. As it holds that $\mu^g(\ttau, \r) = \mu^g(\ttau, \0) - r_g$, we observe that for some $c_g \geq 0$ the following relation must hold for each user in group $g$: $\mu^g(\ttau, \0) - r_g + c_g = \mu^g(\0,\0)$. Then, for an ex-ante income distribution $\q^0$, the ex-post income of each user belonging to group $g$ is given by
\ifarxiv
\begin{align*}
    \Hat{q}_{g} &= q_g^0 - \beta \left( \mu^g(\ttau, \0) - r_g \right) = q_g^0 - \beta \mu^g(\0,\0) + \beta c_{g}  = q_{g}(\0, \0) + \beta c_{g},
\end{align*}
where the second equality follows since $\mu^g(\ttau, \0) - r_g = \mu^g(\0,\0) - c_g$ and the last equality follows from the observation that the ex-post income of users in group $g$ for the untolled setting is given by $q_{g}( \0, \0) = q_g^0 - \beta \mu^g(\0,\0)$.
\else
\begin{align*}
    \Hat{q}_{g} &
    \stackrel{(a)}{=} q_g^0 - \beta \mu^g(\0,\0) + \beta c_{g} \stackrel{(b)}{=} q_{g}(\0, \0) + \beta c_{g},
\end{align*}
where (a) follows as $\mu^g(\ttau, \0) - r_g = \mu^g(\0,\0) - c_g$, and (b) follows as the ex-post income of users in group $g$ for the untolled setting is given by $q_{g}( \0, \0) = q_g^0 - \beta \mu^g(\0,\0)$.
\fi

Next, to show that $\sum_{g \in \G} c_{g} d_{g} = C_{\0} - C_{\ttau}$ we characterize the quantities $C_{\0}$ and $C_{\ttau}$. In particular, observe that by definition $C_{\0}=C(\f^0)$ and $C_{\ttau}=C(\f)$, where $\f^0$ is the untolled $\0$-equilibrium and $\f$ is an exogenous $\ttau$-equilibrium.  Now, note that both flows $\f^0$ and $\f$ can be expressed in closed form. In particular, for a given congestion-pricing scheme $\ttau'$ the exogenous $\ttau'$-equilibrium $\h(\ttau')$ can be written as 

\small
\begin{equation} \label{eq:convex-prog}
    \h(\ttau') \! = \! \argmin_{\h' \in \Omega} \! \sum_{e \in E} \! \int_{0}^{x(\h')_{e}} \! t_{e}(\omega) d \omega \! + \! \sum_{e \in E} \! \sum_{g \in \G} \! \frac{1}{v_{g}} x(\h')_{e}^{g} \tau_{e},
\end{equation}
\normalsize
where $\x(\f')$ denotes the edge representation of a path flow $\f'$. We note that this program corresponds to the \emph{multi-class user-equilibrium optimization problem}~\cite{YANG20041}.

Given this representation of the flow $\h(\ttau')$, we derive the following relation that relates the total system cost $C_{\ttau'}$ to the amount of collected revenues, by analyzing the KKT conditions of this minimization problem. In particular, it holds that
\begin{align} \label{eq:total-cost-relation}
    C_{\ttau'} = \sum_{g \in \G} \mu^{g}(\ttau', \0) d_g - \sum_{e \in E} \tau'_e x(\h(\ttau'))_e.
\end{align} 
Note that the edge flow  $\x(\h(\ttau'))$ is unique by the strict convexity of the travel-time function. We defer the proof of \ifarxiv Equation~\eqref{eq:total-cost-relation} to Appendix~\ref{apdx:kkt-multiclass-ue}\else \eqref{eq:total-cost-relation} to the extended version of this paper~\cite{jalota-cprr}\fi.

We now leverage \ifarxiv Equation~\eqref{eq:total-cost-relation} \else \eqref{eq:total-cost-relation} \fi to obtain that $C_{\ttau} = \sum_{g \in \G} \mu^g(\ttau,\0) d_g - \sum_{e \in E} \tau_e x(\f)_e$, where $\x(\f)=\x(\h(\ttau))$. Furthermore, from \ifarxiv Equation~\eqref{eq:total-cost-relation} \else \eqref{eq:total-cost-relation} \fi for the untolled setting, we obtain that $C_{\0} = \sum_{g \in \G} \mu^g(\0,\0) d_g$. Finally, using these two relations and leveraging the fact that $c_g = \mu^g(\0,\0) - \mu^g(\ttau, \0) + r_g$ we get
\ifarxiv
\begin{align*}
    \sum_{g \in \G} c_{g} d_{g} &= \sum_{g \in \G} (\mu^g(\0,\0) - \mu^g(\ttau, \0) + r_g) d_{g}, \\
    &= \sum_{g \in \G} \mu^g(\0,\0) d_g - \sum_{g \in \G} \mu^g(\ttau, \0) d_g + \sum_{g \in \G} r_{g} d_g, \\
    &\stackrel{(a)}{=} C_{\0} - \sum_{g \in \G} \mu^g(\ttau, \0) d_g + \sum_{e \in E} \tau_e x(\f)_e, \\
    &\stackrel{(b)}{=} C_{\0} - C_{\ttau},
\end{align*}
\else
\begin{align*}
    \sum_{g \in \G} c_{g} d_{g} &\stackrel{(a)}{=} C_{\0} - \sum_{g \in \G} \mu^g(\ttau, \0) d_g + \sum_{e \in E} \tau_e x(\f)_e, \\
    &\stackrel{(b)}{=} C_{\0} - C_{\ttau},
\end{align*}
\fi
where (a) follows as $\sum_{g \in \G} r_{g} d_g = \sum_{e \in E} \tau_e x(\f)_e$ and $C_{\0} = \sum_{g \in \G} \mu^{g}(\0, \0)$, and (b) follows as $C_{\ttau} = \sum_{g \in \G} \mu^g(\ttau, \0) d_g - \sum_{e \in E} \tau_e x(\f)_e$. This proves our claim.
\end{proof}
\ifarxiv
The above lemma highlights that under any user-favorable CPRR scheme $(\ttau, \rr)$, each user's income is at least their ex-post income under the untolled case.
\fi

The second result required to prove Theorem~\ref{thm:opt} relies on the observation that there is a monotonic relationship between the minimum achievable wealth-inequality measure and the total system cost.

\begin{lemma} [Monotonicity of Refunds] \label{obs:monotonicity-refunds}
Suppose that there are two congestion-pricing schemes $\ttau_A$ and $\ttau_B$ with total system costs satisfying $C_{\ttau_A}\leq C_{\ttau_B} \leq C_{\0}$. Then there exists a revenue refunding scheme $\rr_A$ such that $(\ttau_A, \rr_A)$ is user-favorable and achieves a lower wealth inequality measure than any user-favorable CPRR scheme $(\ttau_B, \rr_B)$ for any revenue refunds $\rr_B$, i.e., $W(\q(\ttau_A,\rr_A))\leq W(\q(\ttau_B,\rr_B))$.
\end{lemma}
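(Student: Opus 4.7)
The plan is to exploit Lemma~\ref{obs:ex-post-income}, which fully characterizes the ex-post income distribution induced by any user-favorable CPRR scheme as $q_g(\ttau,\r) = q_g(\0,\0) + \beta c_g$, subject to $c_g\geq 0$ and the budget identity $\sum_{g\in \G} c_g d_g = C_{\0}-C_{\ttau}$. Since $C_{\ttau_A} \leq C_{\ttau_B}$, the scheme $\ttau_A$ has a (weakly) larger refundable budget than $\ttau_B$: the surplus is exactly $C_{\ttau_B}-C_{\ttau_A}\geq 0$. The idea is to take the transfer profile already used by $(\ttau_B,\r_B)$ and augment it with a uniform per-user top-up funded by this surplus; the constant income transfer property will then deliver the desired inequality.

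Concretely, I would first fix an arbitrary user-favorable $(\ttau_B,\r_B)$ and invoke Lemma~\ref{obs:ex-post-income} to extract nonnegative transfer values $c_g^B$ with $\sum_{g\in \G} c_g^B d_g = C_{\0}-C_{\ttau_B}$. I would then set
\[
\lambda := \frac{C_{\ttau_B}-C_{\ttau_A}}{\sum_{g\in \G} d_g} \geq 0,
\qquad
c_g^A := c_g^B + \lambda,
\]
and define the candidate refund via $r_{A,g} := \mu^g(\ttau_A,\0) - \mu^g(\0,\0) + c_g^A$. The third step is to verify that $(\ttau_A,\r_A)$ is a valid user-favorable CPRR scheme. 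The user-favorable inequality $\mu^g(\ttau_A,\r_A) = \mu^g(\0,\0) - c_g^A \leq \mu^g(\0,\0)$ is immediate from $c_g^A \geq 0$. Budget balance requires $\sum_{g\in \G} r_{A,g} d_g$ to equal the toll revenue under $\ttau_A$; I would verify this by summing the definition of $r_{A,g}$ against $d_g$ and reconciling the resulting telescoping expression using the decomposition in Equation~\eqref{eq:total-cost-relation} applied to both $\ttau_A$ and $\0$, together with the budget identity for $c_g^B$ coming from Lemma~\ref{obs:ex-post-income} applied to $(\ttau_B, \r_B)$.

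Finally, a direct substitution yields
\[
q_g(\ttau_A,\r_A) = q_g(\0,\0) + \beta c_g^A = q_g(\ttau_B,\r_B) + \beta \lambda,
\]
so the ex-post income distribution under $(\ttau_A,\r_A)$ is a coordinate-wise shift of the one under $(\ttau_B,\r_B)$ by the nonnegative constant $\beta\lambda$. Applying the constant income transfer property of the wealth-inequality measure $W$ then gives $W(\q(\ttau_A,\r_A)) \leq W(\q(\ttau_B,\r_B))$, as desired.

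The main obstacle I anticipate is the budget-balance verification, since it is the one place where two distinct identities—the budget constraint for $c_g^B$ under $\ttau_B$ and the total-cost decomposition from Equation~\eqref{eq:total-cost-relation} under $\ttau_A$—must be combined, and the surplus $\lambda\sum_{g\in \G} d_g = C_{\ttau_B}-C_{\ttau_A}$ must cancel correctly. A secondary subtlety is the correct reading of the quantifiers: the construction produces $\r_A$ as a function of the given $\r_B$, so the statement is most naturally interpreted as ``for every user-favorable $(\ttau_B,\r_B)$ there exists such an $\r_A$,'' which is exactly what is needed to later chain Lemmas~\ref{obs:ex-post-income} and~\ref{obs:monotonicity-refunds} together in the proof of Theorem~\ref{thm:opt}.
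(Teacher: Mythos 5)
Your construction is exactly the paper's: starting from the transfers $c_g^B$ of an arbitrary user-favorable $(\ttau_B,\r_B)$, you add the uniform per-user surplus $\lambda = (C_{\ttau_B}-C_{\ttau_A})/\sum_{g\in\G} d_g$, check feasibility of the resulting $c_g^A$, and conclude via the constant income transfer property that the two ex-post distributions differ by the shift $\beta\lambda\mathbf{1}$. The proposal is correct and takes essentially the same approach as the paper (including the quantifier reading you flag, which matches how the paper itself argues).
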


\begin{proof}
We prove this claim by constructing for each revenue refunding scheme $\r_B$ under the tolls $\ttau_B$, a revenue refunding scheme $\r_A$ under the tolls $\ttau_A$ that achieves a lower wealth inequality. To this end, we first introduce some notation. Let $c_g^A$ and $c_g^B$ be non-negative transfers for each group $g$ as in Lemma~\ref{obs:ex-post-income}, where $\sum_{g \in \G} c_g^A d_g = C_{\0} - C_{\ttau_A}$ and $\sum_{g \in \G} c_g^B d_g = C_{\0} - C_{\ttau_B}$ must hold for the feasibility of the scheme.

Then, by Lemma~\ref{obs:ex-post-income} we have that the ex-post income of users in group $g$ can be expressed as: $q_{g}(\ttau_A, \r_A) = q_g(\0, \0) + \beta c_{g}^A$ and $q_{g}(\ttau_B, \r_B) = q_g(\0, \0) + \beta c_{g}^B$. Let $c_{g}^A = c_{g}^B + \frac{1}{\sum_{g \in \G} d_{g}}(C_{\ttau_B} - C_{\ttau_A})$. We now show that the refunding $\r_A$ is feasible by observing that 
\ifarxiv
\begin{align*}
    \sum_{g \in \G} c_{g}^A d_{g} &= \sum_{g \in \G} \left(c_{g}^B d_{g} + \frac{d_{g}}{\sum_{g \in \G} d_{g}}\left(C_{\ttau_B} - C_{\ttau_A}\right) \right), \\ &= \sum_{g \in \G} c_{g}^B d_{g} + C_{\ttau_B} - C_{\ttau_A}, \\
    &= C_{\0} - C_{\ttau_B} + C_{\ttau_B} - C_{\ttau_A} = C_{\0} - C_{\ttau_A},
\end{align*}
\else 
$\sum_{g \in \G} c_{g}^A d_{g} = \sum_{g \in \G} c_{g}^B d_{g} + C_{\ttau_B} - C_{\ttau_A} = C_{\0} - C_{\ttau_A}$.
\fi
Here we leveraged the fact that $\sum_{g \in \G} c_{g}^B d_{g} = C_{\0} - C_{\ttau_B}$. 

Under the above defined non-negative transfer $c_g^A$, we observe that the ex-post income distribution under the CPRR scheme $(\ttau_A, \r_A)$ is the same as the ex-post income distribution under the CPRR scheme $(\ttau_B, \r_B)$ plus a constant positive transfer, which is equal for all users. That is, we have $\q(\ttau_A, \r_A) = \q(\ttau_B, \r_B) + \lambda \mathbf{1}$ for $\lambda = \frac{\beta}{\sum_{g \in \G} d_{g}}(C_{\ttau_B} - C_{\ttau_A}) \geq 0$. Finally, by the constant income transfer property (Section~\ref{sec:model}) it follows that $W(\q(\ttau_A, \r_A)) \leq W(\q(\ttau_B, \r_B))$.
\end{proof}

The above result establishes a very natural property of any user-favorable revenue-refunding policy for which the total refund remaining after satisfying the user-favorable condition is $C_{\0} - C_{\ttau}$. In particular, a smaller total system cost yields a larger amount of remaining refund $C_{\0} - C_{\ttau}$, which, in turn, results in a greater degree of freedom in distributing these refunds to achieve an overall lower level of wealth inequality.

Finally, Theorem~\ref{thm:opt} follows directly by the monotonicity relation established in Lemma~\ref{obs:monotonicity-refunds}, and prescribes a two-step procedure to find a optimal CPRR scheme that is also user-favorable. In particular, choose a congestion pricing scheme $\ttau^*$ such that the total travel cost is minimized, i.e., $C_{\ttau^*}=C^*$. Next, select the revenue refunding scheme $\rr^*$ to be such that the expression $W(\q(\ttau^*,\rr^*))$ is minimized and $(\ttau^*,\rr^*)$ is user-favorable through an appropriate selection of transfers $c_g$. For more details on computing an optimal CPRR scheme, we refer to Sections~\ref{sec:optTolls} and~\ref{sec:opt-scheme-kkt}. \ifarxiv Now, let $(\ttau,\rr)$ be some user-favorable CPRR scheme. By definition of $\ttau^*$, it holds that $C_{\ttau^*}\leq C_{\ttau}$. Moreover, Lemma~\ref{obs:monotonicity-refunds} ensures that $W(\q(\ttau^*,\rr^*))\leq W(\q(\ttau,\rr))$ is satisfied. \fi

\ifarxiv
\paragraph{Significance of Theorem~\ref{thm:opt}.}
\else 
\paragraph{Significance of Theorem~\ref{thm:opt}}
\fi
Theorem~\ref{thm:opt} establishes that the optimal CPRR scheme is one that simultaneously achieves the highest efficiency whilst also reducing wealth inequality to the maximum degree possible among \ifarxiv the class of \fi all user-favourable CPRR schemes. This finding is counter-intuitive since equity and efficiency are typically at odds but Theorem~\ref{thm:opt} establishes that no such tradeoff between system efficiency and wealth inequality exists. The reason for this is that the remaining refund after satisfying the user-favourable condition increases as the total system cost decreases (Lemma~\ref{obs:monotonicity-refunds}), thereby giving greater leverage in the design of the refunding mechanism to achieve a lower wealth inequality. We further present numerical experiments in Appendix~\ref{apdx:numerical_main} to demonstrate the efficacy of optimal CPRR schemes and show that the benefits of CPRR can even be realized when users’ values of time are not exactly known to the central planner.

\ifarxiv
\else 
\vspace{-5pt}
\fi

\section{Computational and Equilibrium Properties of Optimal CPRR Schemes} \label{sec:opt-CPRR}  

Having established the existence of optimal CPRR schemes, we now show how such schemes can be computed and highlight additional equilibrium properties of these schemes. To this end, in Sections~\ref{sec:optTolls} and~\ref{sec:opt-scheme-kkt}, we provide a concrete recipe for computing the optimal CPRR scheme $(\ttau^*,\rr^*)$ for a commonly used wealth inequality measure, the \emph{discrete Gini coefficient}\ifarxiv, when users have finitely many incomes levels~\cite{WU20121273}. \else. \fi Then, in Section~\ref{sec:endogenous}, we consider the endogenous equilibrium setting, wherein users minimize a linear function of not only their travel times and tolls but also \ifarxiv their \fi refunds. In this setting, we show that the optimal CPRR scheme is robust to user coalitions, i.e., optimal CPRR schemes induce equilibria even when \emph{coalitions} of users endogenize the effect of refunds on their travel decisions.

\ifarxiv
\else 
\vspace{-5pt}
\fi

\subsection{Computing Optimal Tolls} \label{sec:optTolls}

The problem of computing optimal tolls $\ttau^*$ has been widely studied in the literature\ifarxiv~\cite{YANG20041,METRP,hearn1998solving}\else~\cite{YANG20041}\fi. In particular,~\cite{YANG20041} showed by analysing the first order KKT conditions of the minimum total system cost problem, presented in Section~\ref{sec:metrics}, that the optimal toll on each edge is given by 
\ifarxiv
\begin{align} \label{eq:optTolls}
    \tau_e^* = \left( \sum_{g \in \G} \frac{x_{e}^g}{x_e} v_g \right) x_e t_e'(x_e), \quad \forall e \in E,
\end{align}
where edge flows $\x$ and the group specific edge flows $\x^g$ correspond to the edge decomposition of the optimal path flow $\f$ of the minimum total system cost problem: $\f = \argmin_{\f\in \Omega}C(\f)$. Observe that the optimal tolls in \ifarxiv Equation~\eqref{eq:optTolls} \else \eqref{eq:optTolls} \fi to minimize the total system cost is akin to marginal cost prices, given by $x_e t_e'(x_e)$ for each edge $e$, when all users are homogeneous with the same values of time. In particular, note that the toll on each edge $e$ in \ifarxiv Equation~\eqref{eq:optTolls} \else \eqref{eq:optTolls} \fi is given by the the travel time externality, i.e., the marginal cost prices of users multiplied by the average value of time of users on that edge.
\else
$\tau_e^* = \left( \sum_{g \in \G} \frac{x_{e}^g}{x_e} v_g \right) x_e t_e'(x_e), \quad \forall e \in E$, where edge flows $\x$ and the group specific edge flows $\x^g$ correspond to the edge decomposition of the optimal path flow $\f$ of the minimum total system cost problem: $\f = \argmin_{\f\in \Omega}C(\f)$. Observe that the optimal tolls $\ttau^*$ to minimize the total system cost is akin to marginal cost prices, given by $x_e t_e'(x_e)$ for each edge $e$, when all users are homogeneous with the same values of time. In particular, note that the optimal toll on each edge $e$ is given by the the travel time externality, i.e., the marginal cost prices, of users multiplied by the average value of time of users on that edge.
\fi


\ifarxiv
\else 
\vspace{-5pt}
\fi

\subsection{Computing Optimal Revenue Refunds \ifarxiv for the Discrete Gini Coefficient \fi} \label{sec:opt-scheme-kkt}

Given the method to compute optimal tolls $\ttau^*$, as elucidated in the previous section, we now focus our attention on deriving the optimal revenue refunding policy $\rr^*$ for a commonly used wealth inequality measure, the discrete Gini coefficient. In particular, we show in this section that the optimal revenue refunding scheme for the discrete Gini coefficient measure corresponds to a natural max-min refunding scheme wherein the refunds are given to users belonging to the lowest income groups.

We first present the discrete Gini coefficient measure and discuss some of its properties.

\begin{definition} [Discrete Gini Coefficient~\cite{WU20121273}]
Let the mean income corresponding to the income distribution $\q$ with a demand vector $\ddb = \{d_g: g \in G \}$ be $\Delta(\q) = \frac{\sum_{g \in \G} q_g d_g}{\sum_{g \in \G} d_g}$. Then, the discrete Gini coefficient $W$ is given by $W(\q) = \frac{1}{2 \left(\sum_{g \in \G} d_g \right)^2 \Delta(\q)} \sum_{g_1, g_2 \in \G} d_{g_1} d_{g_2} |q_{g_1} - q_{g_2}|$.
\end{definition}

\ifarxiv
A few comments about the discrete Gini coefficient as a wealth inequality measure are in order. First, the discrete Gini coefficient is zero if all users have the same income, i.e., there is perfect equality in society. Next, due to the absolute value of the difference between user incomes in the numerator, the discrete Gini coefficient is larger if the dispersion of incomes between different user groups is greater. Finally, note here that we do not write the discrete Gini coefficient measure as a function of the vector of demands $\ddb = \{d_g: g \in G \}$ since we assume that user demands are fixed in this work.

\begin{remark}
The discrete Gini coefficient measure satisfies the scale independence, regressive and progressive tax properties required for it to be a valid wealth inequality measure $W$. To see this, note that scaling all incomes by a positive constant $\lambda>0$ scales both the numerator and denominator by a factor of $\lambda$, which results in scale independence. The proof of the discrete Gini coefficient satisfying the regressive tax property is presented in \ifarxiv Appendix~\ref{apdx:gini-regressive}\else the extended version of our work~\cite{jalota-cprr}\fi. The discrete Gini coefficient satisfies the progressive tax property following a similar argument. Thus, it is a valid wealth-inequality measure.
\end{remark}
\else
A few comments about the discrete Gini coefficient as a wealth inequality measure are in order. First, the discrete Gini coefficient satisfies the scale independence and constant income transfer properties (presented in Section~\ref{sec:metrics}) required for it to be a valid wealth inequality measure and we present a proof of this claim in the extended version of our work~\cite{jalota-cprr}. Next, the discrete Gini coefficient is zero if all users have the same income, i.e., there is perfect equality in society. Furthermore, due to the absolute value of the difference between user incomes in the numerator, the discrete Gini coefficient is larger if the dispersion of incomes between different user groups is greater. Finally, note that we do not write the discrete Gini coefficient measure as a function of the vector of demands $\ddb = \{d_g: g \in G \}$ as we assume that user demands are fixed in this work.
\fi

For the discrete Gini coefficient, we now present a mathematical program for computing the revenue refunding policy $\rr^*$. To this end, we first observe by Lemma~\ref{obs:ex-post-income} that for any user-favorable CPRR scheme $(\ttau^*, \rr^*)$ each user's ex-post income is given by $q_g(\ttau^*, \rr^*) = q_g(\0, \0) + c_g$ (where, for ease of exposition, we let $\beta = 1$) for some $c_g \geq 0$, where $\sum_{g \in \G} c_g d_g = C_{\0} - C_{\ttau^*}$. Thus, the choice of the optimal revenue refunds $\rr^*$ can be reduced to computing the optimal transfers $c_g$. In particular, we formulate the computation of the optimal transfers $c_g$ to minimize the discrete Gini coefficient through the following optimization problem:
\ifarxiv
\begin{mini!}|s|[2]                   
    {c_{g}}                              
    {W(\q(\0, \0) + \cc) 
    , \label{eq:Gini-obj}}   
    {\label{eq:Eg001}}             
    {}                                
    \addConstraint{\sum_{g \in \G} c_{g} d_{g}}{= C_{\0} - C_{\ttau^*}, \label{eq:Gini-con1}}    
    \addConstraint{c_{g}}{ \geq 0, \text{ for all } g \in \G, \label{eq:pareto-imp-refund}}
\end{mini!}
\else 
\[ \min_{c_g \geq 0, \forall g \in \G} W(\q(\0, \0) + \cc) \text{ s.t. } \sum_{g \in \G} c_{g} d_{g} = C_{\0} - C_{\ttau^*},  \]
\fi
where \ifarxiv we denote $\Tilde{q}_g = q_{g}(\0, \0)$ for conciseness, $\cc = \{c_g: g \in \G \}$, and $\q(\0, \0) + \mathbf{c}$ represents the income distribution of users after receiving the revenue refunds. \else $\cc = \{c_g: g \in \G \}$ and $\q(\0, \0) + \mathbf{c}$ represents the income distribution of users after receiving the revenue refunds. \fi Furthermore, noting that $\Delta(\q(\0, \0) + \mathbf{c}) = \frac{C_{\0} - C_{\ttau^*} + \sum_{g \in \G} q_g(\0, \0) d_g}{\sum_{g \in \G} d_g}$ is a fixed quantity, the above problem can be solved via a linear program (see Chapter 6 in ~\cite{bisschop2016aimms}). \ifarxiv The optimal revenue refunding policy $\rr^*$ corresponding to the above optimization problem results in a natural max-min outcome. In particular, Algorithm~\ref{alg:greedy} describes the revenue refunding process, wherein users in the lowest income groups, denoted by $\Gmin$, are provided refunds until their incomes equal that of the second lowest income groups. This process is repeated until all the collected revenue is completely exhausted and is depicted in Figure~\ref{fig:Optimal-Refunds}. \else The optimal revenue refunding policy $\rr^*$ corresponding to the above optimization problem results in a natural max-min outcome and we present further formalism and a proof of this claim in the extended version of our paper~\cite{jalota-cprr}. In particular, users in the lowest income groups are provided refunds until their incomes equal that of the second lowest income groups, and this process is repeated until all the revenue refunds are completely exhausted. We note here that this greedy process of refunding revenues to the lowest income groups is reminiscent of Rawl's difference principle of giving the greatest benefit to the most disadvantaged groups of society~\cite{rawls2020theory}. \fi

\ifarxiv

\begin{algorithm} 
\SetAlgoLined
\SetSideCommentRight
\SetNoFillComment 
\SetKwInOut{Input}{Input}\SetKwInOut{Output}{Output}
\Input{Untolled exogenous equilibrium ex-post income distribution: $\Tilde{\q}:=\q(\0, \0)$, Total system cost $C_{\0}$ for the untolled setting, Total system cost $C_{\ttau}$ under the tolls $\tau$.}
\Output{Nonnegative transfers $\c = \{c_g: g \in \G \}$}
$Z \leftarrow C_{\0} - C_{\ttau}$ \quad  \tcc{Total refund assigned}  
$\c \leftarrow \0$  \quad\tcc{initialize transfers}
\While{$Z>0$}{
  $\qmin \leftarrow \min_{g \in \G}  \Tilde{q}_g$ \quad \tcc{Lowest income}
  $\Gmin \leftarrow \left\{g\in \G \middle| \tilde{q}_g=q_{\text{min}}\right\}$ \quad \tcc{Lowest income groups}
  $\qnext \leftarrow \min_{g \in \G \setminus \Gmin} \Tilde{q}_g$ \quad \tcc{Second-lowest income}
  $Y \leftarrow \min \left\{ \qnext - \qmin, \frac{Z}{\sum_{g \in \Gmin} d_{g}} \right\}$ \quad \tcc{Additional income per person}
  $\tilde{q}_g \leftarrow \tilde{q}_g + Y, \quad \forall g \in \Gmin$ \quad \tcc{Update income}
  $c_g \leftarrow c_g + Y, \quad \forall g \in \Gmin$ \quad \tcc{Update transfer}
  $Z \leftarrow Z - Y \sum_{g \in \Gmin} d_g$ \quad \tcc{Update remaining refund}
  }
\caption{Max-Min Revenue Refunding}
\label{alg:greedy}
\end{algorithm}
The following proposition establishes that the optimal vector of non-negative transfers corresponding to the solution to the optimization problem given by~\eqref{eq:Gini-obj}-\eqref{eq:pareto-imp-refund} is equal to the vector of transfers $\cc$ computed through Algorithm~\ref{alg:greedy}.

\begin{proposition} [Optimal Revenue Refunding Scheme] \label{prop:opt-rev-refunds-discrete-gini}
The vector of transfers $\cc = \{c_g: g \in \G \}$ output by Algorithm~\ref{alg:greedy} is equal to the optimal vector of transfers computed through the solution to the optimization problem given by~\eqref{eq:Gini-obj}-\eqref{eq:pareto-imp-refund}.
\end{proposition}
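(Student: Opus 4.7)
The plan is to reduce the objective \eqref{eq:Gini-obj} to a simpler functional, identify the algorithm's output as the classical water-filling allocation, and establish optimality via a Pigou--Dalton exchange argument that exploits the Schur-convexity of the Gini numerator.

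First, I would observe that the denominator
\[
\Delta(\q(\0,\0) + \c) \;=\; \frac{\sum_{g \in \G} q_g(\0,\0)\, d_g + \sum_{g \in \G} c_g d_g}{\sum_{g \in \G} d_g}
\]
is constant on the feasible set because constraint \eqref{eq:Gini-con1} fixes $\sum_{g \in \G} c_g d_g = C_{\0} - C_{\ttau^*}$. Hence minimizing \eqref{eq:Gini-obj} is equivalent to minimizing
\[
F(\c) := \sum_{g_1, g_2 \in \G} d_{g_1} d_{g_2} \bigl|\tilde{q}_{g_1} + c_{g_1} - \tilde{q}_{g_2} - c_{g_2}\bigr|
\]
over the same polyhedron, where $\tilde{q}_g := q_g(\0,\0)$. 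Writing $\hat{q}_g := \tilde{q}_g + c_g$, the constraints become $\hat{q}_g \geq \tilde{q}_g$ and $\sum_{g \in \G} \hat{q}_g d_g = \sum_{g \in \G} \tilde{q}_g d_g + C_{\0} - C_{\ttau^*}$.

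Next, I would show that Algorithm~\ref{alg:greedy} produces the water-filling allocation: tracing the loop, there exists a threshold $\theta^\star$ such that $\hat{q}_g^{\text{alg}} = \max(\tilde{q}_g, \theta^\star)$ for every $g \in \G$, with $\theta^\star$ uniquely determined by $\sum_{g \in \G} \max(\theta^\star - \tilde{q}_g, 0)\, d_g = C_{\0} - C_{\ttau^*}$. To prove optimality, suppose $\c^* \neq \c^{\text{alg}}$ is also feasible. Since both allocations have the same weighted income sum, there must exist groups $g_1, g_2$ with $\hat{q}_{g_1}^* < \hat{q}_{g_1}^{\text{alg}}$ and $\hat{q}_{g_2}^* > \hat{q}_{g_2}^{\text{alg}}$; the water-filling structure forces $\hat{q}_{g_1}^{\text{alg}} = \theta^\star$, so $\hat{q}_{g_1}^* < \theta^\star \leq \hat{q}_{g_2}^*$. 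Transferring an amount of refund equal to $\epsilon d_{g_2}$ per user from $g_2$ and $\epsilon d_{g_1}$ per user to $g_1$ preserves the budget and, for $\epsilon$ small enough, non-negativity. A case analysis on each pair $(g_k, g_l)$ in $F$ verifies that this mean-preserving exchange strictly decreases $F$ because $\hat{q}_{g_1}^* < \hat{q}_{g_2}^*$; iterating yields a monotonically decreasing sequence of feasible allocations converging to $\c^{\text{alg}}$, so $F(\c^{\text{alg}}) \leq F(\c^*)$.

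The main obstacle is carrying out the case analysis cleanly when $d_{g_1} \neq d_{g_2}$: for a third group $g_k$ with income strictly between $\hat{q}_{g_1}^*$ and $\hat{q}_{g_2}^*$, both contributions $|\hat{q}_{g_k} - \hat{q}_{g_1}^*|$ and $|\hat{q}_{g_k} - \hat{q}_{g_2}^*|$ decrease, while for third groups outside this interval the two changes cancel at first order, and one must check that the asymmetric per-user transfer still yields a net decrease. A cleaner route that I would prefer is to pass to the ``expanded'' income vector in which each $\hat{q}_g$ is repeated $d_g$ times, so that $F$ becomes the classical Schur-convex functional $\sum_{i,j} |x_i - x_j|$. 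The water-filling allocation is majorized by every other feasible income vector satisfying $\hat{q}_g \geq \tilde{q}_g$ with the same mean, and majorization theory then immediately yields $F(\c^{\text{alg}}) \leq F(\c^*)$ for every feasible $\c^*$, completing the proof.
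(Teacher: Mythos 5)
Your proof is correct in substance, but it takes a genuinely different route from the paper's. The paper writes down the Lagrangian of the reduced linear program, reads off the first-order (KKT) stationarity conditions for each $c_g$ (with a multiplier $\lambda$ for the budget constraint and $l_g\ge 0$ for non-negativity), and deduces that every group with $c_g>0$ must minimize the quantity $2\sum_{g_2:\,\hat q_{g_2}<\hat q_g} d_{g_2}-2\sum_{g_2:\,\hat q_{g_2}>\hat q_g} d_{g_2}$, i.e.\ must sit at the common lowest post-transfer income level --- which is exactly the max-min structure of Algorithm~\ref{alg:greedy}. You instead characterize the algorithm's output as a water-filling allocation $\hat q_g=\max(\tilde q_g,\theta^\star)$ and prove optimality by a primal exchange/majorization argument: any other feasible allocation admits a mean-preserving, feasibility-preserving transfer from a group above $\theta^\star$ to one below it that strictly decreases the Gini numerator, and in the expanded (each group repeated $d_g$ times) representation the water-filling vector is majorized by every feasible competitor, so Schur-convexity of $\sum_{i,j}|x_i-x_j|$ finishes the job. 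Your approach buys rigor where the paper is informal --- the paper differentiates $|{\cdot}|$ as if the post-transfer income ordering were fixed, glossing over non-differentiability at ties, whereas your argument needs no smoothness --- and it yields uniqueness of the optimum essentially for free via the strict decrease. The paper's duality route, in exchange, is shorter and directly exhibits the optimal dual certificate. Two small repairs to your write-up: the per-user transfer amounts are swapped (to preserve $\sum_g c_g d_g$ you must take $\eps d_{g_1}$ from each user of $g_2$ and give $\eps d_{g_2}$ to each user of $g_1$, not the reverse), and the closing ``iterate until convergence'' step is unnecessary --- since the feasible set is compact an optimum exists, and your exchange shows no point other than the algorithm's output can be optimal.
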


For a proof of Proposition~\ref{prop:opt-rev-refunds-discrete-gini}, see \ifarxiv Appendix~\ref{apdx:opt-rev-discrete-gini}\else the extended version of our paper~\cite{jalota-cprr}\fi. We note here that the greedy process of revenue refunding elucidated in Proposition~\ref{prop:opt-rev-refunds-discrete-gini} is reminiscent of Rawl's difference principle of giving the greatest benefit to the most disadvantaged groups of society~\cite{rawls2020theory}.

\begin{figure}[!ht]
      \centering
      \includegraphics[width=0.5\linewidth]{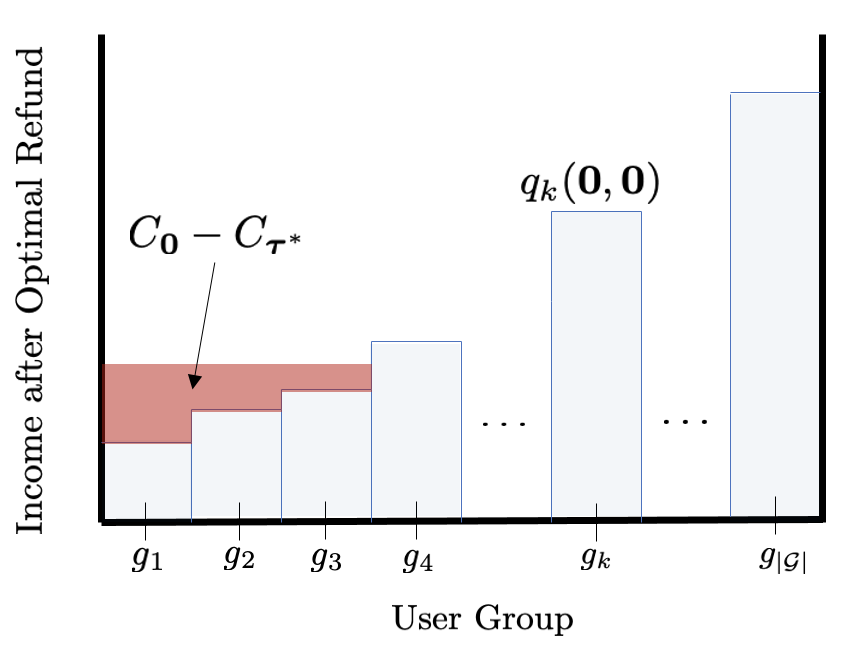}
      \caption{{\small \sf The optimal solution of Algorithm~\ref{alg:greedy} is analogous to a max-min allocation. The height of the grey bars represents the ex-post income of users under the untolled outcome, while the red region denotes the amount of transfer $c_g$ that is given to the different user groups. The total sum of the transfers given to users adds up exactly to $C_{\0} - C_{\ttau^*}$ under the set of tolls $\ttau^*$.}}
      \label{fig:Optimal-Refunds}
   \end{figure}

\fi

\ifarxiv
\else 
\vspace{-5pt}
\fi

\subsection{\ifarxiv Optimal \fi CPRR Schemes and Endogenous Equilibria} \label{sec:endogenous}

Thus far, we have considered the setting when all users minimize a linear function of their travel times and tolls without considering refunds. In this section we consider the setting of the \emph{endogenous} equilibrium, wherein users minimize a linear function of not only their travel times and tolls but also refunds. In particular, we consider two equilibrium notions (without and with user coalitions) in this endogenous setting and show that the optimal CPRR scheme induces equilibria in both these settings. To this end, we first consider endogenous equilibria without user coalitions and show that any endogenous equilibrium is an exogenous equilibrium. Next, in the setting of endogenous equilibria \emph{with} user coalitions, we show that while, in general, endogenous equilibria do not coincide with exogenous equilibria, the optimal CPRR scheme is robust to user coalitions, i.e., any exogenous equilibrium induced by an optimal CPRR scheme is also an endogenous equilibrium.

\subsubsection{Endogenous Equilibria without User Coalitions}

\ifarxiv

We begin by considering the setting of an endogenous equilibrium without user coalitions and show that endogenous and exogenous equilibria are \ifarxiv equivalent in the absence of user coalitions. \else equivalent. \fi In this setting without user coalitions, the definition of an exogenous equilibrium (Definition~\ref{def:exo-eq}) can be readily extended to the setting when users additionally account for refunds in their travel cost minimization, as is elucidated by the following definition.

\begin{definition} [Endogenous Equilibrium without Coalitions] \label{def:endo-eq-wo-coalitions}
For a given CPRR scheme  $(\ttau, \r)$, a path flow pattern $\f$ is an endogenous equilibrium without coalitions if for each group $g\in \G$ it holds that $f_{P,g}>0$ for some path $P \in \P_{g}$ if and only if
\ifarxiv
\begin{align*}
    \mu_{P}^g(\f, \ttau,\r) \leq \mu_{Q}^g(\f, \ttau, \r), \text{ for all } Q \in \P_{g}.
\end{align*}
\else
$\mu_{P}^g(\f, \ttau,\r) \leq \mu_{Q}^g(\f, \ttau, \r), \text{ for all } Q \in \P_{g}.$ 
\fi
We say that such an $\f$ is an endogenous $(\ttau, \r)$-equilibrium without coalitions.
\end{definition}
We now show that any exogenous equilibrium is also an endogenous equilibrium without coalitions and vice versa, i.e., the two equilibrium concepts are equivalent. This result follows since we are in the setting of a non-atomic congestion game, wherein users are infinitesimal, and thus a unilateral deviation by any user will not influence their overall refunds since the flow of users in the network remains unchanged and the tolls are fixed.

\begin{observation} [Equivalence between Exogenous Equilibria and Endogenous Equilibria without Coalitions] \label{rmk:exo-equals-endo}
Any exogenous $\ttau$-equilibrium flow $\f$ corresponds to an endogenous $(\ttau, \r)$-equilibrium without coalitions for any refunding scheme $\r$. 
\end{observation}

\ifarxiv

\begin{proof}
To see this, observe by the definition of an exogenous $\ttau$-equilibrium flow $\f$ that $f_{P,g}>0$ for some path $P \in \P_{g}$ if and only if
\begin{align*}
    \mu_{P}^g(\f, \ttau,\0) \leq \mu_{Q}^g(\f, \ttau, \0), \text{ for all } Q \in \P_{g}.
\end{align*}
Subtracting both sides of the equation by $r_g$ (which cannot be influenced by users through a unilateral deviation in a non-atomic setting), we obtain that $f_{P,g}>0$ for some path $P \in \P_{g}$ if and only if
\begin{align*}
    \mu_{P}^g(\f, \ttau,\0) - r_g \leq \mu_{Q}^g(\f, \ttau, \0) - r_g, \text{ for all } Q \in \P_{g}, \\
    \implies \mu_{P}^g(\f, \ttau,\r) \leq \mu_{Q}^g(\f, \ttau, \r), \text{ for all } Q \in \P_{g},
\end{align*}
i.e., $f_{P,g}>0$ for some path $P \in \P_{g}$ if and only if $\mu_{P}^g(\f, \ttau,\r) \leq \mu_{Q}^g(\f, \ttau, \r), \text{ for all } Q \in \P_{g}$. In other words, the flow $\f$ corresponds to an endogenous $(\ttau, \r)$-equilibrium without coalitions. Observe that to establish that any endogenous $(\ttau, \r)$-equilibrium $\f$ without coalitions is an exogenous $\ttau$-equilibrium, we can simply reverse the above arguments, which establishes our claim.
\end{proof}
\else
For a proof of Observation~\ref{rmk:exo-equals-endo}, we refer to the extended version of our paper~\cite{jalota-cprr}. \fi \ifarxiv Observation~\ref{rmk:exo-equals-endo} establishes an equivalence between exogenous equilibria and endogenous equilibria without coalitions, and, in particular, shows that any exogenous equilibrium flow $\f$ induced by an optimal CPRR scheme is also an endogenous equilibrium flow without coalitions. \fi

\else 

We begin by considering the setting of an endogenous equilibrium without user coalitions and show that endogenous and exogenous equilibria are equivalent. In this setting without user coalitions, the definition of an exogenous equilibrium (Definition~\ref{def:exo-eq}) can be readily extended to the setting when users additionally account for refunds in their travel cost minimization, as is elucidated by the following definition. In particular, for a given CPRR scheme  $(\ttau, \r)$, a path flow pattern $\f$ is an endogenous $(\ttau, \r)$-equilibrium without coalitions if for each group $g\in \G$ it holds that $f_{P,g}>0$ for some path $P \in \P_{g}$ if and only if $\mu_{P}^g(\f, \ttau,\r) \leq \mu_{Q}^g(\f, \ttau, \r), \text{ for all } Q \in \P_{g}$. 

Given this notion of an endogenous equilibrium without coalitions, we show in the extended version of our paper~\cite{jalota-cprr} that any exogenous equilibrium is also an endogenous equilibrium without coalitions and vice versa, i.e., the two equilibrium concepts are equivalent. This result follows naturally since we are in the setting of a non-atomic congestion game, wherein users are infinitesimal, and thus a unilateral deviation by any user will not influence their overall refunds since the flow of users remains unchanged and the tolls are fixed.
\fi

\subsubsection{Endogenous Equilibria with User Coalitions}

In this section, we consider the stronger endogenous equilibrium notion wherein \emph{coalitions} of users minimize a linear function of not only their travel times and tolls but also refunds. In particular, we consider the setting wherein each user group is treated as a coalition. Note that unlike the setting without coalitions, in this setting, a change in the strategy of the entire group, i.e., the flow sent on each feasible path, will likely result in a change in the overall network flow and correspondingly the revenues obtained by each user in the group. In the presence of user coalitions, we show that while the notions of exogenous equilibrium and endogenous equilibrium with coalitions do not agree in general, any exogenous equilibrium induced by an optimal CPRR scheme is also an endogenous equilibrium with coalitions.

To this end, we begin by introducing the notion of an endogenous equilibrium with coalitions.

\begin{definition} [Endogenous Equilibrium with Coalitions] \label{def:endo-eq2}
Let $(\ttau,\rr)$ be a CPRR scheme, and $\f$ be a flow pattern. Then $\f$ is an endogenous $(\ttau,\rr)$-equilibrium with coalitions if for each group $g\in \G$, every path $P\in \P_{g}$ such that $f_{P,g}>0$, and any flow pattern $\f'$ such that 
\[\f'_{P',g'}=\f_{P',g'}, \text{ for all } g'\in \G\setminus \{g\}, P'\in \P_{g'},\]
it holds that 
\begin{align*}
    \mu_{P}^g(\f, \ttau,\rr(\f,\ttau)) \leq \mu_{Q}^g(\f', \ttau, \rr(\f',\ttau)), \text{ for all } Q \in \P_{g}.
\end{align*}
Here $\f'$ denotes a flow that results from $\f$ where exactly one group changes its path assignment. 
\end{definition}

A few comments about the above definition are in order. First, it is clear that the above definition of endogenous equilibrium is a stronger notion than the standard Nash equilibrium considered in non-atomic congestion games. This is because every endogenous equilibrium is a Nash equilibrium when users minimize their travel costs including refunds but not every Nash equilibrium is necessarily an endogenous equilibrium. 

Next, we restrict the set of possible coalitions to those corresponding to strategies for a given user group. This is often reasonable, since users belonging to similar income levels that make similar trips, i.e., travel between the same O-D pair, are more likely to be socially connected with each other and share travel information as compared to users across groups. As a result, we do not consider the setting of equilibrium formation that is robust to any arbitrary set of coalitions~\cite{ANDELMAN2009289}, and defer this as an interesting direction for future research. 

Furthermore, we can view the endogenous equilibrium as a non-atomic analogue of the atomic equilibrium setting, wherein each group $g$ controls a flow of $d_g$. In atomic settings, each group only sends its flow on one path, whereas in the non-atomic setting, the flows can be dispersed across multiple paths with equal travel costs.

\paragraph{Endogenous Equilibria with Coalitions Differ from Exogenous Equilibria} \label{sec:endo-exo-differ}
We first show that, in general, the endogenous equilibria with coalitions and exogenous equilibria are not the same. To this end, we first recall that an exogenous equilibrium only depends on the tolling scheme $\ttau$ and is completely independent of the refunds $\rr$. On the other hand, since users take into account revenue refunds in the case of the endogenous equilibrium, each user must know the refunding policy $\rr$ to reason about their strategies when making travel decisions. In particular, each user (and coalition of users within a group) must be able to reason about how a change in their strategy, i.e., the path(s) on which they travel, will change the total amount of refund they receive, and in effect their travel cost. Thus, for this section, we restrict our attention to revenue-refunding schemes resulting from \ifarxiv Algorithm~\ref{alg:greedy}.\else the max-min refunding policy described in Section~\ref{sec:opt-scheme-kkt} \fi That is, users are given refunds through a process analogous to a max-min allocation. We now construct a counterexample to show that an exogenous $\ttau$-equilibrium flow may no longer be an equilibrium when users take into account refunds in their travel cost minimization.

\begin{proposition} [Non-Equivalence of Equilibria] \label{prop:two-link-eq-countereg}
There exists a setting with (i) a two-edge parallel network, (ii) three income classes, and (iii) tolls $\ttau$, such that the induced exogenous $\ttau$-equilibrium is not an endogenous $(\ttau,\rr)$-equilibrium with coalitions, where $\rr$ results from the max-min revenue refunding policy \ifarxiv(Algorithm~\ref{alg:greedy}).\else in Section~\ref{sec:opt-scheme-kkt}. \fi
\end{proposition}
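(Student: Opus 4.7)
The plan is to prove the proposition by exhibiting an explicit counterexample that satisfies the hypotheses of the statement but violates the endogenous equilibrium condition of Definition~\ref{def:endo-eq2} under the max-min refund scheme of Algorithm~\ref{alg:greedy}.

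The instance I would construct is a two-edge parallel network with source $s$ and sink $t$ connected by edges $e_1, e_2$ carrying the affine latencies $t_{e_1}(x) = t_{e_2}(x) = x$. Three user groups share this network with unit demand each, values-of-time $v_1 > v_2 > v_3 > 0$, and ex-ante incomes chosen so that $q_3^0$ is far below $q_2^0$. The tolls are $\tau_{e_1} = 0$ and $\tau_{e_2} = \tau > 0$. Concrete parameters such as $v_1 = 3, v_2 = 2, v_3 = 1, \tau = 1$ and $q_1^0 = 100, q_2^0 = 50, q_3^0 = 10$ suffice. A short direct calculation would verify that the exogenous $\ttau$-equilibrium $\f$ places all of group~1 on the tolled edge $e_2$, all of group~3 on the free edge $e_1$, and splits group~2 between the two edges so as to enforce its indifference, yielding $x_{e_1} = 7/4$ and $x_{e_2} = 5/4$.

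Next, I would apply Algorithm~\ref{alg:greedy} to compute the refund. Because $\tilde{q}_3$ lies far below $\tilde{q}_2$, the max-min allocation directs the entire transfer budget $C_{\0}-C_{\ttau}$ to group~3 alone, giving $c_3(\f) > 0$, $c_1(\f) = c_2(\f) = 0$, and per-group refunds $r_g(\f) = \mu^g(\f,\ttau,\0) - \mu^g(\0,\0) + c_g(\f)$. In particular, $r_1(\f)$ reduces to the pure compensation term $\mu^1(\f,\ttau,\0) - \mu^1(\0,\0)$. To define the refund on a generic (non-equilibrium) flow $\f'$ I would extend Algorithm~\ref{alg:greedy} in the natural way, by substituting the flow-weighted mean cost $\bar{\mu}^g(\f',\ttau,\0) := d_g^{-1}\sum_P f'_{P,g}\mu_P^g(\f',\ttau,\0)$ for $\mu^g(\ttau,\0)$ and $C(\f')$ for $C_{\ttau}$.

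The heart of the proof is to exhibit a profitable coalition deviation by group~1: let $\f'$ be obtained from $\f$ by shifting a mass $\delta>0$ of group~1's flow from $e_2$ to $e_1$, keeping all other groups' strategies unchanged. The bare travel cost on $e_2$ drops by $v_1 \delta$ because of the reduced congestion. The compensation $\bar{\mu}^1(\f',\ttau,\0) - \mu^1(\0,\0)$, and hence $r_1(\f')$, also drops, but by a strictly smaller amount; at leading order it decreases by $v_1 \delta - \delta\cdot(\mu_{e_1}^1(\f) - \mu_{e_2}^1(\f))$, because the mass $\delta$ that moves to $e_1$ now incurs the strictly higher cost there. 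Consequently, the net per-user cost of a group-1 member still on $e_2$ strictly decreases under the deviation, i.e., $\mu_{e_2}^1(\f',\ttau,\r(\f',\ttau)) < \mu_{e_2}^1(\f,\ttau,\r(\f,\ttau))$. Choosing $P=Q=e_2$ in Definition~\ref{def:endo-eq2} yields a violation, so $\f$ is not an endogenous $(\ttau,\r)$-equilibrium.

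The main obstacle is ensuring the refund computation on the deviated flow behaves predictably. By choosing the income gap between $\tilde{q}_3$ and $\tilde{q}_2$ large enough, the max-min allocation in Algorithm~\ref{alg:greedy} continues to concentrate the full transfer budget on group~3 for every sufficiently small $\delta$, so $c_1(\f')=0$ and $r_1(\f')$ reduces to the compensation term alone. The remaining argument reduces to comparing the linear-in-$\delta$ decrease in $\mu_{e_2}^1(\f')$ to the linear-in-$\delta$ decrease in $\bar{\mu}^1(\f',\ttau,\0)$; because at the exogenous equilibrium group~1 strictly prefers $e_2$ to $e_1$, we have $\mu_{e_1}^1(\f) > \mu_{e_2}^1(\f)$, which makes the former strictly larger, thereby guaranteeing a robust strict inequality and completing the counterexample.
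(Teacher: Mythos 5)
Your construction is genuinely different from the paper's, both in the instance and in the mechanism of the profitable deviation, and most of the calculation checks out: the exogenous equilibrium you describe ($x_{e_1}=7/4$, $x_{e_2}=5/4$, group~1 strictly preferring the tolled edge) is correct, and with your convention the net cost of a group-1 user remaining on $e_2$ after the deviation is $\mu^1(\0,\0)+\mu^1_{e_2}(\f')-\bar{\mu}^1(\f')<\mu^1(\0,\0)$, which does violate Definition~\ref{def:endo-eq2}. The paper instead has the \emph{middle}-income group deviate en masse onto the tolled edge: this inflates the collected toll revenue, and under the max-min allocation part of the enlarged surplus flows back to group $M$ (whose compensated income ties with the lowest group), strictly lowering its cost. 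Your deviation, by contrast, never touches the max-min surplus at all ($c_1=0$ before and after); the profit comes entirely from the compensation term $\bar{\mu}^1(\f',\ttau,\0)-\mu^1(\0,\0)$, via an infinitesimal ``sacrificed'' mass on the worse edge that makes the group-average cost exceed the on-path cost of the majority.

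That is exactly where the gap lies. Algorithm~\ref{alg:greedy} and the refund formula $r_g=\mu^g(\ttau,\0)-\mu^g(\0,\0)+c_g$ are only defined for flows at which all users in a group incur the same cost; the paper's own counterexample is deliberately engineered so that every group occupies a single path both before and after the deviation, so $\r(\f',\ttau)$ is unambiguous. Your deviation splits group~1 across two paths of unequal cost, so you must \emph{choose} how the compensation term is evaluated on a non-equilibrium flow, and your flow-weighted mean $\bar{\mu}^g$ is load-bearing: under the equally natural convention of compensating to the minimum cost over used paths, $r_1(\f')=\mu^1_{e_2}(\f')-\mu^1(\0,\0)$ and the net cost on $e_2$ is exactly $\mu^1(\0,\0)$, i.e., the deviation yields zero gain and your counterexample evaporates. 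Since the proposition is a claim about the refund policy ``resulting from Algorithm~\ref{alg:greedy},'' an argument that holds only under one particular out-of-equilibrium extension of that policy does not establish the statement as the paper intends it. To close the gap you would either need to justify the flow-weighted-mean extension as part of the model, or (as the paper does) construct a deviation in which each group stays on a single path and the profit is driven by the unambiguous max-min surplus term rather than by the compensation term.
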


For a proof of Proposition~\ref{prop:two-link-eq-countereg}, see \ifarxiv Appendix~\ref{eq:pfprop4}\else the extended version of our paper~\cite{jalota-cprr}\fi. The above proposition is quite natural, since low-income users may take routes that were previously unaffordable when taking into account revenue refunds in their route selection process.

\paragraph{Endogenous Equilibria with Coalitions Coincide with Exogenous Equilibria at the Optimal Solution}

While Proposition~\ref{prop:two-link-eq-countereg} indicates that, in general, the exogenous equilibria and endogenous equilibria with coalitions do not coincide, we now establish that any exogenous equilibrium induced by an optimal user-favorable
CPRR scheme $(\ttau^*, \rr^*)$, where the refund satisfies a mild condition, is also an endogenous equilibrium. In particular, we have the following lemma:

\begin{lemma}
[Optimal CPRR Scheme under Endogenous Equilibria] \label{cor:sys-par-opt-endo}
Let $(\ttau^*, \rr^*)$ be an optimal user-favorable CPRR scheme under the exogenous equilibrium model and let $\f^*$ be its exogenous equilibrium. In addition, let $\f^0$ be a $\0$-equilibrium and $C^*$ be the minimum total system cost. Further, suppose that the refunding scheme $\rr^*$ is defined as $r^*_g:=\mu^g(\ttau^*, \0)-\mu^g(\0, \0)+c_g(C^*)$, where the non-negative transfer $c_g(C(\f))$ for each group $g$ is monotonically non-increasing in the total system cost $C(\f)$ for a given flow $\f$.
Then $\f^*$ is also an endogenous $(\ttau^*, \rr^*)$-equilibrium with coalitions.
\end{lemma}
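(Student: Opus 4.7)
The plan is to show that under the optimal CPRR scheme, the exogenous equilibrium cost for each user coincides with a ``floor'' value that no coalition deviation can go below. The argument will combine the algebraic identity given by the refund formula with the monotonicity property of $c_g$ in the total system cost.

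First, I would compute the travel cost incurred by any user in group $g$ at $\f^*$ with refund. Since $\f^*$ is an exogenous $\ttau^*$-equilibrium, every path $P$ with $f^*_{P,g}>0$ satisfies $\mu^g_P(\f^*, \ttau^*, \0) = \mu^g(\ttau^*, \0)$. Using the definition of $r^*_g$, this gives
\[
\mu^g_P(\f^*, \ttau^*, \r^*) \;=\; \mu^g(\ttau^*, \0) - r^*_g \;=\; \mu^g(\0, \0) - c_g(\ttau^*).
\]
Thus the cost with refund is \emph{anchored} to the untolled equilibrium cost, offset downward by the non-negative transfer $c_g(\ttau^*)$.

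Second, I would consider an arbitrary coalition deviation $\f'$ by group $g$ (with all other groups unchanged). Under the refunding policy the refund on $\f'$ continues to be structured via $r_g(\f', \ttau^*) = \mu^g(\ttau^*, \0) - \mu^g(\0, \0) + c_g(\f', \ttau^*)$, where the transfer component is now governed by the remaining refund budget available on $\f'$. The key quantity is the system cost $C(\f')$ realized under the deviation: since $\ttau^*$ is a congestion-pricing scheme achieving $C_{\ttau^*}=C^*$ (the minimum over $\Omega$), we have $C(\f') \geq C_{\ttau^*}$, and therefore the remaining refund budget (which scales as $C_{\0} - C(\f')$) is no larger than the budget available at $\f^*$.

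Third, I would invoke the monotonicity hypothesis. Because $c_g$ is non-increasing in $C_{\ttau^*}$, the transfer realized on $\f'$ satisfies $c_g(\f', \ttau^*) \leq c_g(\ttau^*)$. Plugging into the refund formula and simplifying, the travel cost with refund on any path $Q \in \P_g$ under $\f'$ satisfies
\[
\mu^g_Q(\f', \ttau^*, \r(\f', \ttau^*)) \;\geq\; \mu^g(\0, \0) - c_g(\f', \ttau^*) \;\geq\; \mu^g(\0, \0) - c_g(\ttau^*) \;=\; \mu^g_P(\f^*, \ttau^*, \r^*),
\]
which is exactly the endogenous equilibrium condition of Definition~\ref{def:endo-eq2}.

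The main obstacle is the second step: making precise how the refund policy extends to non-equilibrium flows produced by deviations, and verifying that the effective transfer $c_g$ really is controlled by the monotonicity assumption as $\ttau^*$ is held fixed but the flow changes. In particular, one must check that the baseline portion $\mu^g(\ttau^*, \0) - \mu^g(\0,\0)$ of the refund is preserved under the deviation (so that the ``untolled floor'' structure carries through), and that the residual budget redistributed via $c_g$ shrinks appropriately. Once this extension of the refund rule is justified, the chain of inequalities above closes the argument without any further technical work.
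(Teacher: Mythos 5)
Your proposal is correct and follows essentially the same route as the paper's proof: anchor the refunded cost at $\mu^g(\0,\0)-c_g$, observe that any deviation $\f'$ satisfies $C(\f')\geq C_{\ttau^*}=C^*$, and invoke the monotonicity of $c_g$ to conclude no coalition can lower its cost. The subtlety you flag about extending the refund rule to non-equilibrium flows is real, but the paper's own proof makes the same implicit assumption, so your argument matches it in both substance and level of rigor.
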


\ifarxiv

\begin{proof}
As in the analysis of Lemma~\ref{obs:ex-post-income}, for any user-favorable CPRR scheme $(\ttau^*, \rr^*)$ it holds for some $c_g$ for each group $g$ that the travel cost to users in group $g$ under the exogenous $\ttau^*$-equilibrium $\f^*$ is given by $r^*_g=\mu^g(\ttau^*, \0)-\mu^g(\0, \0)+c_g$, where $c_g \geq 0$ and $\sum_{g \in \G} c_g d_g = C_{\0} - C^*$.

We now consider the emerging behavior of users for the endogenous setting. Since $\mu^g(\0,\0)$ is a fixed quantity representing the travel cost at the untolled $\0$-equilibrium $\f^0$, the best response of any coalition within a group $g$ under the endogenous equilibrium, when minimizing each user's individual travel cost $\mu^g(\0,\0) - c_g$ (see the analysis in Lemma~\ref{obs:ex-post-income}), is to maximize $c_g$. 

Next, since for each user group $g$, $c_g$ is monotonically non-decreasing in $C_{\0} - C(\f)$, we have that $c_g$ is maximized for each user group $g$ when $C_{\0} - C(\f)$ is maximized. Since $C_{\0}$ is fixed, $C_{\0} - C(\f)$ is maximized for any flow $\f^*$ with the minimum total system cost $C^*$. This implies that each user's non-negative transfer $c_g$ is maximized for any flow $\f^*$ with the minimum total system cost. Thus, any exogenous $\ttau^*$-equilibrium flow $\f^*$ that achieves the minimum total system cost is also an endogenous equilibrium with coalitions, since a deviation by any coalition of users in group $g$ can never result in a higher non-negative transfer $c_g$ than that at the minimum total system cost solution.
\end{proof}

\fi

\ifarxiv

Lemma~\ref{cor:sys-par-opt-endo} establishes that any exogenous equilibrium flow arising from an optimal CPRR scheme is also an endogenous equilibrium with coalitions. The condition that the non-negative transfer $c_g$ for any group $g$ is monotonically non-increasing in the total system cost is not demanding. For instance, the optimal refunding scheme, i.e., the one minimizing wealth inequality, corresponding to the discrete Gini coefficient respects this montonicity relation, \ifarxiv as observed in Proposition~\ref{prop:opt-rev-refunds-discrete-gini}. \else as described in Section~\ref{sec:opt-scheme-kkt}. \fi

\else
For a proof of Lemma~\ref{cor:sys-par-opt-endo}, see Appendix~\ref{apdx:pfLem3}. We note that the condition in Lemma~\ref{cor:sys-par-opt-endo} that the non-negative transfer $c_g$ for any group $g$ is monotonically non-increasing in the total system cost is not demanding. For instance, the optimal refunding scheme, i.e., the one minimizing wealth inequality, corresponding to the discrete Gini coefficient respects this montonicity relation, as described in Section~\ref{sec:opt-scheme-kkt}.

\fi

\ifarxiv

\section{Discussion} \label{sec:discussion}
A core tenet of sustainable transportation entails achieving a balance between \emph{economic}, \emph{equity} and \emph{environmental} goals~\cite{hall_promoting_2006}. The results demonstrated in this paper challenge the traditional notion that these goals are in tension with each other by making progress towards achieving each of these goals simultaneously. From an economic perspective, we not only show that our CPRR schemes minimize the total system cost, but also that every user is left at least as well off under the CPRR schemes as compared to that prior to any implementation of congestion pricing or refunds. As a result, not only do our schemes achieve the desired economic outcomes at a societal scale, but also all users would \emph{favor} our proposed pricing and refunding schemes. Next, from the perspective of equity, we show that the gains achieved for each user through these schemes are distributed in such a way that the overall wealth inequality in society is reduced. Finally, from an environmental perspective, the total system cost objective, which we seek to minimize within optimal CPRR schemes (Theorem~\ref{thm:opt}), can be treated as an imperfect proxy for the total environmental pollution in the system. This is because the environmental impact of a scheme is often proportional to the total travel time of all users. Environmental goals can be more directly incorporated into the design of CPRR schemes  through further model refinements to explicitly incorporate environmental costs such as those for the air pollution impacts of CPRR schemes, beyond just the total system cost.

Our work demonstrated that if we look at congestion pricing from the lens of refunding the collected tolls, we can not only achieve system efficiency but also reduce wealth inequality. As a result, we view our work as a significant step in shifting the discussion around congestion pricing from one that has focused on the inequity impacts of road tolls to one that centers around how to best distribute the revenues collected to different sections of society. While refunding toll revenues is not novel and has been proposed as early as in~\cite{small1992using}, our work provided a thorough characterization of how such schemes can be designed to simultaneously achieve system efficiency and equity objectives. Furthermore, in doing so, we ensured that all users are at least as well off as compared to before the introduction of the CPRR scheme, thereby making it publicly acceptable to all users. 

We believe that while our results pave the way for the design of sustainable, publicly-acceptable congestion-pricing schemes, significant practical challenges remain. For instance, we assume centralized knowledge of the values-of-time of each user group. In practice these may not be known, and could confound successful implementation of an optimal CPRR scheme. As a result, it is important to leverage empirical estimation methods of users' preferences~\cite{arora2020private,Buchholzrdab050,cohen2016using} to better inform the inputs that are necessary for the design of optimal CPRR schemes. Furthermore, it may not be possible to directly refund all the collected toll revenues since it may need to be used for other purposes, e.g., subsidies to improve the transit infrastructure. It is also important to note the degree to which the CPRR scheme is successful relies on the full implementation of the tolls \emph{and} refunds. If policymakers implement the congestion pricing scheme but fail to deliver refunds, low-income users of the system will be made worse off, facing higher costs, worse travel times, or both. Underprivileged residents would have legitimate claims that the system was not working, undermining public trust in the system. Thus, the onus is on policy makers to manage the entire life cycle of the CPRR scheme and ensure its successful and sustainable implementation. The difference between an equitable, optimal congestion pricing scheme and one that disproportionately burdens the poor depends significantly on how the toll revenue is spent.

\section{Conclusion} \label{sec:conclusion}

In this paper, we studied and designed user-favorable congestion pricing and revenue refunding (CPRR) schemes that mitigate the regressive wealth inequality effects of congestion pricing. In particular, under the user behavior model when users minimize a linear function of their travel times and tolls, without considering refunds, we developed CPRR schemes that improve both system efficiency and wealth inequality, while being favorable for all users, as compared to the untolled outcome. We further characterized the set of optimal CPRR schemes and presented a method to compute the optimal scheme when the wealth inequality measure is given by the discrete Gini coefficient. Finally, we showed that even when (coalitions of) users endogenize the effect of refunds on their travel decisions, the resultant equilibrium remains the same under any optimal CPRR scheme. 

There are several interesting directions for further research. The first would be to relax some of the commonly-used assumptions in transportation research and game theory. One example is to consider nonlinear user travel cost functions. In addition, we currently assume time-invariant travel demand and traffic flows, which motivates the possible generalization to dynamic settings, e.g., through the incorporation of the cell transmission model~\cite{daganzo1994cell}. We have also assumed that the only decisions made by users are route choices, whereas in reality there are other options, such as changing departure time or travel mode. A possible way to overcome this limitation is by incorporating elastic-demand models into our traffic-assignment formulations\ifarxiv~\cite{Patriksson15,gartner1980optimal}\else~\cite{Patriksson15}\fi. Finally, we only consider direct refunds to the road users. We can extend our framework to analyze system designs with cross subsidies across multiple forms of transport, e.g., subsidies to improve the transit infrastructure.

It would also be interesting to consider a more general class of refunding mechanisms wherein some portion of the collected revenues is used to cover operational costs or improve transportation infrastructure, e.g., cross subsidies to improve public transit. Furthermore, the direct lump-sum transfers of the collected revenues to users studied in this work is a specific instance of differential congestion pricing schemes wherein the price on a given path may differ by user group. As a result, it would be worthwhile to investigate a broader class of such group-specific differential congestion pricing mechanisms. Finally, it would also be valuable to explore more complex behavior such as inter-group coalitions~\cite{ANDELMAN2009289}.

\else 

\vspace{-5pt}

\section{Discussion and Future Work} \label{sec:conclusion}

In this paper, we studied and designed user-favorable congestion pricing and revenue refunding (CPRR) schemes that mitigate the regressive wealth inequality effects of congestion pricing. Our work demonstrated that if we look at congestion pricing from the lens of refunding the collected tolls, then we can simultaneously achieve the \emph{economic} and \emph{equity} goals of sustainable transportation. As a result, we view our work as a significant step in shifting the discussion around congestion pricing from one that has focused on the inequity impacts of road tolls to one that centers around how to best distribute the revenues collected to different sections of society. For a more in-depth discussion on how our work paves the way for the design of sustainable, publicly-acceptable congestion-pricing schemes and its associated practical challenges, we refer to the extended version of our paper~\cite{jalota-cprr}.

There are several interesting directions for further research. The first would be to relax some of the commonly-used assumptions in transportation research and game theory, e.g., considering time-varying travel demand or travel cost functions that are non-linear in the travel times, tolls, and refunds. Next, since we only consider direct refunds to road users, it would be worthwhile to extend our framework to analyze system designs with cross subsidies across multiple forms of transport, e.g., subsidies to improve the transit infrastructure. Finally, it would be interesting to go beyond the direct lump-sum transfers of the collected revenues studied in this work and investigate more general group-specific differential congestion pricing mechanisms wherein the price on a given path may differ by user group.

\fi

\section*{Acknowledgment}

The authors thank Alon Eden and Kira Goldner for insightful discussions on non-atomic congestion games.

\bibliographystyle{unsrt}
\bibliography{main}

\begin{thebibliography}{10}

\bibitem{pigou}
A.~C. Pigou.
\newblock {\em The economics of welfare}.
\newblock Macmillan London, 1920.

\bibitem{tsekeris2009design}
Theodore Tsekeris and Stefan Vo{\ss}.
\newblock Design and evaluation of road pricing: state-of-the-art and
  methodological advances.
\newblock {\em NETNOMICS: Economic Research and Electronic Networking},
  10(1):5--52, 2009.

\bibitem{how-bad-is-selfish}
T.~Roughgarden and \'{E}. Tardos.
\newblock How bad is selfish routing?
\newblock {\em J. ACM}, 49(2):236–259, 2002.

\bibitem{roughgarden2005selfish}
Tim Roughgarden.
\newblock {\em Selfish routing and the price of anarchy}.
\newblock MIT press, 2005.

\bibitem{Sheffi1985}
Y.~Sheffi.
\newblock {\em Urban Transportation Networks: Equilibrium Analysis with
  Mathematical Programming Methods}.
\newblock Prentice-Hall, 1985.

\bibitem{SMALL2001310}
Kenneth~A. Small and Jia Yan.
\newblock The value of “value pricing” of roads: Second-best pricing and
  product differentiation.
\newblock {\em Journal of Urban Economics}, 49(2):310 -- 336, 2001.

\bibitem{eliasson2001road}
Jonas Eliasson.
\newblock Road pricing with limited information and heterogeneous users: A
  successful case.
\newblock {\em The annals of regional science}, 35(4):595--604, 2001.

\bibitem{manville-empirical}
Michael Manville and Emily Goldman.
\newblock Would congestion pricing harm the poor? do free roads help the poor?
\newblock {\em Journal of Planning Education and Research}, 38(3):329--344,
  2018.

\bibitem{gemici_et_al:LIPIcs:2019:10270}
Kurtulus Gemici, Elias Koutsoupias, Barnab{\'e} Monnot, Christos~H.
  Papadimitriou, and Georgios Piliouras.
\newblock {Wealth Inequality and the Price of Anarchy}.
\newblock In {\em International Symposium on Theoretical Aspects of Computer
  Science}, volume 126 of {\em Leibniz International Proceedings in
  Informatics}, pages 31:1--31:16. Schloss Dagstuhl--Leibniz-Zentrum fuer
  Informatik, 2019.

\bibitem{nyt-cp}
Azi Paybarah.
\newblock Congestion pricing: Mass transit savior or tax on the working class?
\newblock New York Times, March 2019.

\bibitem{CP-Low-acceptance}
S.~Jaensirisak, M.~Wardman, and A.~D. May.
\newblock Explaining variations in public acceptability of road pricing
  schemes.
\newblock {\em Journal of Transport Economics and Policy}, 39(2):127--153,
  2005.

\bibitem{wachs2005then}
Martin Wachs.
\newblock Then and now: The evolution of congestion pricing in transportation
  and where we stand today.
\newblock In {\em Transportation Research Board Conference Proceedings}, 2005.

\bibitem{WU20121273}
Di~Wu, Yafeng Yin, Siriphong Lawphongpanich, and Hai Yang.
\newblock Design of more equitable congestion pricing and tradable credit
  schemes for multimodal transportation networks.
\newblock {\em Transportation Research Part B: Methodological},
  46(9):1273--1287, 2012.

\bibitem{fair-deSouza}
Patrick~De Corla-Souza.
\newblock Fair highway networks: A new approach to eliminate congestion on
  metropolitan freeways.
\newblock {\em Public Works Management \& Policy}, 9(3):196--205, 2005.

\bibitem{GUO2010972}
Xiaolei Guo and Hai Yang.
\newblock Pareto-improving congestion pricing and revenue refunding with
  multiple user classes.
\newblock {\em Transportation Research Part B: Methodological}, 44(8):972--982,
  2010.

\bibitem{heterogeneous-pricing-roughgarden}
Richard Cole, Yevgeniy Dodis, and Tim Roughgarden.
\newblock Pricing network edges for heterogeneous selfish users.
\newblock In {\em Symposium on Theory of Computing}, page 521–530.
  Association for Computing Machinery, 2003.

\bibitem{multicommodity-extension}
L.~{Fleischer}, K.~{Jain}, and M.~{Mahdian}.
\newblock Tolls for heterogeneous selfish users in multicommodity networks and
  generalized congestion games.
\newblock In {\em Symposium on Foundations of Computer Science}, pages
  277--285. IEEE, 2004.

\bibitem{pan1993h}
Zigang Pan and Tamer Ba{$\c{s}$}ar.
\newblock H$\infty$-optimal control for singularly perturbed systems. part i:
  Perfect state measurements.
\newblock {\em Automatica}, 29(2):401--423, 1993.

\bibitem{sofronova2020traffic}
Elena~A Sofronova and Askhat~I Diveev.
\newblock Traffic flows optimal control problem with full information.
\newblock In {\em 2020 IEEE 23rd International Conference on Intelligent
  Transportation Systems (ITSC)}, pages 1--6. IEEE, 2020.

\bibitem{sabag2021regret}
Oron Sabag, Gautam Goel, Sahin Lale, and Babak Hassibi.
\newblock Regret-optimal controller for the full-information problem.
\newblock In {\em 2021 American Control Conference (ACC)}, pages 4777--4782.
  IEEE, 2021.

\bibitem{arora2020private}
Kashish Arora, Fanyin Zheng, and Karan Girotra.
\newblock Private vs. pooled transportation: Customer preference and congestion
  management.
\newblock In {\em INFORMS International Conference on Service Science}, pages
  59--74. Springer, 2020.

\bibitem{Buchholzrdab050}
Nicholas Buchholz.
\newblock {Spatial Equilibrium, Search Frictions, and Dynamic Efficiency in the
  Taxi Industry}.
\newblock {\em The Review of Economic Studies}, 89(2):556--591, 09 2021.

\bibitem{cohen2016using}
Peter Cohen, Robert Hahn, Jonathan Hall, Steven Levitt, and Robert Metcalfe.
\newblock Using big data to estimate consumer surplus: The case of uber.
\newblock Technical report, National Bureau of Economic Research, 2016.

\bibitem{jalota-acm-eaamo}
Devansh Jalota, Kiril Solovey, Karthik Gopalakrishnan, Stephen Zoepf, Hamsa
  Balakrishnan, and Marco Pavone.
\newblock When efficiency meets equity in congestion pricing and revenue
  refunding schemes.
\newblock In {\em Equity and Access in Algorithms, Mechanisms, and
  Optimization}, New York, NY, USA, 2021. Association for Computing Machinery.

\bibitem{Bertsimas-price-of-fairness}
D.~Bertsimas, Vivek Farias, and N.~Trichakis.
\newblock The price of fairness.
\newblock {\em Operations Research}, 59(1):17--31, 2011.

\bibitem{dwork-fairness-through-awareness}
Cynthia Dwork, Moritz Hardt, Toniann Pitassi, Omer Reingold, and Richard Zemel.
\newblock Fairness through awareness.
\newblock In {\em Innovations in Theoretical Computer Science Conference}, page
  214–226, New York, NY, USA, 2012. Association for Computing Machinery.

\bibitem{so-routing-seminal}
O.~Jahn, R.~Möhring, A.~Schulz, and N.~Stier-Moses.
\newblock System-optimal routing of traffic flows with user constraints in
  networks with congestion.
\newblock {\em Operations Research}, 53(4):600--616, 2005.

\bibitem{Roughgarden2002HowUI}
T.~Roughgarden.
\newblock How unfair is optimal routing?
\newblock In {\em Symposium on Discrete Algorithms}, pages 203--204.
  {ACM/SIAM}, 2002.

\bibitem{ANGELELLI20161}
E.~Angelelli, I.~Arsik, V.~Morandi, M.~Savelsbergh, and M.~Speranza.
\newblock Proactive route guidance to avoid congestion.
\newblock {\em Transportation Research Part B: Methodological}, 94:1--21, 2016.

\bibitem{ANGELELLI2018234}
E.~Angelelli, V.~Morandi, and M.~Speranza.
\newblock Congestion avoiding heuristic path generation for the proactive route
  guidance.
\newblock {\em Computers \& Operations Research}, 99:234--248, 2018.

\bibitem{ANGELELLI2020}
E.~Angelelli, V.~Morandi, M.~Savelsbergh, and M.~Speranza.
\newblock System optimal routing of traffic flows with user constraints using
  linear programming.
\newblock {\em European Journal of Operational Research}, 2020.

\bibitem{Jalota.ea.CDC21.extended}
Devansh Jalota, Kiril Solovey, Stephen Zoepf, and Marco Pavone.
\newblock Balancing fairness and efficiency in traffic routing via interpolated
  traffic assignment.
\newblock {\em CoRR}, abs/2104.00098, 2021.

\bibitem{weitzman-seminal}
Martin~L. Weitzman.
\newblock Is the price system or rationing more efficient in getting a
  commodity to those who need it most?
\newblock {\em Bell Journal of Economics}, 8(2):517--524, 1977.

\bibitem{PPP-Besley-coate}
Timothy Besley and Stephen Coate.
\newblock Public provision of private goods and the redistribution of income.
\newblock {\em The American Economic Review}, 81(4):979--984, 1991.

\bibitem{CONDORELLI2013582}
Daniele Condorelli.
\newblock Market and non-market mechanisms for the optimal allocation of scarce
  resources.
\newblock {\em Games and Economic Behavior}, 82:582--591, 2013.

\bibitem{RAM-Akbarpour}
Mohammad Akbarpour, Piotr Dworczak, and Scott~Duke Kominers.
\newblock {Redistributive allocation mechanisms}.
\newblock GRAPE Working Papers~40, GRAPE Group for Research in Applied
  Economics, 2020.

\bibitem{small1992using}
Kenneth~A Small.
\newblock Using the revenues from congestion pricing.
\newblock {\em Transportation}, 19(4):359--381, 1992.

\bibitem{goodwin1989rule}
Phil~B Goodwin.
\newblock The rule of three: a possible solution to the political problem of
  competing objectives for road pricing.
\newblock {\em Traffic engineering \& control}, 30(10):495--497, 1989.

\bibitem{vickrey1969congestion}
William~S Vickrey.
\newblock Congestion theory and transport investment.
\newblock {\em The American Economic Review}, 59(2):251--260, 1969.

\bibitem{arnott1994}
Richard Arnott, André de~Palma, and Robin Lindsey.
\newblock The welfare effects of congestion tolls with heterogeneous commuters.
\newblock {\em Journal of Transport Economics and Policy}, 28(2):139--161,
  1994.

\bibitem{DAGANZO1995139}
Carlos~F. Daganzo.
\newblock A pareto optimum congestion reduction scheme.
\newblock {\em Transportation Research Part B: Methodological}, 29(2):139--154,
  1995.

\bibitem{ADLER2001447}
Jeffrey~L. Adler and Mecit Cetin.
\newblock A direct redistribution model of congestion pricing.
\newblock {\em Transportation Research Part B: Methodological}, 35(5):447--460,
  2001.

\bibitem{YANG20041}
Hai Yang and Hai-Jun Huang.
\newblock The multi-class, multi-criteria traffic network equilibrium and
  systems optimum problem.
\newblock {\em Transportation Research Part B: Methodological}, 38(1):1--15,
  2004.

\bibitem{dabla2015causes}
Era Dabla-Norris, Kalpana Kochhar, Nujin Suphaphiphat, Frantisek Ricka, and
  Evridiki Tsounta.
\newblock {\em Causes and consequences of income inequality: A global
  perspective}.
\newblock International Monetary Fund, 2015.

\bibitem{Patriksson15}
Michael Patriksson.
\newblock {\em The Traffic Assignment Problem: {M}odels and Methods}.
\newblock Dover Publications, 2015.

\bibitem{lawphongpanich2010solving}
S~Lawphongpanich and Y~Yin.
\newblock Solving pareto-improving congestion pricing for general road
  networks.
\newblock {\em Transportation Research Part C}, 18(2):234--246, 2010.

\bibitem{METRP}
H.~Yang and H.~J. Huang.
\newblock {\em Mathematical and Economic Theory of Road Pricing}.
\newblock Emerald Publishing, 2005.

\bibitem{hearn1998solving}
Donald~W Hearn and Motakuri~V Ramana.
\newblock Solving congestion toll pricing models.
\newblock In {\em Equilibrium and advanced transportation modelling}, pages
  109--124. Springer, 1998.

\bibitem{bisschop2016aimms}
Johannes Bisschop.
\newblock {\em {AIMMS} optimization modeling}.
\newblock lulu.com, 2016.

\bibitem{rawls2020theory}
John Rawls.
\newblock {\em A theory of justice}.
\newblock Harvard university press, 2020.

\bibitem{ANDELMAN2009289}
Nir Andelman, Michal Feldman, and Yishay Mansour.
\newblock Strong price of anarchy.
\newblock {\em Games and Economic Behavior}, 65(2):289--317, 2009.

\bibitem{hall_promoting_2006}
Ralph~P. Hall and Joseph~M. Sussman.
\newblock Promoting the concept of sustainable transportation within the
  federal system - the need to reinvent the {U.S.}, 2006.

\bibitem{daganzo1994cell}
Carlos~F Daganzo.
\newblock The cell transmission model: {A} dynamic representation of highway
  traffic consistent with the hydrodynamic theory.
\newblock {\em Transportation Research Part {B}: {M}ethodological},
  28(4):269--287, 1994.

\bibitem{gartner1980optimal}
Nathan~H Gartner.
\newblock Optimal traffic assignment with elastic demands: A review part {I}.
  analysis framework.
\newblock {\em Transportation Science}, 14(2):174--191, 1980.

\bibitem{orpn-2023}
Devansh Jalota, Dario Paccagnan, Maximilian Schiffer, and Marco Pavone.
\newblock Online routing over parallel networks: Deterministic limits and
  data-driven enhancements.
\newblock {\em INFORMS Journal on Computing}, 0(0):null, 0.

\end{thebibliography}

\ifarxiv

\else

\appendix

\section{Proofs}

\subsection{Proof of Proposition~\ref{prop:rev-refund-decreases-ineq}} \label{apdx:pfProp1}

\ifarxiv
For the collected toll revenues, we construct a special case of the revenue refunding scheme from Lemma~\ref{lem:PI-CPRR}. In particular, consider the refunding scheme where $\alpha_g = \frac{d_g}{\sum_{g \in \G} d_g}$, which gives the refund
\ifarxiv
\begin{align*}
    r_g = \mu^g(\ttau,\0) - \mu^g(\0,\0) + \frac{1}{\sum_{g \in \G} d_g}(C_{\0} - C_{\ttau})
\end{align*}
\else 
$r_g = \mu^g(\ttau,\0) - \mu^g(\0,\0) + \frac{1}{\sum_{g \in \G} d_g}(C_{\0} - C_{\ttau})$ 
\fi
to each user in group $g$. We now show that under this revenue refunding scheme, the ex-post income distribution $\bm{\Hat{q}} = \q(\ttau, \r)$ has a lower wealth inequality measure relative to the untolled user equilibrium ex-post income distribution $\Tilde{\q} = \q(\0,\0)$. That is, we show that $W(\mathbf{\Hat{q}}) \leq W(\Tilde{\q})$.
\else
Consider the refunds $r_g = \mu^g(\ttau,\0) - \mu^g(\0,\0) + \frac{1}{\sum_{g \in \G} d_g}(C_{\0} - C_{\ttau})$ for each user in group $g$. Through an argument similar to that in~\cite[Theorem 1]{GUO2010972}, it can be shown that the corresponding CPRR scheme is user-favorable, which we present in the extended version of this paper~\cite{jalota-cprr}. We now show that under this revenue refunding scheme, the ex-post income distribution $\bm{\Hat{q}} = \q(\ttau, \r)$ has a lower wealth inequality measure relative to the untolled user equilibrium ex-post income distribution $\Tilde{\q} = \q(\0,\0)$. That is, we show that $W(\mathbf{\Hat{q}}) \leq W(\Tilde{\q})$.
\fi

To see this, we begin by considering the ex-ante income distribution $\q^0$. Under the untolled user equilibrium, users in group $g$ incur a travel cost $\mu^g(\0,\0)$, and thus the ex-post income distribution of users in group $g$ is given by $\Tilde{q}_g = q_g^0 - \beta \mu^g(\0,\0)$, where $\beta$ is the scaling factor as in Definition~\ref{def:ex-post}. On the other hand, under the CPRR scheme $(\ttau, \r)$, the ex-post income distribution of users in group $g$ is given by
\ifarxiv
\begin{align*}
    \Hat{q}_g &= q_g^0 -\beta\left( \mu^g(\ttau,\0) -r_g\right)\\ 
    &= q_g^0 - \beta \left( \mu^g(\ttau,\0) - \left[ \mu^g(\ttau,\0) - \mu^g(\0,\0) + \frac{1}{\sum_{g \in \G} d_g}(C_{\0} - C_{\ttau}) \right] \right) \\ 
    &= q_g^0 - \beta \left( \mu^g(\0,\0) - \frac{1}{\sum_{g \in \G} d_g}(C_{\0} - C_{\ttau}) \right) \\ 
    &= \Tilde{q}_g + \beta \frac{1}{\sum_{g \in \G} d_g}(C_{\0} - C_{\ttau}),
\end{align*}
\else 
\begin{align*}
    \Hat{q}_g \! &= \! q_g^0 \! - \! \beta\left( \mu^g(\ttau,\0) \! - \! r_g\right) \! = \! \Tilde{q}_g \! + \! \beta \frac{1}{\sum_{g \in \G} d_g}(\! C_{\0} \! - \! C_{\ttau} \!),
\end{align*}
\fi
where we used that $\Tilde{q}_g = q_g^0 - \beta \mu^g(\0,\0)$\ifarxiv to derive the last equality. \else. \fi Since the above relation holds for all groups $g$, \ifarxiv we observe that \fi $\mathbf{\Hat{q}} = \Tilde{\q} + \lambda \mathbf{1}$, where $\lambda = \frac{\beta}{\sum_{g \in \G} d_g}(C_{\0} - C_{\ttau}) \geq 0$. Finally, the result that $W(\mathbf{\Hat{q}}) \leq W(\Tilde{\q})$ follows by the constant income transfer property (Section~\ref{sec:model}), establishing our claim.

\vspace{-5pt}

\subsection{Proof of Corollary~\ref{cor:rev-refund-single-od}} \label{apdx:pfCor1}

Consider the same user-favorable CPRR scheme $(\ttau, \rr)$ as is the proof of Proposition~\ref{prop:rev-refund-decreases-ineq}. \ifarxiv We now show that the wealth inequality of the ex-post income distribution resulting from $(\ttau, \rr)$ is lower than the wealth inequality of the ex-ante income distribution, i.e., $W(\Hat{\q}) \leq W(\q)$, where $\Hat{\q} = \q(\ttau, \rr)$. \else We now show that $W(\Hat{\q}) \leq W(\q)$, where $\Hat{\q} = \q(\ttau, \rr)$. \fi \ifarxiv To see this, we first show that $W(\q(\0, \0)) = W(\q^0)$, i.e., the wealth inequality measure of the ex-ante income distribution is exactly equal to that of the untolled ex-post income distribution $\Tilde{\q} = \q(\0, \0)$. The proof of this result lies in the key observation that for any $\0$-equilibrium flow $\f^0$ all users incur the same travel time, denoted as $\gamma$, since they travel between the same O-D pair. \else To see this, we first show that $W(\q(\0, \0)) = W(\q^0)$, which follows from the observation that for any $\0$-equilibrium flow $\f^0$ all users incur the same travel time, denoted as $\gamma$, since they travel between the same O-D pair. \fi This observation leads to a travel cost of $\mu^g(\0,\0) = \omega q^0_g \gamma$ for each group $g$. Then, for the untolled setting, the ex-post income distribution of users in group $g$ is given by 
\begin{align*}
    \Tilde{q}_g = q_g^0 - \beta \mu^g(\0, \0) 
    =q_g^0  - \beta \omega q_g^0 \gamma 
    = q_g^0(1- \beta \omega \gamma).
\end{align*}
From the above, it follows that $\Tilde{\q} = \lambda_1 \q^0$ for $\lambda_1 = 1- \beta \omega \gamma$. Thus, for $\beta$ small enough it holds that $\lambda_1>0$. Under this condition, due to the scale-independence property (Section~\ref{sec:model}) of the wealth-inequality measure it follows that $W(\Tilde{\q}) = W(\q^0)$. Finally, since $W(\Hat{\q}) \leq W(\Tilde{\q})$ by the proof of Proposition~\ref{prop:rev-refund-decreases-ineq} it follows that $W(\Hat{\q}) \leq W(\Tilde{\q}) = W(\q^0)$, which proves our claim.

\vspace{-5pt}

\subsection{Proof of Lemma~\ref{cor:sys-par-opt-endo}} \label{apdx:pfLem3}

As in the analysis of Lemma~\ref{obs:ex-post-income}, for any user-favorable CPRR scheme $(\ttau^*, \rr^*)$ it holds for some $c_g$ for each group $g$ that the travel cost to users in group $g$ under the exogenous $\ttau^*$-equilibrium $\f^*$ is given by $r^*_g=\mu^g(\ttau^*, \0)-\mu^g(\0, \0)+c_g$, where $c_g \geq 0$ and $\sum_{g \in \G} c_g d_g = C_{\0} - C^*$.

We now consider the emerging behavior of users for the endogenous setting. Since $\mu^g(\0,\0)$ is a fixed quantity representing the travel cost at the untolled $\0$-equilibrium $\f^0$, the best response of any coalition within a group $g$ under the endogenous equilibrium, when minimizing each user's individual travel cost $\mu^g(\0,\0) - c_g$ (see the analysis in Lemma~\ref{obs:ex-post-income}), is to maximize $c_g$. 

Next, since for each user group $g$, $c_g$ is monotonically non-decreasing in $C_{\0} - C(\f)$, we have that $c_g$ is maximized for each user group $g$ when $C_{\0} - C(\f)$ is maximized. Since $C_{\0}$ is fixed, $C_{\0} - C(\f)$ is maximized for any flow $\f^*$ with the minimum total system cost $C^*$. This implies that each user's non-negative transfer $c_g$ is maximized for any flow $\f^*$ with the minimum total system cost. Thus, any exogenous $\ttau^*$-equilibrium flow $\f^*$ that achieves the minimum total system cost is also an endogenous equilibrium with coalitions, since a deviation by any coalition of users in group $g$ can never result in a higher non-negative transfer $c_g$ than that at the minimum total system cost solution.

\subsection{Numerical Experiments}

In this section, we present numerical experiments to demonstrate the efficacy of optimal CPRR schemes in reducing the total system cost and wealth inequality and also show that the benefits of CPRR can even be realized in the setting when users' values of time are not known to the central planner. To this end, we conducted experiments on four traffic networks and present the corresponding results in Table~\ref{tab:simulation}. For a detailed discussion on the implementation details of our experiments as well as the chosen network structures, O-D demands, travel-time functions, user values of time, and incomes, we refer to the extended version of our paper~\cite{jalota-cprr}.


We first note from Table~\ref{tab:simulation} that the optimal CPRR scheme, as expected, reduces the total system cost and the discrete Gini coefficient compared to the user equilibrium setting with no tolls or refunds, thereby validating Proposition~\ref{prop:rev-refund-decreases-ineq}. In addition, since users' values of time are assumed to be scaled proportions of their incomes for the purposes of the experiments~\cite{orpn-2023}, our results for the optimal CPRR scheme for single O-D pair demand also validate Corollary~\ref{cor:rev-refund-single-od} (see the extended version of our paper~\cite{jalota-cprr}).

In addition to evaluating the performance of optimal CPRR schemes, we also perform experiments in the incomplete information setting when user specific values of time or incomes may not be known, as is often the case in practice. In this incomplete information setting, we only assume access to the mean values of time and incomes of users and provide all users travelling between a given O-D pair the same refund. Our results in Table~\ref{tab:simulation} indicate that deploying CPRR schemes in this setting generally results in total system costs and level of wealth inequality that are higher than that of the optimal CPRR schemes in the complete information setting but lower than that corresponding to the user equilibrium setting with no tolls and refunds. Table~\ref{tab:simulation} also indicates that the performance of the CPRR scheme with incomplete information depends on the variance in the user values of time around the mean. In particular, Table~\ref{tab:simulation} indicates that as the variance in user values of time is decreases, then the CPRR scheme with incomplete information achieves a performance that closer to that of the optimal CPRR scheme on both total system cost and wealth inequality metrics.

\setlength{\tabcolsep}{3pt}
\begin{table}[] 
\centering
\caption{Relative percentage differences of the total system cost and wealth inequality of the complete and incomplete information settings compared to the user equilibrium outcome without tolls on four traffic networks: (i) Two edge Pigou network, (ii) Four edge parallel network, (iii) Series Parallel Network, and (iv) Grid network. For the grid network, the setting of two O-D pairs was considered for three settings depending on the degree of variance of users' values of time, i.e., low, medium, or high. Here $C_I$ and $\q^I$ denote the total system cost and ex-post income distribution corresponding to the scheme with incomplete information and $W^* = W(\q(\ttau^*, \r^*))$.}
\scriptsize
\begin{tabular}{l|l|l|l|l}
\toprule
Experiment      & $\frac{C_{\0} - C_I}{C_{\0}}$  & $\frac{C_{\0} - C^*}{C_{\0}}$ & $\frac{W(\q^0)-W^*}{W(\q^0)}$ & $\frac{W(\q^0)-W(\q^I)}{W(\q^0)}$ \\
\midrule
Pigou (2 edge)         & 5.1029              & 5.1147      & 0.0357               & 0.0297                 \\
Parallel (4 edge)       & 4.1223              & 4.1343      & 0.0167               & 0.0134                 \\
Series-Parallel & 4.8331              & 4.8809      & 0.0609               & 0.0554                 \\
Grid (Low Var)  & 0.9834              & 0.9910      & 0.0107               & 0.0071                 \\
Grid (Med Var)  & 1.3062              & 1.4824      & 0.0161               & 0.0070                 \\
Grid (High Var) & 1.6787              & 2.3365      & 0.0253               & 0.0070                
\end{tabular} \label{tab:simulation}
\end{table}

\fi

\ifarxiv
\section{Proofs}
\fi

\ifarxiv
\subsection{Constant Income Transfer Property} \label{sec:const-inc-transfer}
In this section, we show that the constant income transfer property follows directly from the regressive and progressive tax properties of the wealth inequality measure $W$, as claimed in Section~\ref{sec:model}. In particular, we show that if the initial income distribution is $\q$ and each person is transferred a non-positive amount of money $-\lambda$, where $0 \leq \lambda < \min_{g \in \G} q_g$, then the wealth inequality cannot decrease, i.e.,  $W(\mathbf{q} - \lambda \mathbf{1}) \geq W(\mathbf{q})$.

We note that at the new income distribution $\Bar{\q} = \mathbf{q} - \lambda \mathbf{1}$, each user in group $g$ has the following income:
\begin{align*}
    \Bar{q}_g &= q_g - \lambda = q_g \left(1 - \frac{\lambda}{q_g} \right).
\end{align*}
Note that if $q_g \leq q_{g'}$ for any two groups $g, g'$, then $1 - \frac{\lambda}{q_g} \leq 1 - \frac{\lambda}{q_{g'}}$. Thus, by the regressive tax property, we observe that $W(\mathbf{q} - \lambda \mathbf{1}) \geq W(\mathbf{q})$. We finally note that the claim that $W(\mathbf{q} + \lambda \mathbf{1}) \leq W(\mathbf{q})$ for any $0 \leq \lambda < \min_{g \in \G} q_g$ follows by a similar analysis wherein we use the progressive tax property. This proves our claim that the wealth inequality measure $W$ satisfies the constant income transfer property.

\fi

\ifarxiv
\subsection{Proof of Equation~\ref{eq:total-cost-relation}} \label{apdx:kkt-multiclass-ue}

In this section, we use the first order necessary and sufficient KKT conditions of the well studied multi-class user equilibrium optimization problem~\citep{YANG20041}
\[\f = \argmin_{\f' \in \Omega} \sum_{e \in E} \int_{0}^{x'_{e}} t_{e}(\omega) d \omega+\sum_{e \in E} \sum_{g \in \G} \frac{1}{v_{g}} {x'}_{e}^{g} \tau_{e},\]
to prove that the following holds:
\begin{align}
    C_{\ttau} = \sum_{g \in \G} \mu^{g}(\ttau, \0) d_g - \sum_{e \in E} \tau_e x_e.
\end{align} 
Here $\ttau$ is congestion-pricing scheme and $\f$ is an exogenous $\ttau$-equilibrium with edge flow representation $\x$. Note that the edge flows $\x$ are unique by the strict convexity of the travel time function.

The following exogenous-equilibrium conditions follow directly from the KKT conditions of the above optimization problem:
\begin{align*}
    \sum_{e \in P} \left(v_g t_e(x_e) + \tau_{e} \right) = \mu^{g}(\ttau, \0), \quad \text{ if } f_{P,g}>0, P \in \P_g, g \in \G, \\
    \sum_{e \in P} \left(v_g t_e(x_e) + \tau_{e} \right) \geq \mu^{g}(\ttau, \0), \quad \text{ if } f_{P,g}=0, P \in \P_g, g \in \G.
\end{align*}
From the above equilibrium conditions and the fact that $\sum_{P \in \P_g} f_{P, g} = d_g$, we obtain that:
\begin{align*}
    \sum_{g \in \G} \mu^{g}(\ttau, \0) d_g 
    &= \sum_{g \in \G} \sum_{P \in \P_g} f_{P, g} \mu^{g}(\ttau, \0) = \sum_{g \in \G} \sum_{P \in \P_g} f_{P, g} \sum_{e \in P} \left(v_g t_e(x_e) + \tau_{e} \right), \\
    &= \sum_{g \in \G} \sum_{P \in \P_g} f_{P, g} \sum_{e \in E} \left(v_g t_e(x_e) + \tau_{e} \right) \delta_{e, P} = \sum_{e \in E} \sum_{g \in \G} \sum_{P \in \P_g} f_{P, g} \left(v_g t_e(x_e) + \tau_{e} \right) \delta_{e, P}, \\
    &= \sum_{e \in E} \sum_{g \in \G} \sum_{P \in \P_g: e \in P} f_{P, g} \left(v_g t_e(x_e) + \tau_{e} \right) = \sum_{e \in E} \sum_{g \in \G} \left(v_g t_e(x_e) + \tau_{e} \right) \sum_{P \in \P_g: e \in P} f_{P, g},  \\
    &= \sum_{e \in E} \sum_{g \in \G} x_e^{g} \left(v_g t_e(x_e) + \tau_{e} \right) = \sum_{e \in E} \sum_{g \in \G} x_e^{g} v_g t_e(x_e) + \sum_{e \in E} x_e \tau_e
\end{align*}
where $\delta_{e, P} = 1$ if edge $e \in P$ and otherwise it is 0. Note that the above analysis implies Equation~\eqref{eq:total-cost-relation} since $C_{\ttau} = \sum_{e \in E} \sum_{g \in \G} x_e^{g} v_g t_e(x_e) = \sum_{g \in \G} \mu^{g}(\ttau, \0) d_g - \sum_{e \in E} x_e \tau_e$. This proves our claim.

\begin{remark}
We note that since the total tolls collected and user travel costs $\mu^{g}(\ttau, \0)$ are unique at any equilibrium flow~\citep{GUO2010972}, the total travel cost $C_{\ttau}$ is also unique for any equilibrium induced by the edge tolls $\ttau$. Furthermore, the ex-post income of each user group $g$ is also the same under any equilibrium induced by the edge tolls $\ttau$ since the user travel cost $\mu^{g}(\ttau, \0)$ is unique at any equilibrium flow~\citep{GUO2010972}.
\end{remark}

\fi

\ifarxiv
\subsection{Proof of Lemma~\ref{lem:PI-CPRR}} \label{apdx:lemma-pi-cprr-pf}
Lemma~\ref{lem:PI-CPRR} is a slight variation on a previous result by \cite{GUO2010972}. We provide a proof for this modified lemma for the sake of completeness.

To prove the lemma, we first show that $\sum_{g \in \G} r_g d_g = \sum_{e \in E} \tau_e x_e$, where $x_e$ is the flow on edge $e$ corresponding to the edge decomposition of any exogenous $\ttau$-equilibrium flow $\f$. Indeed, it follows that 
\begin{align*}
    \sum_{g \in \G} r_g d_g &= \sum_{g \in \G} \left[ \mu^g(\ttau,\0) - \mu^g(\0,\0) + \frac{\alpha_g}{d_g}(C_{\0} - C_{\ttau}) \right] d_g = \sum_{g \in \G} \mu^g(\ttau,\0) d_g - \sum_{g \in \G} \mu^g(\0,\0) d_g + \sum_{g \in \G} \alpha_g (C_{\0} - C_{\ttau}) \\
    &= C_{\ttau} + \sum_{e \in E} \tau_e x_e - C_{\0} + C_{\0} - C_{\ttau} = \sum_{e \in E} \tau_e x_e,
\end{align*}
where we have used Equation~\eqref{eq:total-cost-relation} to obtain that $C_{\ttau} = \sum_{g \in \G} \mu^g(\f,\ttau,\0) d_g - \sum_{e \in E} \tau_e x_e$. Furthermore, we leveraged Equation~\eqref{eq:total-cost-relation} for the untolled outcome to obtain that $C_{\0} = \sum_{g \in \G} \mu^g(\0,\0) d_g$.

Next, we show that this CPRR scheme is user-favorable, which follows from the following inequalities:
\begin{align*}
    \mu^g(\ttau,\r) &= \mu^g(\ttau,\0) - r_g = \mu^g(\ttau,\0) - \left( \mu^g(\ttau,\0) - \mu^g(\0,\0) + \frac{\alpha_g}{d_g}(C_{\0} - C_{\ttau}) \right) \\
    &= \mu^g(\0,\0) - \alpha_g(C_{\0} - C_{\ttau}) \leq \mu^g(\0,\0).
\end{align*}

\fi

\ifarxiv
\subsection{Discrete Gini Coefficient Satisfies the Regressive Income Tax Property} \label{apdx:gini-regressive}

We show that for any two income profiles $\mathbf{q}$ and $\Tilde{\q}$ with $\Tilde{q}_g = \delta_g q_g$, where $0<\delta_g \leq \delta_{g'}$ if $q_g \leq q_{g'}$ for any two groups $g, g'$, then $W(\Tilde{\q}) \geq W(\mathbf{q})$ for the discrete Gini coefficient wealth inequality measure.

To prove this, we first note by the scale independence property that, we can restrict our attention to scaling factors $\delta_g$ for each $g$ that leave the mean income of all users unchanged, i.e., $\Delta(\Tilde{\q}) = \Delta(\q)$. This is because, if the mean income  of the new income profile $\Tilde{\q}$ is different than that of $\mathbf{q}$ then we can multiply the new incomes with a scaling constant $\lambda>0$ to ensure that the mean income of $\lambda \Tilde{\q}$ is exactly that of $\mathbf{q}$. Then, by scale independence we have that $W(\lambda \Tilde{\q}) = W(\Tilde{\q})$, and so, without loss of generality, we focus on the set of scaling factors $\delta_g$ for each $g$, such that $\Delta(\Tilde{\q}) = \Delta(\q)$.

Next, we note the following:
\begin{align*}
    W(\mathbf{q}) &= \frac{1}{2 \left(\sum_{g \in \G} d_g \right)^2 \Delta(\q)} \sum_{g_1, g_2 \in \G} d_{g_1} d_{g_2} |q_{g_1} - q_{g_2}| = \frac{1}{2 \left(\sum_{g \in \G} d_g \right)^2 \Delta(\Tilde{\q})} \sum_{g_1, g_2 \in \G} d_{g_1} d_{g_2} |q_{g_1} - q_{g_2}| \\
    & \leq \frac{1}{2 \left(\sum_{g \in \G} d_g \right)^2 \Delta(\Tilde{\q})} \sum_{g_1, g_2 \in \G} d_{g_1} d_{g_2} |\delta_{g_1} q_{g_1} - \delta_{g_2} q_{g_2}| = \frac{1}{2 \left(\sum_{g \in \G} d_g \right)^2 \Delta(\Tilde{\q})} \sum_{g_1, g_2 \in \G} d_{g_1} d_{g_2} |\Tilde{q}_{g_1} - \Tilde{q}_{g_2}| \\
    &= W(\Tilde{\q}),
\end{align*}
where the second equality follows since $\Delta(\Tilde{\q}) = \Delta(\q)$, and the third inequality follows since $|q_{g_1} - q_{g_2}| \leq |\delta_{g_1} q_{g_1} - \delta_{g_2} q_{g_2}|$ for any $g_1, g_2$ as $0<\delta_g \leq \delta_{g'}$ if $q_g \leq q_{g'}$ for any two groups $g, g'$.

Thus, we have shown that the discrete Gini coefficient wealth inequality measure satisfies the regressive tax property.

\fi
\ifarxiv

\subsection{Proof of Proposition~\ref{prop:opt-rev-refunds-discrete-gini}} \label{apdx:opt-rev-discrete-gini}

We first note that since $\frac{1}{2 \left(\sum_{g \in \G} d_{g} \right)^2 \Delta(\q(\0, \0)  + \cc)}$ is a constant the optimal solution of the Problem~\eqref{eq:Gini-obj}-\eqref{eq:pareto-imp-refund} is equal to the optimal solution of the following problem
\begin{mini!}|s|[2]                   
    {c_{g}}                              
    { \sum_{g_1, g_2 \in \G} d_{g_1} d_{g_2} |\Tilde{q}_{g_1} + c_{g_1} - \Tilde{q}_{g_2} - c_{g_2}|, \label{eq:Gini-obj-simple}}   
    {\label{eq:Eg001-simple}}             
    {}                                
    \addConstraint{\sum_{g \in \G} c_{g} d_{g}}{= C_{\0} - C_{\ttau}, \label{eq:Gini-con1-simple}}    
    \addConstraint{c_{g}}{ \geq 0, \quad \forall g \in \G, \label{eq:pareto-imp-refund-simple}}
\end{mini!}
where $\Tilde{q}_g = q_{g}(\0, \0)$ for conciseness. We further note that the above problem can be reformulated as the following linear program, where we add the variables $y_{g_1,g_2}$ to correspond to the absolute value term in the objective:
\begin{mini!}|s|[2]                   
    {c_{g}, y_{g_1, g_2}}                              
    { \sum_{g_1, g_2 \in \G} d_{g_1} d_{g_2} y_{g_1, g_2}, \label{eq:Gini-obj-simpleLP}}   
    {\label{eq:Eg001-simpleLP}}             
    {}                                
    \addConstraint{\sum_{g \in \G} c_{g} d_{g}}{= C_{\0} - C_{\ttau}, \label{eq:Gini-con1-simpleLP}}    
    \addConstraint{c_{g}}{ \geq 0, \quad \forall g \in \G, \label{eq:pareto-imp-refund-simpleLP}}
    \addConstraint{y_{g_1,g_2}}{ \geq \Tilde{q}_{g_1} + c_{g_1} - \Tilde{q}_{g_2} - c_{g_2}, \quad \forall g_1 \in \G, g_2 \in \G, \label{eq:absLP1}}
    \addConstraint{y_{g_1,g_2}}{ \geq -(\Tilde{q}_{g_1} + c_{g_1} - \Tilde{q}_{g_2} - c_{g_2}), \quad \forall g_1 \in \G, g_2 \in \G, \label{eq:absLP2}}
\end{mini!}

We now compute the optimal solution of Problem~\eqref{eq:Gini-con1-simpleLP}-\eqref{eq:absLP2} by deriving the first order conditions of the optimization problem. To this end, let $\lambda$ be the dual variable of the Constraint~\eqref{eq:Gini-con1-simpleLP}, $l_g$ be the dual variable of the Constraint~\eqref{eq:pareto-imp-refund-simpleLP} for each group $g$, $\eta_{g_1, g_2}$ be the dual variable of Constraint~\eqref{eq:absLP1}, and $\xi_{g_1,g_2}$ be the dual variable of Constraint~\eqref{eq:absLP2}. Then, we can formulate the following Lagrangian:
\begin{align*}
    \mathcal{L} = &\sum_{g_1, g_2 \in \G} d_{g_1} d_{g_2} y_{g_1, g_2} - \lambda \left( \sum_{g \in \G} c_{g} d_{g} - (C_{\0} - C_{\ttau}) \right) - \sum_{g \in \G} l_g c_g - \sum_{g_1, g_2 \in \G} \eta_{g_1,g_2} (y_{g_1,g_2} - (\Tilde{q}_{g_1} + c_{g_1} - \Tilde{q}_{g_2} - c_{g_2})) \\ &- \sum_{g_1, g_2 \in \G} \xi_{g_1,g_2} (y_{g_1,g_2} + \Tilde{q}_{g_1} + c_{g_1} - \Tilde{q}_{g_2} - c_{g_2})
\end{align*}
Taking the first order derivative condition of this Lagrangian with respect to $y_{g_1,g_2}$, we obtain that
\begin{align}
    &\frac{\partial \mathcal{L}}{\partial y_{g_1,g_2}} = d_{g_1} d_{g_2} - \eta_{g_1,g_2} - \xi_{g_1,g_2} = 0, \nonumber \\
    &\implies \eta_{g_1,g_2} + \xi_{g_1,g_2} = d_{g_1} d_{g_2}, \quad \forall g_1 \in \G, g_2 \in \G. \label{eq:mainEqualityRelation}
\end{align}
Next, taking the first order derivative condition of this Lagrangian with respect to $c_g$, we obtain that
\begin{align*}
    \frac{\partial \mathcal{L}}{\partial c_g} = - \lambda d_g - l_g + \sum_{g_2 \neq g} \eta_{g, g_2} - \sum_{g_2 \neq g} \eta_{g_2, g} + \sum_{g_2 \neq g} \xi_{g_2, g} - \sum_{g_2 \neq g} \xi_{g, g_2} = 0.
\end{align*}
Using this relationship, the sign constraint on the dual variable $l_g$, i.e., $l_g \geq 0$, and the complimentary slackness relation that $l_g c_g = 0$ for all $g$, we obtain the following first order conditions:
\begin{align*}
    &\sum_{g_2 \neq g} (\xi_{g_2, g} - \eta_{g_2, g}) + \sum_{g_2 \neq g} (\eta_{g, g_2} - \xi_{g, g_2}) \geq \lambda d_g, \quad \forall g \in \G, \\
    &\sum_{g_2 \neq g} (\xi_{g_2, g} - \eta_{g_2, g}) + \sum_{g_2 \neq g} (\eta_{g, g_2} - \xi_{g, g_2}) = \lambda d_g, \quad \forall g \in \G, \text{ s.t. } c_g > 0. 
\end{align*}
Since the dual variables $\xi_{g_2, g}$ and $\eta_{g, g_2}$ correspond to identical constraints it follows by symmetry that $\xi_{g_2, g} = \eta_{g, g_2}$ (and analogously that $\eta_{g_2, g} = \xi_{g, g_2}$). As a result, the above first order conditions can be simplified as:
\begin{align}
    &2 \sum_{g_2 \neq g} (\eta_{g, g_2} - \xi_{g, g_2}) \geq \lambda d_g, \quad \forall g \in \G, \label{eq:helper1_} \\
    &2\sum_{g_2 \neq g} (\eta_{g, g_2} - \xi_{g, g_2}) = \lambda d_g, \quad \forall g \in \G, \text{ s.t. } c_g > 0.  \label{eq:helper2_}
\end{align}
Next, to express the term $\eta_{g, g_2} - \xi_{g, g_2}$ in terms of the demands $d_g$ we use Equation~\eqref{eq:mainEqualityRelation} to write $\eta_{g, g_2} - \xi_{g, g_2}$. To this end, first note that if $y_{g,g_2} = \Tilde{q}_{g} + c_{g} - \Tilde{q}_{g_2} - c_{g_2} = -(\Tilde{q}_{g} + c_{g} - \Tilde{q}_{g_2} - c_{g_2})$, then $\xi_{g,g_2} = \eta_{g,g_2}$ as the two equations are identical, and thus $\xi_{g,g_2} - \eta_{g,g_2} = 0$. 

As a result, for the rest of this proof we consider the case when $\Tilde{q}_{g} + c_{g} - \Tilde{q}_{g_2} - c_{g_2} \neq -(\Tilde{q}_{g} + c_{g} - \Tilde{q}_{g_2} - c_{g_2})$, i.e., either (i) $y_{g,g_2} = \Tilde{q}_{g} + c_{g} - \Tilde{q}_{g_2} - c_{g_2}$, or (ii) $y_{g,g_2} = -(\Tilde{q}_{g} + c_{g} - \Tilde{q}_{g_2} - c_{g_2})$. Note in case (i) that $\xi_{g,g_2} \geq 0$ and $\eta_{g,g_2} = 0$ by the complimentary slackness relation, and thus it holds that $\xi_{g,g_2} = d_{g,g_2}$ by Equation~\eqref{eq:mainEqualityRelation}. Analogously, it holds in case (ii) that $\eta_{g,g_2} = d_{g,g_2}$.

Furthermore, observe that $y_{g,g_2} = \Tilde{q}_{g} + c_{g} - \Tilde{q}_{g_2} - c_{g_2}$ only when $\Tilde{q}_{g} + c_{g} - \Tilde{q}_{g_2} - c_{g_2} > -(\Tilde{q}_{g} + c_{g} - \Tilde{q}_{g_2} - c_{g_2})$, i.e., $\Tilde{q}_{g_2} + c_{g_2} > \Tilde{q}_{g} + c_{g}$, and similarly $y_{g,g_2} = -(\Tilde{q}_{g} + c_{g} - \Tilde{q}_{g_2} - c_{g_2})$ only when $\Tilde{q}_{g_2} + c_{g_2} < \Tilde{q}_{g} + c_{g}$. Combining the above derived relations, we obtain that
\begin{align}
    2 \sum_{g_2 \neq g} (\eta_{g, g_2} - \xi_{g, g_2}) = 2\sum_{g_2: q_{g_2} + r_{g_2}< q_{g} + r_{g}} d_{g_2} d_{g} - 2 \sum_{g_2: q_{g_2} + r_{g_2} > q_{g} + r_{g}} d_{g_2} d_{g}. \label{eq:newHelper_}
\end{align}
Using Equation~\eqref{eq:newHelper_} and the first order conditions derived in Equations~\eqref{eq:helper1_} and~\eqref{eq:helper2_}, we obtain that
\begin{align*}
    &2 \sum_{g_2: q_{g_2} + r_{g_2}< q_{g} + r_{g}} d_{g_2} - 2 \sum_{g_2: q_{g_2} + r_{g_2} > q_{g} + r_{g}} d_{g_2} \geq \lambda, \quad \forall g \in \G, \\
    &2 \sum_{g_2: q_{g_2} + r_{g_2}< q_{g} + r_{g}} d_{g_2} - 2 \sum_{g_2: q_{g_2} + r_{g_2} > q_{g} + r_{g}} d_{g_2} = \lambda, \quad \forall g \in \G, \text{ s.t. } c_{g} > 0,
\end{align*}
where we divided both sides of the equation by $d_g$. From these equations, we observe that the income group(s) that receive strictly positive transfers $c_g>0$ are those for whom the above equation is met with an equality. Since $\lambda$ is a fixed quantity, it follows that the above equation is met with equality for groups $g$ with the minimum value of the following term:
$$2 \sum_{g_2: q_{g_2} + r_{g_2}< q_{g} + r_{g}} d_{g_2} - 2 \sum_{g_2: q_{g_2} + r_{g_2} > q_{g} + r_{g}} d_{g_2}$$
Note that this term is the smallest only for the lowest income groups after their revenue refunds. That is, all groups that receive a refund that results in a strictly positive transfer $c_g>0$ have exactly the same income. 

The above observation implies that one way to achieve the optimal solution of Problem~\eqref{eq:Gini-obj-simpleLP}-\eqref{eq:absLP2} is to provide positive transfers to the lowest income users until their incomes equalize with the second lowest income group. Then both these groups can be given revenue until their income rises to the third lowest income group and so on, as described in Algorithm~\ref{alg:greedy}. This process can be repeated until the total pool of transfers $C_{\0} - C_{\ttau}$ is completely exhausted.

\fi

\ifarxiv
\subsection{Proof of Proposition~\ref{prop:two-link-eq-countereg}} \label{eq:pfprop4}

We formally define the instance that is described in \ifarxiv Figure~\ref{fig:prop-4-countereg}\else Fig.~\ref{fig:prop-4-countereg}\fi. Consider a two edge parallel network, having one origin and one destination, with travel time functions $t_1(x_1) = 2x_1$ and $t_2(x_2) = 4+x_2$. Consider three user classes $H, M, L$ representing high, medium, and low incomes, respectively. Let the demands of the three classes be $d_H = 2, d_M = 1, d_L = 5$, where the incomes are $q_H = 2q_M$, $q_M$ and $q_L$, where $q_M (1-0.008) = q_L (1-0.010) + \frac{0.014q_M}{5}$, and the relative importance of the congestion game is given by a factor $\beta = 1$.  Further, let the values-of-time of the users be scaled proportions of their income by a factor of $0.001$, i.e., $v_H = 0.002q_M, v_M = 0.001q_M, v_L = 0.001q_L$. 

We define the following congestion pricing: $\tau_1 = 0.008q^M$, and $\tau_2=0$, i.e., edge $2$ is untolled. Given this pricing $\ttau$, we define the refunding policy $\r$, which is known to all users, and is derived from Algorithm~\ref{alg:greedy}. That is, we first provide enough refunds to ensure that all groups exactly meet their untolled user equilibrium costs, and then give any remaining refunds to the lowest income group (until their income equals that of the second lowest income group and so on).

\begin{figure}[!ht]
      \centering
      \includegraphics[width=0.5\linewidth]{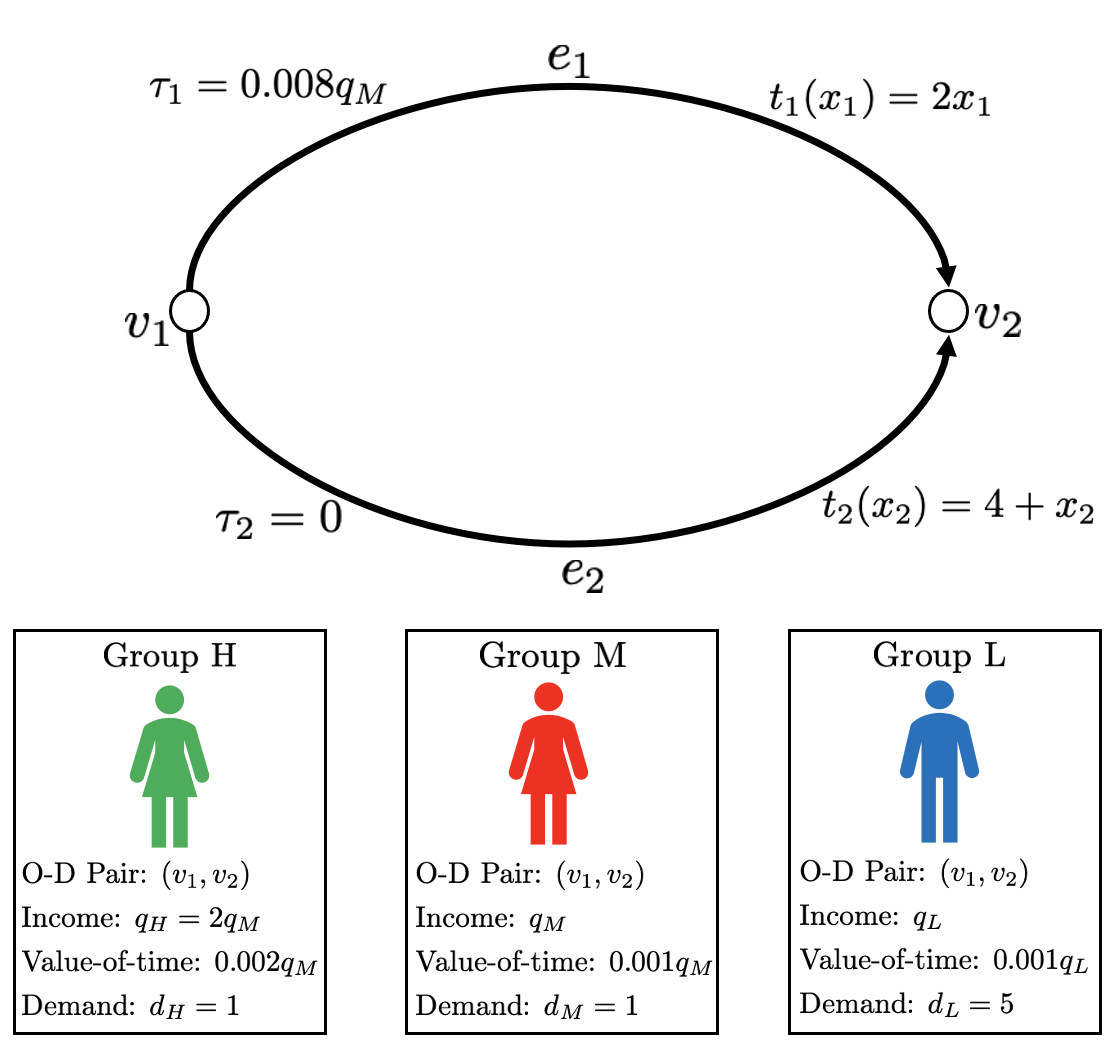}
      \caption{{\small \sf A two-edge parallel network and three user-group instance for which the exogenous $\ttau$-equilibrium is not an endogenous $(\ttau, \r)$-equilibrium with coalitions, under the revenue refunding scheme $\r$ resulting from Algorithm~\ref{alg:greedy}. For $\ttau = (0.008 q_M, 0)$, only users in the high income group $H$ use edge $e_1$ at the exogenous $\ttau$-equilibrium. However, this is not an endogenous $(\ttau, \r)$-equilibrium since all the users in group $M$ can deviate to use edge $e_1$, which will result in a strictly lower travel cost for all users in that group.}}
      \label{fig:prop-4-countereg}
   \end{figure}

Using the above scenario, we first derive the untolled user equilibrium solution, which allows to characterize the refunding scheme $\r$. Then we compute an exogenous equilibrium and finally show that this is not an endogenous equilibrium by describing a deviation which improves the travel costs of users in group~$M$. 

\paragraph{Untolled User Equilibrium.} Under this setting it is easy to see that at the user equilibrium the flow on the two edges are $x_1^{UE} = x_2^{UE} = 4$ giving a travel time of $8$ to all users. Now, under the UE solution, the costs to the three groups are:
\begin{align*}
    \text{Group H User Travel Cost} &= 8(0.002q_M) = 0.016q_M, \\
    \text{Group M User Travel Cost} &= 0.008q_M, \\
    \text{Group L User Travel Cost} &= 0.008q_L.
\end{align*}

\paragraph{Exogenous Equilibrium Under Tolls.} Now consider a setting with tolls where $\tau_1 = 0.008q_M$. In this setting, the traffic equilibrium (when refunds are exogenous) is given by $x_1^{\tau_1} = 2$ and $x_2^{\tau_1} = 6$, since only users in group $H$ are willing to travel on edge $1$ at this toll. This leads to a travel time of $t_1(x_1^{\tau_1}) = 4$ on edge 1, and a travel time of $t_2(x_2^{\tau_1}) = 10$ on edge 2. Note that under this equilibrium, the costs to the different groups are:
\begin{align*}
    \text{Group H User Travel Cost without Refunds} &= 4(0.002q_M)+0.008q_M = 0.016q_M, \\
    \text{Group M User Travel Cost without Refunds} &= 10(0.001q_M) = 0.010q_M, \\
    \text{Group L User Travel Cost without Refunds} &= 0.010q_L.
\end{align*}
Note that the total toll revenues collected are $x_1^{\tau_1} \tau_1 = 2(0.008q_M) = 0.016q_M$. Based on the refunding policy, we must first ensure that we give enough refunds so that each user incurs the same cost as that under the untolled user equilibrium. Then any remaining refunds are given to users based on the policy given in Algorithm~\ref{alg:greedy}. Following this procedure of giving refunds, we obtain the following set of refunds for each user group:
\begin{align*}
    \text{Total Refund given to Group H} &= 0, \\
    \text{Total Refund given to Group M} &= 0.002 q_M, \\
    \text{Total Refund given to Group L} &= 0.014q_M.
\end{align*}
Note here that users in group $H$ are given no refund since their cost under the tolled exogenous equilibrium is the same as that under the untolled user equilibrium. Group $M$ is given a refund that is exactly equal to the difference travel cost of the user group under the tolled exogenous equilibrium and the untolled user equilibrium, i.e., $0.010q_M - 0.008q_M = 0.002q_M$. Finally, the remaining refund of $0.016q_M - 0.002q_M = 0.014q_M$ is given to users in group $L$. This is because each user in group $L$ receives a refund of $\frac{0.014q^M}{5}$ and at this amount of refund, users in group $M$ and $L$ have exactly the same income since $q_M (1-0.008) = q_L (1-0.010) + \frac{0.014q^M}{5}$.

\paragraph{Profitable Deviation under Endogenous Equilibrium with Coalitions.} Now, we claim that users in group $M$ have a profitable deviation under the above refunding policy. In particular, we can consider the deviation where all of the users in group $M$ deviate to link 1. In this case, we have the edge flows $\Tilde{x}_1^{\tau_1} = 3$ and $\Tilde{x}_2^{\tau_1} = 5$, and so the travel time on edge 1 is $t_1(x_1^{\tau_1}) = 6$ and that on edge 2 is $t_2(x_2^{\tau_1}) = 9$. This leads to the following total cost to the different groups without refund:
\begin{align*}
    \text{Group H User Travel Cost without Refunds} &= 6(0.002q_M)+0.008q_M = 0.020q_M, \\
    \text{Group M User Travel Cost without Refunds} &= 6(0.001q_M)+0.008q_M = 0.014q_M, \\
    \text{Group L User Travel Cost without Refunds} &= 9(0.001q_L) = 0.009q_L.
\end{align*}
Then we have that the total tolls collected are $\Tilde{x}_1^{\tau_1} \tau_1 = 3(0.008q_M) = 0.024q_M$. Now, we must transfer the amount $0.004q^M$ to group $H$ to equalize their cost to the untolled UE cost. For group $M$, we must transfer $0.006q^M$ to equalize their cost to the untolled UE cost. This leaves a total refund of $0.014q^M$. Now, we will distribute this remaining income such that the incomes of users in group $M$ and $L$ are equal. Note that in doing so we will also have ensured that users in group $L$ are at least as well off as under the untolled user equilibrium outcome. In particular, we look to find the value $y$ that satisfies the following equation:
\begin{align*}
    \frac{y}{5} q_M + q_L (1-0.009) = (0.014-y) q_M + q_M(1-0.008).
\end{align*}
Here $y$ is some constant. We can solve for $y$ using the above relation that $q_M (1-0.008) = q_L (1-0.010) + \frac{0.014q_M}{5}$, which gives us that $0.014-y>0$, implying that users in group $M$ get strictly lower costs than the untolled user equilibrium outcome. Since the cost of user group $M$ was exactly their untolled user equilibrium cost, we observe here that user group $M$ has a profitable deviation implying that the exogenous equilibrium is not an equilibrium when coalitions of users take revenue refunds into account in their travel decisions.
\fi

\section{Numerical Experiments} \label{apdx:numerical_main}

In this section, we present numerical experiments to demonstrate the efficacy of optimal CPRR schemes in reducing the total system cost and wealth inequality and also show that the benefits of CPRR can even be realized in the setting when users' values of time are not known to the central planner. To this end, we first discuss the  implementation details of our experiments as well as the chosen network structures, O-D demands, travel-time functions, user values of time, and incomes (Section~\ref{sec:exp-details}). Then, we present the results of our experiments in Section~\ref{sec:results-experiments}.



\subsection{Experimental Details} \label{sec:exp-details}

\begin{figure}
    \centering
    \includegraphics[width=0.8\linewidth]{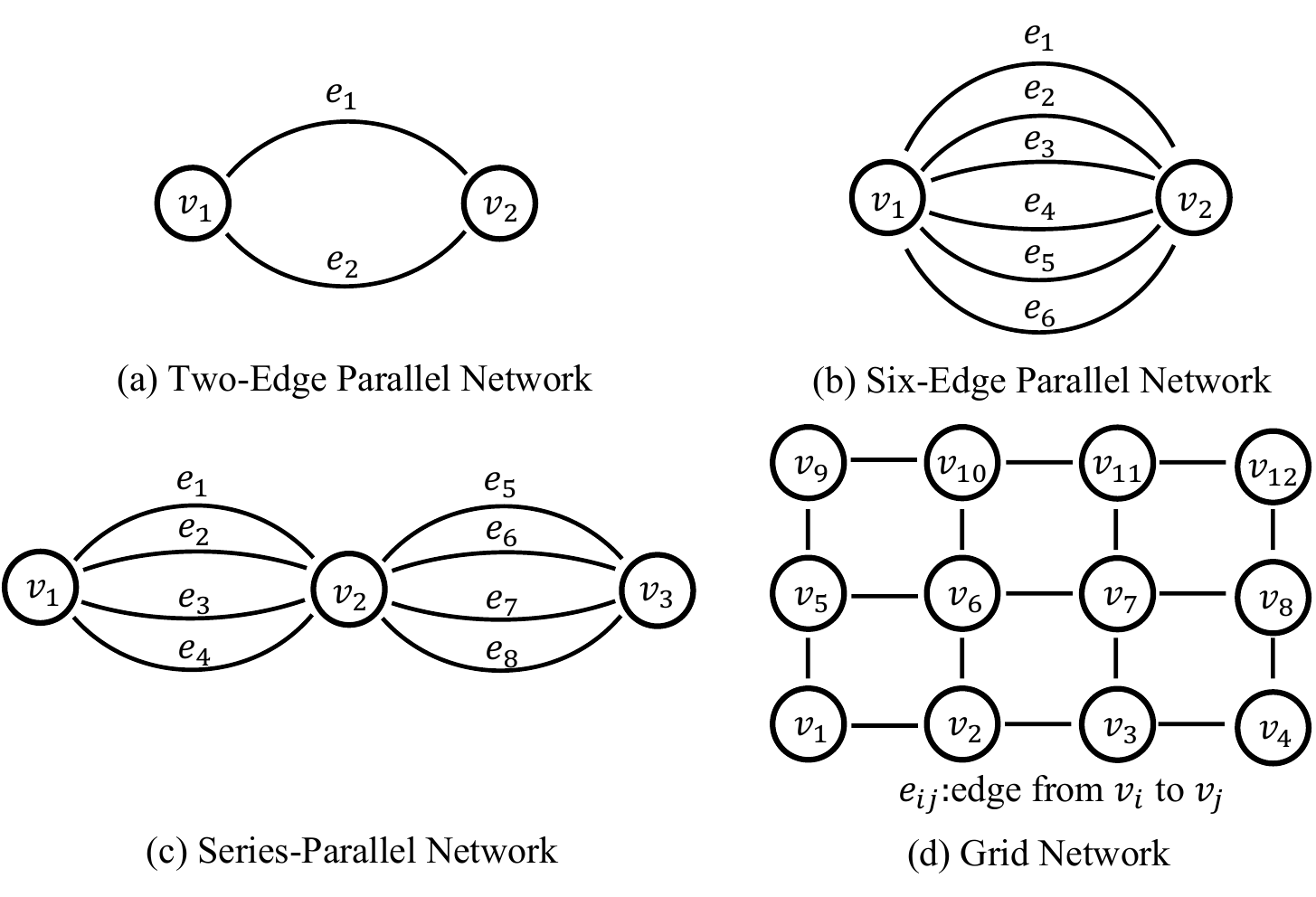}
    \caption{{\small \sf Depiction of road networks considered for the numerical study}}
    \label{fig:networks}
\end{figure}

\textbf{Road networks:} The four networks used in our experiments are shown in Figure \ref{fig:networks}. We assume that the travel time function on each edge increases linearly with the flow and 
the numerical values of the parameters (i.e., intercept and slope) of the travel time functions for all the edges in the different networks in Figure \ref{fig:networks} is available at \href{https://github.com/karthikg92/cprr-tcns}{https://github.com/karthikg92/cprr-tcns}. 

\noindent
\textbf{Demand Scenarios:} 
For the Pigou, Parallel and Series-parallel networks, we consider an O-D demand between one OD pair ($v_1$ to $v_2$ for the Pigou and Parallel network, and from $v_1$ to $v_3$ in the Series-Parallel network) and for the grid network consider two O-D pairs ($v_1$ to $v_7$ and $v_3$ to $v_8$). For the Pigou, Parallel and Series-parallel networks, we considered six user groups with varying income levels and values of time, and for the grid network, we considered four user groups with varying income levels and values of time. We note that for our experiments for the  Pigou, Parallel and Series-parallel networks, we assumed that the values of time are chosen proportional to user incomes, a commonly used modelling assumption~\cite{orpn-2023}. 
The demand for the grid network is designed to simulate a more realistic scenario where the value of time may not be proportional to the income of the users. To this end, we consider three different scenarios to represent the degree of variability (denoted by Low Var, Med Var, High Var) in the values of time of users. The numerical values for the O-D demands, user incomes, values of time of users in the different scenarios are all available at \href{https://github.com/karthikg92/cprr-tcns}{https://github.com/karthikg92/cprr-tcns}.

\noindent
\textbf{Approximate CPRR Computation:} 
In addition to evaluating the performance of optimal CPRR schemes, we also perform experiments in the incomplete information setting when user specific values of time or incomes may not be known, as is often the case in practice. In this incomplete information setting, we only assume access to the mean values of time and incomes of users and provide all users travelling between a given O-D pair the same refund, i.e., we consider anonymous refunding schemes as in~\cite{GUO2010972}. To compute CPRR schemes in this incomplete information setting, we aggregate users with the same O-D pair into one group and use their mean value of time to (1) solve the user equilibrium problem with zero tolls, (2) estimate the optimal tolls by solving the system optimal problem, and (3) compute the refunds.
However, we note that we compute user responses to the computed approximate tolls using exact value of time information of the users. Note that under our approximation, all users having the same O-D pair receive the same refunds irrespective of their values of time or incomes.

\subsection{Results} \label{sec:results-experiments}

We now present the results of the comparison between the total system cost and the level of wealth inequality of the ex-post income distribution for both the optimal CPRR scheme and that computed under incomplete information. In particular, Table~\ref{tab:simulation} presents the relative percentage differences of the total system cost and wealth inequality of the ex-post income distribution for the optimal CPRR scheme and the one under incomplete information to the user equilibrium outcome without tolls and refunds.

We first note from Table~\ref{tab:simulation} that the optimal CPRR scheme, as expected, reduces the total system cost and the wealth inequality measure, i.e., the discrete Gini coefficient, compared to the user equilibrium setting with no tolls or refunds, thereby corroborating Proposition~\ref{prop:rev-refund-decreases-ineq}. In addition, since users' values of time are assumed to be scaled proportions of their incomes for the purposes of the experiments~\cite{orpn-2023}, our results for the optimal CPRR scheme for single O-D pair demand, i.e., for the ``Pigou'', ``Parallel'' and ``Series-Parallel'' networks, also corroborate Corollary~\ref{cor:rev-refund-single-od}, as the discrete Gini coefficient of the ex-post income distribution under the optimal CPRR scheme is lower than that corresponding to the ex-ante income distribution (see column 5 in Table~\ref{tab:simulation}). 

We also perform experiments in the incomplete information setting when user specific values of time or incomes may not be known, as is often the case in practice. In this incomplete information setting, we reiterate that we only assume access to the mean values of time and incomes of users and provide all users travelling between a given O-D pair the same refund, i.e., we consider anonymous refunding schemes as in~\cite{GUO2010972}. Our results in Table~\ref{tab:simulation} indicate that deploying CPRR schemes in this setting generally results in total system costs and level of wealth inequality that are higher than that of the optimal CPRR schemes in the complete information setting but lower than that corresponding to the user equilibrium setting with no tolls and refunds. Table~\ref{tab:simulation} also indicates that the performance of the CPRR scheme with incomplete information depends on the variance in the user values of time and income around the mean. In particular, Table~\ref{tab:simulation} indicates that as the variance in user values of time is decreased, the CPRR scheme with incomplete information achieves a performance closer to that of the optimal CPRR scheme on both total system cost and wealth inequality metrics.

\begin{table}[] 
\centering
\caption{{\small \sf Relative percentage differences of the total system cost (columns 1 and 2) and wealth inequality (columns 3 and 4), evaluated by the discrete Gini coefficient, of the optimal CPRR scheme (columns 1 and 3) and that with incomplete information (columns 2 and 4) compared to the user equilibrium outcome without tolls on four traffic networks: (i) Pigou network, (ii) Four edge parallel network, (iii) Series Parallel Network, and (iv) Grid network. For the grid network, two O-D pairs were considered for three settings depending on the degree of variance of users' values of time, i.e., low, medium, or high. Here, $C_I$ and $\q^I$ denote the total system cost and ex-post income distribution for the scheme with incomplete information, $W^* = W(\q(\ttau^*, \r^*))$, and $W^0 = W(\q(\0, \0))$. Column 5 represents the relative percentage difference of the wealth inequality of the optimal CPRR scheme compared to the ex-ante income distribution.}}
\begin{tabular}{c|c|c|c|c|c}
\toprule
Experiment      & $\frac{C_{\0} - C^*}{C_{\0}}$  & $\frac{C_{\0} - C_I}{C_{\0}}$  & $\frac{W^0-W^*}{W^0}$ & $\frac{W^0-W(\q^I)}{W^0}$ & $\frac{W(\q^0)-W^*}{W(\q^0)}$ \\
\midrule
Pigou (2 edge)         &  5.1147             & 5.1029      & 0.0357               & 0.0297      &    0.0262       \\
Parallel (4 edge)       &   4.1343            & 4.1223      & 0.0167               & 0.0134      &   0.0113        \\
Series-Parallel &  4.8809             & 4.8331      & 0.0609               & 0.0554          &   0.2437    \\
Grid (Low Var)  &  0.9910             & 0.9834      & 0.0107               & 0.0071         &    -0.1651    \\
Grid (Med Var)  &  1.4824             & 1.3062      & 0.0161               & 0.0070          &    -0.1889   \\
Grid (High Var) &  2.3365             & 1.6787      & 0.0253               & 0.0070           &   -0.2160  
\end{tabular} \label{tab:simulation}
\end{table}

\end{document}